\newtheorem{theorem}{Theorem}[section]
\newtheorem{proposition}[theorem]{Proposition}
\newtheorem{corollary}[theorem]{Corollary}
\newtheorem{definition}{Definition}[section]
\newtheorem{remark}[theorem]{Remark}
\newtheorem{problem}{Problem}[section]
\newtheorem{assumption}{Assumption}
\newcommand{\CA}[1]{\mathcal{#1}}
\newcommand{\BF}[1]{\mathbf{#1}}
\newcommand{\BB}[1]{\mathbb{#1}}
\newcommand{\BS}[1]{\boldsymbol{#1}}
\newcommand{\notltl}{\neg}
\newcommand{\andltl}{\wedge}
\newcommand{\orltl}{\vee}
\newcommand{\Always}{\BF{G}}
\newcommand{\Event}{\BF{F}}
\newcommand{\Implies}{\Rightarrow}
\newcommand{\True}{\top}
\newcommand{\False}{\perp}
\newcommand{\trap}{\bowtie}
\newcommand{\AP}{AP}
\newcommand{\PA}{\mathcal{P}}
\newcommand{\BA}{\mathcal{B}}
\newcommand{\FA}{\mathcal{A}}
\newcommand{\TS}{\mathcal{T}}
\newcommand{\la}{\leftarrow}
\newcommand{\ra}{\rightarrow}
\newcommand{\ras}[1]{\stackrel{#1}{\rightarrow}}
\newcommand{\asgn}{\la}
\newcommand{\norm}[1]{\left\| {#1} \right\|}
\newcommand{\card}[1]{\left| {#1} \right|}
\newcommand{\spow}[1]{2^{#1}}
\newcommand{\trel}[1]{\left| {#1} \right|_{TR}}
\newcommand{\prefix}[1]{P\left({#1}\right)}
\begin{document}

\title{Time Window Temporal Logic}


\author{\IEEEauthorblockN{Cristian-Ioan Vasile}
\IEEEauthorblockA{Division of Systems Engineering\\
Boston University\\
Brookline, Massachusetts 02446\\
Email: cvasile@bu.edu}
\and
\IEEEauthorblockN{Derya Aksaray}
\IEEEauthorblockA{Department of Mechanical Engineering\\
Boston University\\
Boston, Massachusetts 02215\\
Email: daksaray@bu.edu}
\and
\IEEEauthorblockN{Calin Belta}
\IEEEauthorblockA{Department of Mechanical Engineering\\
Boston University\\
Boston, Massachusetts 02215\\
Email: cbelta@bu.edu}}


%

\maketitle

\begin{abstract}
This paper introduces \emph{time window temporal logic} (TWTL), a rich expressivity language for describing various time bounded specifications. In particular, the syntax and semantics of TWTL enable the compact representation of serial tasks, which are typically seen in robotics and control applications. This paper also discusses the relaxation of TWTL formulae with respect to deadlines of tasks.
Efficient automata-based frameworks to solve synthesis, verification and learning problems are also presented.
The key ingredient to the presented solution is an algorithm to translate a TWTL formula to an
annotated finite state automaton that encodes all possible temporal relaxations of the specification.
Case studies illustrating the expressivity of the logic and the proposed algorithms are included.  
\end{abstract}

\IEEEpeerreviewmaketitle

\section{Introduction}
Temporal logic provides a mathematical formalism to reason about (concurrent) events in terms of time. Due to its rich expressivity, it has been widely used as a specification language to describe properties related to correctness, termination, mutual exclusion, reachability, or liveness~\cite{manna1981}.  
Recently, there has been a great interest in using temporal logic formulae in the analysis and control of dynamical systems for robotic applications. For example, linear temporal logic (LTL)~\cite{Baier08} has been extensively used in motion planning and control of robotic systems, e.g.,~\cite{ulusoy-ijrr2013,Karaman.Frazzoli:CDC08,aksaray2015,topcu2010,igor-icra2013,Belta-TRO05,Murray2009,KB-TAC08-LTLCon,fainekos2009,kress2009,LeZhVaOiScBe-ISER-2014}. 

In robotics applications, the tasks may involve some time constraints (e.g.,~\cite{solomon1987,pavone2009}). For example,
\begin{itemize}
\item[-] every visit to $A$ needs to be immediately followed by servicing $B$ within $5$ time units;

\item[-] two consecutive visits to $A$ need to be at least $10$ time units apart;

\item[-] visiting $A$ and servicing $B$ need to be completed before the time reaches $15$.
\end{itemize}

Such tasks cannot be described by LTL formulae since LTL cannot deal with temporal properties with explicit time constraints. Therefore, bounded temporal logics are used to capture the time constraints over the tasks. Some examples are bounded linear temporal logic (BLTL) \cite{Tkachev13,Clarke09}, metric temporal logic (MTL)~\cite{koymans1990}, or signal temporal logic (STL)~\cite{maler2004}.


In this paper, we propose a new specification language called {\it time window temporal logic} (TWTL). The semantics of TWTL is rich enough to express a wide variety of time-bounded specifications, e.g., ``Service $A$ for $3$ time units within the time interval $[0, 5]$ and after that service $B$ for $2$ time units within $[4, 9]$. If $C$ is serviced for $2$ time units within $9$ time units, then $D$ should be serviced for $3$ time units within the same time interval (i.e., within $9$ time units). For instance, some multi-robot persistent surveillance specifications are expressed as TWTL formulae in~\cite{vasile2014} and~\cite{aksaray2016}. Moreover, we define the notion called {\it temporal relaxation} of a TWTL formula, which is a quantity computed over the time intervals of a given TWTL formula. In this respect, if the temporal relaxation is 
\begin{itemize}
\item[-] {\it negative}, then the tasks expressed in the TWTL formula should be completed before their designated time deadlines, thus satisfying the relaxed formula implies the satisfaction of temporally more strict TWTL formula; 
\item[-] {\it zero}, then the relaxed formula is exactly same as the original TWTL formula; 
\item[-] {\it positive}, then some tasks expressed in the TWTL formula are allowed to be completed after their designated time deadlines, thus satisfying the relaxed formula implies the violation of the original TWTL formula (or the satisfaction of temporally less strict formula).
\end{itemize}
We also present an automata-based framework for minimizing the temporal relaxation of a given TWTL formula in problems related to verification, synthesis, and learning.
In the theoretical computer science literature, finite languages and the complexity of construction their corresponding
automata have been extensively studied~\cite{Maia13,Han07,Campeanu01,Gao11,Daciuk2003}.
The algorithms proposed in this paper are specialized to handle TWTL formulae and produce the annotated automata,
which is used to solve synthesis, verification and learning problems efficiently.

The proposed language TWTL has several advantages over the existing temporal logics. First, a desired specification can be represented in a more compact and
comprehensible
way in TWTL than BLTL, MTL, or STL. For example, any deadlines expressed in a TWTL formula indicates the exact time bounds as opposed to an STL formula where the time bounds can be shifted.
Consider a specification as ``stay at $A$ for $4$ time steps within the time window $[0,10]$", which can be expressed in TWTL as $[H^4 A]^{[0,10]}$. The same specification can be expressed in STL as $F_{[0,10-4]} G_{[0,4]} A$ where the outermost time window needs to be modified with respect to the inner time window. Furthermore, compared to BLTL and MTL, the existence of explicit concatenation operator results in a compact representation for serial tasks that are prevalent in robotics and control applications. Under some mild assumptions, we provide a very efficient (linear-time) algorithm to handle concatenation of tasks. This is in contrast to the general result from computer science that concatenation of languages, even finite ones~\cite{Maia13}, is exponential in the worst case.
Second, the notion of temporal relaxation enables a generic framework to construct the automaton of all possible relaxations of a TWTL formula. In literature, there are some studies investigating the control synthesis problems for minimal violations of LTL fragments \cite{Reyes13,Tumova-HSCC13,Tumova-IROS14,LiPrJoMurray-ICRA13,Guo15}. However, the special automaton proposed in this paper is a compact representation of all possible relaxations, which can be used in a variety of problems related to synthesis, verification, or learning to achieve minimal relaxations. Third, for a given TWTL formula, the complexity of constructing automata is independent of the corresponding time bounds.
To achieve this property, we exploit the structure of finite languages encoded by TWTL formulae. 

The main contributions of this paper are: 1) introducing a new specification language called TWTL, 2) defining {\it temporal relaxation} of a TWTL formula, 3) presenting a set of provably-correct algorithms to construct the automaton of a given TWTL formula (both for the relaxed and unrelaxed cases), 4) formulating a generic problem in terms of temporal relaxation of a TWTL formula, which can also be specialized into various problems such as verification, synthesis, or learning, and 5) developing a Python package to solve the three specialized problems.

\section{Preliminaries}

In this section, we introduce the notation and briefly review
the main concepts from formal languages, automata
theory, and formal verification.
For a detailed exposition of these topics, the reader is refereed to \cite{Baier08,Hopcroft2006}
and the references therein.

Given $\BF{x}, \BF{x}' \in \BB{R}^n$, $n \geq 2$,
the relationship $\BF{x} \sim \BF{x}'$, where $\sim\in\{<, \leq , >, \geq\}$,
is true if it holds pairwise for all components.
$\BF{x} \sim a$ denotes $\BF{x} \sim a \BF{1}_n$,
where $a \in \BB{R}$ and $\BF{1}_n$ is the n-dimensional vector of all ones.
The extended set of real numbers is denoted by
$\overline{\BB{R}} = \BB{R} \cup \{\pm \infty\}$

Let $\Sigma$ be a finite set.
We denote the cardinality and the power set of $\Sigma$
by $\card{\Sigma}$ and $\spow{\Sigma}$, respectively.
A {\em word} over $\Sigma$ is a finite or infinite sequence of
elements from $\Sigma$. In this context, $\Sigma$ is also
called an {\em alphabet}. The length of a word $w$ is denoted by
$\card{w}$ (e.g., $\card{w}=\infty$ if $w$ is an infinite word).
Let $k$, $i\leq j$ be non-negative integers.
The $k$-th element of $w$ is denoted by $w_k$, and
the sub-word $w_i,\ldots, w_j$ is denoted by $w_{i, j}$.
A set of words over an alphabet $\Sigma$ is called
a {\em language} over $\Sigma$.
The languages of all finite and infinite words over $\Sigma$
are denoted by $\Sigma^*$ and $\Sigma^\omega$, respectively.

\begin{definition}[Prefix language]
\label{def:prefix-lang}
Let $\CA{L}_1$ and $\CA{L}_2$ be two languages.
We say that $\CA{L}_1$ is a prefix language of $\CA{L}_2$
if and only if every word in $\CA{L}_1$ is a prefix of
some word in $\CA{L}_2$, i.e.,
for each word $w \in \CA{L}_1$ there exists
$w^{\prime} \in \CA{L}_2$ such that $w = w^{\prime}_{0, i}$,
where $0 \leq i < \card{w^{\prime}}$,
The maximal prefix language of a language $\CA{L}$ is
denoted by
$\prefix{\CA{L}} = \{w_{0, i}\ |\ w\in \CA{L}, i\in \{0,\ldots,\card{w}-1\}\}$.
\end{definition}


\begin{definition}[Unambiguous language]
\label{def:unambiguous-lang}
A language $\CA{L}$ is called unambiguous language
if no proper subset $L$ of $\CA{L}$ is a prefix language
of $\CA{L}\setminus L$.
\end{definition}

The above definition immediately implies that a word
in an unambiguous language can not be the prefix of
another word. Moreover, it is easy to show that the
converse is also true.

\begin{definition}[Language concatenation]
\label{def:concat-lang}
Let $\CA{L}_1$ be a language over finite words, and
let $\CA{L}_2$ be a language over finite or infinite words.
The concatenation language $\CA{L}_1 \cdot \CA{L}_2$
is defined as the set of all words $ww^{\prime}$, where
$w\in \CA{L}_1$ and $w^{\prime} \in \CA{L}_2$.
\end{definition}

\begin{definition}[Deterministic Finite State Automaton]
\label{def:dfa}
A deterministic finite state automaton (DFA) is a tuple
$\FA = (S_\FA, s_0, \Sigma, \delta_\FA, F_\FA)$, where:
\begin{itemize}
    \item $S_\FA$ is a finite set of states;
    \item $s_0 \in S_\FA$ is the initial state;
    \item $\Sigma$ is the input alphabet;
    \item $\delta_\FA : S_\FA \times \Sigma \ra S_\FA$ is the transition function;
    \item $F_\FA \subseteq S_\FA$ is the set of accepting states.
\end{itemize}
\end{definition}

A transition $s' = \delta_\FA(s, \sigma)$ is also denoted by $s \ras{\sigma}_\FA s'$.
A trajectory of the DFA $\BF{s} = s_0 s_1 \ldots s_{n+1}$ is generated by
a finite sequence of symbols $\BS{\sigma} = \sigma_0 \sigma_1 \ldots \sigma_n$
if $s_0 \in S_\FA$ is the initial state of $\FA$ and
$s_k \ras{\sigma_k}_{\FA}  s_{k+1}$ for all $k \geq 0$.
The trajectory generated by $\BS{\sigma}$ is also denoted by
$s_0 \ras{\BS{\sigma}}_{\FA} s_{n+1}$.
A finite input word $\BS{\sigma}$ over $\Sigma$ is said to be accepted
by a finite state automaton $\FA$ if the trajectory of $\FA$ generated
by $\BS{\sigma}$ ends in a state belonging to the set of accepting states, i.e., $F_\FA$ .
A DFA is called {\em blocking} if the $\delta_\FA(s, \sigma)$ is a partial function,
i.e., the value of the function is not defined for all values in the domain.
A blocking automaton rejects words $\BS{\sigma}$ if
there exists $k \geq 0$ such that $s_k \ras{\sigma_k}_\FA s_{k+1}$ is not defined.
The {\em (accepted) language} corresponding to a DFA $\FA$ is
the set of accepted input words, which we denote by $\CA{L}(\FA)$.

\begin{definition}[Transition System, TS]
A transition system (TS) is a tuple $\TS = (X, x_0, \Delta, \AP, h)$, where:
\begin{itemize}
    \item $X$ is a finite set of states;
    \item $x_0 \in X$ is the initial state;
    \item $\Delta \subseteq X \times X$ is a set of transitions;
    \item $\AP$ is a set of properties (atomic propositions);
    \item $h : X \ra \spow{\Pi}$ is a labeling function.
\end{itemize}
\end{definition}

We also denote a transition $(x, x') \in \Delta$ by $x \ra_\TS x'$.
A \emph{trajectory} (or run) of the system is an infinite sequence of
states $\BF{x} = x_0 x_1 \ldots$ such that $x_k \ra_\TS x_{k+1}$ for all $k \geq 0$.
A state trajectory $\BF{x}$ generates an \emph{output trajectory} $\BF{o} = o_0 o_1 \ldots$,
where $o_k = h(x_k)$ for all $k \geq 0$.
The {\em (generated) language} corresponding to a TS $\TS$ is
the set of all generated output words, which we denote by $\CA{L}(\TS)$.

\section{Time Window Temporal Logic}
\label{sec:spec-twtl}
Time window temporal logic (TWTL) was first introduced in~\cite{vasile2014}
as a rich specification language for robotics applications.
TWTL formulae are able to capture temporal logic specifications
about the service time windows and their durations.
TWTL is a linear-time logic encoding sets of discrete-time sequences
with values in a finite alphabet.


A TWTL formula is defined over a set of atomic propositions $\AP$ and has the following syntax:
\begin{equation*}
\label{eq:logic-def}
\phi :: = H^d s \, | \, H^d \notltl s \, |
\, \phi_1 \andltl \phi_2 \, | \, \phi_1 \orltl \phi_2 \, | \, \notltl \phi_1
\, | \, \phi_1 \cdot \phi_2 \, | \, [\phi_1]^{[a, b]}
\end{equation*}
where $s$ is either the ``true'' constant $\True$
or an atomic proposition in $\AP$;
$\andltl$, $\orltl$, and $\notltl$ are the conjunction, disjunction, and negation
Boolean operators, respectively; $\cdot$ is the concatenation operator;
$H^d$ with $d \in \BB{Z}_{\geq 0}$ is the {\em hold} operator; 
and $[\ ]^{[a,b]}$ with $0 \leq a \leq b$ is the {\em within} operator.

The semantics of the operators is defined with respect to the
finite subsequences of a (possibly infinite) word $\BF{o}$ over $\spow{\AP}$.
Let $\BF{o}_{t_1, t_2}$ be the subsequence of $\BF{o}$,
which starts at time $t_1 \geq 0$ and ends at time $t_2 \geq t_1$.
The {\em hold} operator $H^d s$ specifies that $s \in \AP$
should be repeated for $d$ time units.
The semantics of $H^d \notltl s$ is defined similarly,
but for $d$ time units only symbols from $\AP\setminus \{s\}$ should appear.
For convenience, if $d=0$ we simply write $s$ and $\notltl s$
instead of $H^0 s$ and $H^0 \notltl s$, respectively.
The word $\BF{o}_{t_1, t_2}$ satisfies $\phi_1 \andltl \phi_2$,
$\phi_1 \orltl \phi_2$, or $\notltl \phi$
if $\BF{o}_{t_1, t_2}$ satisfies both formulae, at least one formula,
or does not satisfy the formula, respectively.
The {\em within} operator $[\phi]^{[a, b]}$ bounds
the satisfaction of $\phi$ to the time window $[a, b]$.
The concatenation operator $\phi_1 \cdot \phi_2$ specifies that
first $\phi_1$ must be satisfied, and then immediately
$\phi_2$ must be satisfied.

Formally, the semantics of TWTL formulae is defined recursively as follows:
{\small
\begin{alignat*}{2}
&\BF{o}_{t_1, t_2} \models H^d s  & \text{ iff } &
    s \in o_t,  \forall t \in \{t_1, \ldots, t_1 + d\} \wedge (t_2-t_1 \geq d)\\
&\BF{o}_{t_1, t_2} \models H^d \notltl s & \text{ iff } &
    s \notin o_t,  \forall t \in \{t_1, \ldots, t_1 + d\} \wedge (t_2-t_1 \geq d)\\
&\BF{o}_{t_1, t_2} \models \phi_1 \andltl \phi_2 & \text{ iff } &
    (\BF{o}_{t_1, t_2} \models \phi_1) \andltl (\BF{o}_{t_1, t_2} \models \phi_2)\\
&\BF{o}_{t_1, t_2} \models \phi_1 \orltl \phi_2 & \text{ iff } &
    (\BF{o}_{t_1, t_2} \models \phi_1) \orltl (\BF{o}_{t_1, t_2} \models \phi_2)\\
&\BF{o}_{t_1, t_2} \models \notltl \phi & \text{ iff } &
    \notltl (\BF{o}_{t_1, t_2} \models \phi)\\
&\BF{o}_{t_1, t_2} \models \phi_1 \cdot \phi_2 & \text{ iff } &
(\exists t = {\arg\min}_{t_1\leq t < t_2} \{\BF{o}_{t_1, t} \models \phi_1 \}) \andltl\\
& & & \left( \BF{o}_{t+1, t_2} \models \phi_2 \right)\\
&\BF{o}_{t_1, t_2} \models [\phi]^{[a, b]} & \text{ iff } &
\exists t \geq t_1 + a \text{ s.t. } \BF{o}_{t, t_1+b} \models \phi  \wedge (t_2-t_1 \geq b)
\end{alignat*}
}
%
A word $\BF{o}$ is said to satisfy a formula $\phi$
if and only if there exists $T \in \{0,\ldots, \card{\BF{o}}\}$ such that
$\BF{o}_{0,T} \models \phi$.

A TWTL formula $\phi$ can be verified with respect to a bounded word. Accordingly, we define the {\em time bound} of $\phi$, i.e., $\norm{\phi}$, as the maximum time needed to satisfy $\phi$, which can be recursively computed as follows:

{\small
\begin{equation}
\label{eq:bound-def}
\norm{\phi} = \begin{cases}
\max(\norm{\phi_1}, \norm{\phi_2}) & \mbox{if } \phi \in \{ \phi_1 \andltl \phi_2, \phi_1 \orltl \phi_2 \} \\
\norm{\phi_1} & \mbox{if } \phi = \notltl \phi_1 \\
\norm{\phi_1} + \norm{\phi_2} + 1 & \mbox{if } \phi = \phi_1 \cdot \phi_2\\
d & \mbox{if } \phi \in \{ H^d s, H^d \notltl s \} \\
b & \mbox{if } \phi = [\phi_1]^{[a, b]}
\end{cases}
\end{equation}
}

We denote the language of all words satisfying $\phi$ by $\CA{L}(\phi)$.
Note that TWTL formulae are used to specify
prefix languages of either $\Sigma^*$ or $\Sigma^\omega$,
where $\Sigma = \spow{\AP}$.
Moreover, the number of operators in a TWTL formula $\phi$
is denoted by $\card{\phi}$. 

Some examples of TWTL formulae
for a robot servicing at some regions
can be as follows:

\noindent - {\it servicing within a deadline:} ``service $A$ for 2 time units before 10'', 
\begin{equation}
\phi_1 = [H^2 A]^{[0, 10]} \text{ and } \norm{\phi_1} = 10.
\label{eq:ex1}
\end{equation}

\noindent - {\it servicing within time windows:} ``service $A$ for 4 time units within [3, 8]
and $B$ for 2 time units within [4, 7]'',
\begin{equation}
\phi_2 = [H^4 A]^{[3, 8]} \andltl [H^2 B]^{[4, 7]} \text{ and } \norm{\phi_2} = 8.
\label{eq:ex2}
\end{equation}

\noindent - {\it servicing in sequence:} ``service $A$ for 3 time units within [0, 5] and after this service 
$B$ for 2 time units within [4, 9]'',
\begin{equation}
\phi_3 = [H^3 A]^{[0,5]} \cdot [H^2 B]^{[4,9]} \text{ and } \norm{\phi_3} = 15.
\label{eq:ex3}
\end{equation}

\noindent - {\it enabling conditions:} ``if $A$ is serviced for 2 time units within 9 time units, then $B$ should be serviced for 3 time units within the same time interval (i.e., within 9 time units)'',
\begin{equation}
\phi_4 = [H^2 A \Implies [H^3 B]^{[2,5]}]^{[0,9]} \text{ and } \norm{\phi_4} = 9,
\label{eq:ex4}
\end{equation}
where $\Implies$ denotes implication.

TWTL provides some benefits over other time-bounded temporal logics.
One of the main benefits of TWTL is the existence of an explicit concatenation operator, which results in compact representation of serial tasks. For instance, the specification in \eqref{eq:ex3} is expressed in TWTL, BLTL, and MTL in Table~\ref{table1}, where the MTL formula contains a set of recursively defined sub-formulae connected by disjunctions whereas the BLTL formula contains nested temporal operators with conjunction. In both cases, dealing with the disjunction of numerous sub-formulae or the nested temporal operators with conjunction significantly increases the complexity of constructing the automaton (i.e., in exponential or quadratic way, respectively \cite{Maia13}). On the other hand, stemming from the compact representation of TWTL, we provide a linear-time algorithm to handle the concatenations of tasks under some mild assumptions.

\begin{table}[h]
\caption{The representation of \eqref{eq:ex3} in TWTL, BLTL, and MTL.}
\label{table1}
\center
\begin{tabular}{| l | l |}
 \hline
 TWTL & $ [H^3 A]^{[0,5]} \cdot [H^2 B]^{[4,9]}$ \\
 \hline
 BLTL & $\Event^{\leq 5-3} ( \Always^{\leq 3} A \andltl \Event^{\leq 9-2+3} \Always^{\leq 2} B)$ \\
 \hline
 MTL & $\bigvee_{i=0}^{5-3}(\Always_{[i,i+3]} A \andltl \bigvee_{j=i+3+4}^{i+3+9-2} \Always_{[j,j+2]} B)$ \\
 \hline
\end{tabular}
\end{table}

In addition to the concatenation operator, the existence of within and hold operators also leads to compact (shorter length) representation of specifications, which greatly improves the readability of the formula. For example, the specification in \eqref{eq:ex2} is expressed in various temporal logics in Table~\ref{table2} where the BLTL formula contains nested temporal operators with shifted time windows whereas the MTL formula consists of the disjunction of many sub-formulae.

\begin{table}[h]
\caption{The representation of \eqref{eq:ex2} in TWTL, BLTL, and MTL.}
\label{table2}
\center
\begin{tabular}{| l | l |}
\hline
 TWTL & $[H^4 A]^{[3, 8]} \andltl [H^2 B]^{[4, 7]}$ \\
 \hline
 BLTL & $\Event^{\leq 8-4} \Always^{\leq 4} A \andltl \Event^{\leq 7-2} \Always^{\leq 2} B$\\
 \hline
 MTL & $\bigvee_{i=3}^{8-4} \Always_{[i, i+4]} A \andltl \bigvee_{i=4}^{7-2} \Always_{[i,i+2]} B$ \\
 \hline 
\end{tabular}
\end{table}

For automata-based model-checking, a BLTL formula is translated into another off-the-shelf temporal logic (e.g., syntactically co-safe linear temporal logic (scLTL) \cite{kupferman2001}), for which an existing tool (e.g., {\em scheck}~\cite{Latvala03}) for the automaton construction can be used \cite{Tkachev13}.
On the other hand, MTL and STL are very expressive temporal logics that are particularly used for real-time systems. While there is no finite representation for the satisfying language of STL, timed-automata \cite{alur1994} are used to represent the satisfying language of MTL. Compared to the other temporal logics, TWTL has a significantly lower computational complexity since an automaton for the satisfying language of a TWTL formula can be constructed directly
(see Sec.~\ref{sec:construction})
and does not require any clocks to deal with the time constraints (as in timed automata).
Finally, for a given TWTL formula $\phi$, we show that all possible temporally relaxed $\phi$ can be encoded to a very compact representation, which is enabled from the definition of temporal relaxation introduced in the next section.

\section{Temporal Relaxation}
\label{sec:tr}

In this section, we introduce a {\em temporal relaxation}
of a TWTL formula.
This notion is used in Sec.~\ref{sec:pb-formulation} to formulate
an optimization problem over temporal relaxations.

To illustrate the concept of temporal relaxation,
consider the following TWTL formula:
\begin{equation}
\label{eq:tr-example-twtl}
\phi_1 = [H^{1} A]^{[0:2]} \cdot \big[H^{3} B \andltl [H^{2} C]^{[0:4]}\big]^{[1:8]}.
\end{equation}

In cases where $\phi_1$ cannot be satisfied, one question is:
what is the ``closest" achievable formula that can be performed?
Hence, we investigate relaxed versions of $\phi_1$.
One way to do this is to relax the deadlines for the time windows,
which are captured by the {\em within} operator.
Accordingly, a relaxed version of $\phi_1$ can be written as
{\small
\begin{equation}
\label{eq:tr-example-twtl-relaxed}
\phi_1(\BS{\tau}) = [H^{1} A]^{[0:(2+\tau_1)]} \cdot [H^{3} B \andltl [H^{2} C]^{[0:(4+\tau_2)]}]^{[1:(8+\tau_3)]},
\end{equation}}%
where $\BS{\tau}=(\tau_1, \tau_2, \tau_3) \in \BB{Z}^3$.
Note that a critical aspect while relaxing the time windows is
to preserve the feasibility of the formula.
This means that all sub-formulae of $\phi$ enclosed by the
{\em within} operators must take less time to satisfy
than their corresponding time window durations.


\begin{definition}[Feasible TWTL formula]
A TWTL formula $\phi$ is called feasible,
if the time window corresponding to each {\em within} operator
is greater than the duration of the corresponding enclosed task
(expressed via the {\em hold} operators).
\end{definition}

\begin{remark}
Consider the formula in~Eq.\eqref{eq:tr-example-twtl-relaxed}.
For $\phi_1(\BS{\tau})$ to be a feasible TWTL formula,
the following constraint must hold:
(i) $2+\tau_1 \geq 1$; (ii) $4+\tau_2 \geq 2$ and
(iii) $7+\tau_3 \geq \max\{3, 4+\tau_2\}$.
Note that $\BS{\tau}$ may be non-positive.
In such cases, $\phi_1(\BS{\tau})$ becomes a stronger specification than $\phi_1$,
which implies that the sub-tasks are performed ahead of their actual deadlines. 
\end{remark}

Let $\phi$ be a TWTL formula. Then, a $\tau-$relaxation of $\phi$ is defined as follows:
\begin{definition}[$\tau-$Relaxation of $\phi$]
\label{def:eps-relax}
Let $\mbox{\boldmath$\tau$}\in \BB{Z}^{m}$, where $m$ is the number of {\em within} operators contained in $\phi$. The $\tau$-relaxation of $\phi$ is a {\em feasible} TWTL formula $\phi(\mbox{\boldmath$\tau$})$, where each subformula of the form $[\phi_i]^{[a_i, b_i]}$ is replaced by $[\phi_i]^{[a_i, b_i + \tau_i]}$.
\end{definition}
\begin{remark}
For any $\phi$, $\phi(\bf{0}) = \phi$. 
\end{remark}

\begin{definition}[Temporal Relaxation]
\label{def:temporal-relaxation}
Given $\phi$, let $\phi(\BS{\tau})$ be a feasible relaxed formula.
The temporal relaxation of $\phi(\BS{\tau})$
is defined as $\trel{\BS{\tau}} =\max_j (\tau_j)$.
\end{definition}

\begin{remark}
\label{rem:optimal2satisfaction}
If a word $o \models \phi(\BS{\tau})$ with $\trel{\BS{\tau}}  \leq 0$,
then $o \models \phi$.
\end{remark}

\section{Problem formulation}
\label{sec:pb-formulation}
In this section, first, we propose a generic optimization problem over
temporal relaxations of a TWTL formula. Then, we show how this setup can
be used to formulate verification, synthesis, and learning problems.

The objective of the following optimization problem is to find a
feasible relaxed version of a TWTL formula that optimizes a
cost function penalizing the sets of satisfying and unsatisfying
words, and the vector of relaxations.

\begin{problem}
\label{pb:general}
Let $\phi$ be a TWTL formula over the set of atomic propositions $\AP$, and let
$\CA{L}_1$ and $\CA{L}_2$ be any two languages over the alphabet $\Sigma=\spow{\AP}$.
Consider a cost function
$F : \BB{Z}_{\geq 0} \times \BB{Z}_{\geq 0} \times \BB{Z}^m \ra \overline{\BB{R}}$,
where $m$ is the number of {\em within} operators contained in $\phi$.
Find $\BS{\tau}$ such that
$F(\card{\CA{L}(\phi(\BS{\tau})) \cap \CA{L}_1},
\card{\CA{L}(\notltl\phi(\BS{\tau})) \cap \CA{L}_2}, \BS{\tau})$
is minimized.
\end{problem}

\subsection{Verification, synthesis, and learning}
\label{sec:problems}


In the following, we use Problem~\ref{pb:general} to formulate
three problems for verification, synthesis, and learning.

\subsubsection{Verification}\footnote{
This problem is not a verification problem in the usual sense,
but rather finding a formula that is satisfied by all runs of a system.}
Given a transition system $\TS$
and a TWTL formula $\phi$, we want to check if there exists
a relaxed formula $\phi(\BS{\tau})$ such that all output words
generated by $\TS$ satisfy $\phi(\BS{\tau})$.

In Problem~\ref{pb:general}, we can set $\CA{L}_1 = \emptyset$
and $\CA{L}_2 = \CA{L}(\TS)$, and we choose the following cost function:
\begin{equation}
\label{eq:pb-formulation-verification}
F(x, y, \BS{\tau}) = 1 - \delta(y),
\end{equation}
where $x, y \in \BB{Z}_{\geq 0}$ and $\delta(x) = \begin{cases}1 & x=0\\ 0 & x\neq 0\end{cases}$. 
The cost function in Eq.~\eqref{eq:pb-formulation-verification}
has a single global minimum value at 0 which corresponds
to the case $\CA{L}(\TS) \cap \CA{L}(\notltl \phi(\BS{\tau})) = \emptyset$.

\subsubsection{Synthesis}
Given a transition system $\TS$
and a TWTL formula $\phi$, we want to find
a policy (a trajectory of $\TS$) that produces
an output word satisfying a relaxed version $\phi(\BS{\tau})$
of the specification with minimal temporal relaxation $\trel{\BS{\tau}}$.
%

In Problem~\ref{pb:general}, we can set $\CA{L}_1 = \CA{L}(\TS)$
and $\CA{L}_2 = \emptyset$, and we choose the following cost function:
\begin{equation}
\label{eq:pb-formulation-synthesis}
F(x, y, \BS{\tau}) = \begin{cases}
\trel{\BS{\tau}} & x > 0\\
\infty & \text{otherwise}
\end{cases},
\end{equation}
where $x, y \in \BB{Z}_{\geq 0}$.
The cost function in Eq.~\eqref{eq:pb-formulation-synthesis}
is minimized by an output word of $\TS$, which satisfies the
relaxed version of $\phi$ with minimum temporal relaxation,
see Def.~\ref{def:temporal-relaxation}.

\subsubsection{Learning}
Let $\phi$ be a TWTL formula and $\CA{L}_p$ and $\CA{L}_n$ be
two finite sets of words labeled as positive and negative examples,
respectively.
We want to find a relaxed formula $\phi(\BS{\tau})$
such that the misclassification rate, i.e.,
$\card{\{w \in \CA{L}_p\ |\ w\not\models \phi(\BS{\tau}) \} } + \card{\{w\in \CA{L}_n \ |\ w \models \phi(\BS{\tau}) \}}$,
is minimized.

This case can be mapped to the generic formulation
by setting $\CA{L}_1 = \CA{L}_n$, $\CA{L}_2 = \CA{L}_p$
and choosing the cost function
\begin{equation}
\label{eq:pb-formulation-learning}
F(x, y, \BS{\tau}) = x+y,
\end{equation}
which captures the misclassification rate,
where $x, y \in \BB{Z}_{\geq 0}$.

\subsection{Overview of the solution}

We propose an automata-based approach to solve the verification,
synthesis, and learning problems defined above.
Specifically, the proposed algorithm constructs an annotated DFA
$\FA_\infty$, which captures all temporal relaxations of the given
formula $\phi$, i.e., $\CA{L}(\FA_\infty) = \CA{L}(\phi(\infty))$
(see Def.~\ref{def:phi-infty} for the definition of $\phi(\infty)$).
Note that the algorithm can also be used to construct a (normal) DFA
$\FA$ which accepts the satisfying language of $\phi$, i.e.,
$\CA{L}(\FA) = \CA{L}(\phi)$.
Using the resulting DFA $\FA_\infty$, we proceed in Sec.~\ref{sec:algorithms}
to solve the synthesis and verification problems using a product automaton
approach. For the synthesis problem, we propose a recursive
algorithm that computes a satisfying path with minimum temporal relaxation.
The learning problem is solved by inferring the minimum relaxation
for each trajectory and then combining these relaxations to ensure minimum
misclassification rate.

\section{Properties of TWTL}
\label{sec:props}
In this section, we present properties of TWTL formulae, their temporal relaxations,
and their accepted languages.


In this paper, languages are represented in three ways: as
TWTL formulae, as automata, and as sets.
As one might expect, there is a duality between some
operators of TWTL and set operations, i.e., conjunction,
disjunction, and concatenation correspond to intersection,
union, and concatenation languages, respectively.
Negation may be mapped to complementation with respect to
the language of all bounded words, where the bound
is given by the time bound of the negated formula.

\begin{proposition}
\label{th:twtl-props}
The following properties hold
\begin{align}
(\phi_1 \cdot \phi_2) \cdot \phi_3 &= \phi_1 \cdot ( \phi_2 \cdot \phi_3) \label{eq:twtl-concat-assoc}\\
\phi_1 \cdot ( \phi_2 \orltl \phi_3) &= (\phi_1 \cdot \phi_2) \orltl (\phi_1 \cdot \phi_3) \label{eq:twtl-concat-distrib-disj}\\
[\phi_1 \orltl \phi_2]^{[a, b]} &= [\phi_1]^{[a, b]} \orltl [\phi_2]^{[a, b]} \label{eq:twtl-within-distrib-disj}\\
\notltl (H^d p) &= [\notltl p]^{[0, d]}\\
[\phi_1]^{[a_1, b_1]} &= (H^{a_1-1} \True) \cdot [\phi_1]^{[0, b_1-a_1]}\\
(H^{d_1} p) \cdot (H^{d_2} p) &= H^{d_1+d_2+1} p\\
[\phi_1]^{[a, b]} &\Implies [\phi_1]^{[a, b + \tau]} \label{eq:twtl-prop-tw-expand}\\
(\phi_1 \Implies \phi_2) &\Implies ([\phi_1]^{[a, b]} \Implies [\phi_2]^{[a, b]}) \label{eq:twtl-prop-within-implication}
\end{align}
where $\phi_1$, $\phi_2$, and $\phi_3$ are TWTL formulae,
$p \in \{s, \notltl s\}$, $s \in \AP \cup \{\True\}$,
and $a, b, a_1, b_1, d, d_1, d_2, \tau \in \BB{Z}_{\geq 0}$
such that $a \leq b$ and $1 \leq a_1\leq b_1$.
\end{proposition}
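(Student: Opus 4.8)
The plan is to verify each of the eight identities/implications separately, working directly from the TWTL semantics and the definitions of $\norm{\cdot}$, $\CA{L}(\cdot)$, and the $\tau$-relaxation. In each case the natural strategy is to show that the two sides accept the same language (for the equalities) or that every word satisfying the left-hand formula satisfies the right-hand one (for the implications \eqref{eq:twtl-prop-tw-expand} and \eqref{eq:twtl-prop-within-implication}), by unfolding the recursive semantics on an arbitrary word $\BF{o}$ and an arbitrary satisfying time window $\BF{o}_{t_1,t_2}$.

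First I would dispatch the ``algebraic'' identities whose proof is a routine rearrangement of the existential quantifiers in the semantics. For associativity of concatenation \eqref{eq:twtl-concat-assoc}, both sides amount to: there exist split points $t_1 \le t \le t' < t_2$ with $\BF{o}_{t_1,t} \models \phi_1$, $\BF{o}_{t+1,t'} \models \phi_2$, $\BF{o}_{t'+1,t_2}\models\phi_3$; the only subtlety is the $\arg\min$ in the concatenation clause, so I would first record a small lemma that for a \emph{feasible} formula the earliest satisfying prefix is well defined and that the identity is insensitive to whether one picks the minimal split or any split (this is exactly where unambiguity of the languages, Def.~\ref{def:unambiguous-lang}, is meant to be used). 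The distributivity laws \eqref{eq:twtl-concat-distrib-disj} and \eqref{eq:twtl-within-distrib-disj} follow by pushing $\orltl$ through the single existential quantifier of $\cdot$ resp.\ of $[\,]^{[a,b]}$: e.g.\ $\exists t\, (\BF{o}_{t,t_1+b}\models\phi_1 \orltl \BF{o}_{t,t_1+b}\models\phi_2)$ is logically equivalent to $(\exists t\, \BF{o}_{t,t_1+b}\models\phi_1) \orltl (\exists t\, \BF{o}_{t,t_1+b}\models\phi_2)$, matching the time-bound side condition since $\norm{\phi_1\orltl\phi_2}=\max(\norm{\phi_1},\norm{\phi_2})$. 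The three ``rewriting'' identities—$\notltl(H^d p) = [\notltl p]^{[0,d]}$, $[\phi_1]^{[a_1,b_1]} = (H^{a_1-1}\True)\cdot[\phi_1]^{[0,b_1-a_1]}$, and $(H^{d_1}p)\cdot(H^{d_2}p) = H^{d_1+d_2+1}p$—are each a direct computation: spell out both sides on $\BF{o}_{t_1,t_2}$, check the index ranges ($\{t_1,\dots,t_1+d\}$ versus $\{t_1,\dots,t_1+d_1\}\cup\{t_1+d_1+1,\dots,t_1+d_1+d_2+1\}$ for the last one) and the length conditions coming from $\norm{\cdot}$ agree. For $\notltl(H^d p)$ I would be careful that $\notltl$ here is complementation relative to words of length exactly $\norm{H^d p}=d$, which is precisely what $[\notltl p]^{[0,d]}$ enforces via its $t_2 - t_1 \ge d$ clause together with $t \ge t_1$ forcing $t=t_1$.

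The monotonicity implication \eqref{eq:twtl-prop-tw-expand} is almost immediate: if $\BF{o}_{t_1,t_1+b}\models\phi_1$ via some witness $t\ge t_1+a$, the same $t$ witnesses $[\phi_1]^{[a,b+\tau]}$ since $t_1+a \le t$ and the window $[t_1+a, t_1+b]$ still lies inside $[t_1+a, t_1+(b+\tau)]$, and $\norm{[\phi_1]^{[a,b+\tau]}}=b+\tau\ge b$; feasibility is preserved because enlarging $b$ only relaxes the feasibility inequality. For \eqref{eq:twtl-prop-within-implication} I would argue contextually: assuming $\phi_1\Implies\phi_2$ holds as an implication between formulae (i.e.\ $\CA{L}(\phi_1)\subseteq\CA{L}(\phi_2)$ on words of the appropriate bound), take $\BF{o}_{t_1,t_2}\models[\phi_1]^{[a,b]}$, extract the witness $t$ with $\BF{o}_{t,t_1+b}\models\phi_1$, apply the hypothesis to the subword $\BF{o}_{t,t_1+b}$ to get $\BF{o}_{t,t_1+b}\models\phi_2$, and re-package with the same $t$ to conclude $\BF{o}_{t_1,t_2}\models[\phi_2]^{[a,b]}$. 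I expect the main obstacle to be bookkeeping rather than conceptual: pinning down exactly what $\notltl$ and $\Implies$ mean at the level of languages (complementation with respect to which length?), and making the $\arg\min$ in the concatenation semantics cooperate with associativity—so I would state and prove a short preliminary lemma isolating the ``earliest prefix'' behaviour of $\cdot$ and then reuse it for \eqref{eq:twtl-concat-assoc} and \eqref{eq:twtl-concat-distrib-disj}.
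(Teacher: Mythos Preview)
Your approach is the same as the paper's: the paper's entire proof is the single sentence ``These follow directly from the semantics of TWTL formulae.'' Your proposal simply carries out what that sentence asserts, and in fact you are more careful than the paper in flagging the $\arg\min$ subtlety in the concatenation clause and the precise meaning of $\notltl$ at the language level---issues the paper does not discuss at all.
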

\begin{proof}
These follow directly from the semantics of TWTL formulae.
\end{proof}

\begin{definition}[Disjunction-Free Within form]
Let $\phi$ be a TWTL formula. We say that $\phi$ is in
Disjunction-Free Within (DFW) form if for all {\em within}
operators contained in the formula the associated
enclosed subformulae do not contain any disjunction
operators.
\end{definition}

An example of a TWTL formula in DFW form is
$\phi_1= [H^2 A]^{[0,9]} \orltl [H^5 B]^{[0,9]}$,
while a formula not in DFW form is
$\phi_2 = [H^2 A \orltl H^5 B]^{[0,9]}$.
However, $\phi_1$ and $\phi_2$ are equivalent
by Eq.~\eqref{eq:twtl-within-distrib-disj} of Prop.~\ref{th:twtl-props}.
The next proposition formalizes this property.

\begin{proposition}
\label{th:wdf}
For any TWTL formula $\phi$, if the negation operators are
only in front of the atomic propositions, then $\phi$ can be
written in the DFW form. 
\end{proposition}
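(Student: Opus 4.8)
The plan is to prove, by structural induction on $\phi$, the slightly stronger statement that every TWTL formula $\phi$ in which negations occur only in front of atomic propositions is equivalent to a formula $\bigvee_{i=1}^{k}\psi_i$ in which each $\psi_i$ is \emph{disjunction-free} (contains no $\orltl$ at all). Such a formula is automatically in DFW form, since every \emph{within} operator it contains lies inside some $\psi_i$ and therefore encloses a disjunction-free subformula. The base cases $\phi=H^d s$ and $\phi=H^d\notltl s$ are already disjunction-free (a single disjunct), and $\phi=\notltl\phi_1$ occurs --- under the hypothesis on negations --- only with $\phi_1$ an atom, so it reduces to a base case.

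For the inductive step: for $\phi_1\orltl\phi_2$, apply the hypothesis to each operand and take the union of the two lists of disjunction-free disjuncts. For $\phi_1\andltl\phi_2$, write $\phi_1\equiv\bigvee_i\alpha_i$ and $\phi_2\equiv\bigvee_j\beta_j$ and distribute $\andltl$ over $\orltl$ --- which is immediate from the semantics, since on a fixed subword the satisfaction of $\andltl$ and $\orltl$ is a plain Boolean combination --- to get $\bigvee_{i,j}(\alpha_i\andltl\beta_j)$ with disjunction-free disjuncts. For $[\phi_1]^{[a,b]}$, write $\phi_1\equiv\bigvee_i\alpha_i$ and apply \eqref{eq:twtl-within-distrib-disj} repeatedly to obtain $\bigvee_i[\alpha_i]^{[a,b]}$. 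The only delicate case is $\phi_1\cdot\phi_2$: writing $\phi_1\equiv\bigvee_i\alpha_i$, $\phi_2\equiv\bigvee_j\beta_j$, distribution over the \emph{right} operand is exactly \eqref{eq:twtl-concat-distrib-disj} (with \eqref{eq:twtl-concat-assoc} for regrouping), but distribution over the \emph{left} operand, $(\alpha_1\orltl\alpha_2)\cdot\psi\equiv(\alpha_1\cdot\psi)\orltl(\alpha_2\cdot\psi)$, is \emph{not} among the identities of Prop.~\ref{th:twtl-props}, and I expect it to be the crux of the argument, since the $\arg\min$ (``earliest split'') semantics of $\cdot$ makes concatenation genuinely sensitive to the time bounds of the two disjuncts.

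To close this gap I would argue directly from the semantics, using as an auxiliary lemma that every disjunction-free formula $\alpha$ (negations at atoms) has a canonical shortest witness: if $\BF{o}_{t_1,t}\models\alpha$ for some $t$, then already $\BF{o}_{t_1,\,t_1+\norm{\alpha}}\models\alpha$, and $\BF{o}_{t_1,t'}\not\models\alpha$ for every $t'<t_1+\norm{\alpha}$; this follows by an easy induction on $\alpha$ along the recursion \eqref{eq:bound-def}. It pins down the $\arg\min$ split in any concatenation whose left operand is disjunction-free, so when $\norm{\alpha_1}=\norm{\alpha_2}=\ell$ both $(\alpha_1\orltl\alpha_2)\cdot\psi$ and $(\alpha_1\cdot\psi)\orltl(\alpha_2\cdot\psi)$ reduce to the single condition ``$\BF{o}_{t_1,\,t_1+\ell}\models\alpha_1\orltl\alpha_2$ and $\BF{o}_{t_1+\ell+1,\,t_2}\models\psi$'', hence are equivalent. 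The general case is reduced to equal bounds by padding: each disjunct $\alpha_i$ with $\ell_i:=\norm{\alpha_i}$ strictly below $\ell:=\max_j\norm{\alpha_j}$ is replaced by $\alpha_i\cdot(H^{\ell-\ell_i-1}\True)$, which is still disjunction-free and has time bound $\ell$; one has to verify this padding does not change the accepted language in the context where it is applied, its only effect being to strengthen the lower bound on word length, which is harmless because the enclosing \emph{within} operator already enforces $t_2-t_1\ge b$. Termination of the full rewriting is then immediate: every application of \eqref{eq:twtl-within-distrib-disj}, \eqref{eq:twtl-concat-distrib-disj}, or the left-distribution step strictly lowers the number of $\orltl$ occurrences in the scope of a \emph{within} operator, and the padding step does not raise it.
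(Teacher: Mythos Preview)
Your approach is considerably more detailed than the paper's own proof, which is a one-line sketch invoking ``distributivity of Boolean operators and Prop.~\ref{th:twtl-props}'' and nothing more. You are right to single out left distribution of concatenation, $(\alpha_1 \orltl \alpha_2)\cdot\psi \stackrel{?}{=} (\alpha_1\cdot\psi)\orltl(\alpha_2\cdot\psi)$, as the crux: it is \emph{not} listed in Prop.~\ref{th:twtl-props}, and under the $\arg\min$ semantics it genuinely fails. The paper's sketch does not confront this point at all, so in that sense you have pushed further than the paper.

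However, your padding step does not close the gap. You claim that replacing a disjunct $\alpha_i$ by $\alpha_i' = \alpha_i \cdot H^{\ell-\ell_i-1}\True$ ``only strengthens the lower bound on word length''. That is true of $\alpha_i$ \emph{in isolation}, but it is false once $\alpha_i$ sits inside the \emph{left} operand of a concatenation: padding shifts the $\arg\min$ of $\bigvee_i\alpha_i$ from $t_1+\ell_i$ to $t_1+\ell$, which changes the subword on which $\psi$ is evaluated. Concretely, take $\alpha_1 = H^0 A$, $\alpha_2 = H^2 B$, $\psi = H^0 C$, and $\BF{o} = \{A\},\emptyset,\emptyset,\{C\}$ on $[0,3]$. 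Before padding, $\alpha_1\orltl\alpha_2$ is first satisfied at $t'=0$, so $\psi$ is checked at position $1$ and fails; after padding $\alpha_1$ to $H^0 A \cdot H^1\True$, the earliest satisfaction moves to $t'=2$, $\psi$ is checked at position $3$, and now succeeds. Wrapping everything in $[\cdot]^{[0,3]}$ does not help---the same discrepancy persists---so the ``harmless in context'' claim is false. Your auxiliary lemma on canonical witnesses is correct and does dispatch the equal-bound case cleanly; it is the reduction to equal bounds that breaks. (A repair would need something stronger than padding---for instance a case split on which disjunct is satisfied first, which in general forces one to express the \emph{negation} of a disjunct, and it is not obvious this stays within the negation-at-atoms fragment.)
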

\begin{proof}
The result follows from the properties of distributivity of
Boolean operators and Prop.~\ref{th:twtl-props}, which
can be applied iteratively to move all disjunction operators
outside the {\em within} operators.
\end{proof}

In the following, we define the notion of unambiguous
concatenation, which enables tracking of
progress for sequential specifications.
Specifically, if the property holds, then
an algorithm is able
to decide at each moment if the first
specification has finished while monitoring
the satisfaction of two sequential specifications. 

\begin{definition}
\label{def:unambiguous-concat}
Let $\CA{L}_1$ and $\CA{L}_2$ be two languages.
We say that the language $\CA{L}_1 \cdot \CA{L}_2$
is an {\em unambiguous concatenation} if each word in the resulting language
can be split unambiguously,
i.e., $\big( L_1, \CA{L}_1, \CA{L}_1 \cdot ( \prefix{\CA{L}_2} \setminus \{\epsilon\}) \big)$
is a partition of $\prefix{\CA{L}_1 \cdot \CA{L}_2}$,
where $L_1 = \{ w_{0, i} \ |\ w\in \CA{L}_1, i\in \{0, \ldots, \card{w}-2 \} \}$
and $P(L)$ denotes the maximal prefix language of $L$.
\end{definition}

The three sets of the partition from
Def.~\ref{def:unambiguous-concat}
may be thought as indicating whether
the first specification is in progress,
the first specification has finished, and
the second specification is in progress,
respectively.

\begin{proposition}
\label{th:unambiguous-concat}
Consider two languages $\CA{L}_1$ and $\CA{L}_2$.
The language $\CA{L}_1 \cdot \CA{L}_2$ 
is an unambiguous concatenation
if and only if $\CA{L}_1$ is an unambiguous language.
\end{proposition}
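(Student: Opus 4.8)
The plan is to prove the biconditional by establishing the two directions separately, with the bulk of the work going into showing that unambiguity of $\CA{L}_1$ implies the three-set partition property.

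\textbf{($\Leftarrow$) Assume $\CA{L}_1$ is unambiguous.} First I would observe the basic consequence of Def.~\ref{def:unambiguous-lang} already noted in the text: no word of $\CA{L}_1$ is a proper prefix of another word of $\CA{L}_1$. I then need to verify three things for $\big(L_1,\ \CA{L}_1,\ \CA{L}_1\cdot(\prefix{\CA{L}_2}\setminus\{\epsilon\})\big)$: (a) the three sets are pairwise disjoint; (b) their union is exactly $\prefix{\CA{L}_1\cdot\CA{L}_2}$. For the union, I would take an arbitrary $u\in\prefix{\CA{L}_1\cdot\CA{L}_2}$, so $u$ is a strict prefix of some $w w'$ with $w\in\CA{L}_1$, $w'\in\CA{L}_2$. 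Case-split on where the ``cut'' for $u$ falls relative to $|w|$: if $|u|<|w|$ then $u\in L_1$ (here I use $i\le|w|-2$, i.e.\ strictly inside $w$); if $|u|=|w|$ then $u=w\in\CA{L}_1$; if $|u|>|w|$ then $u=w\,v$ with $v$ a nonempty prefix of $w'\in\CA{L}_2$, so $u\in\CA{L}_1\cdot(\prefix{\CA{L}_2}\setminus\{\epsilon\})$. Conversely each of the three sets is clearly contained in $\prefix{\CA{L}_1\cdot\CA{L}_2}$. For disjointness, $L_1\cap\CA{L}_1=\emptyset$ follows because a word of $L_1$ is a proper prefix of a word of $\CA{L}_1$, and unambiguity forbids a word of $\CA{L}_1$ from being a proper prefix of another. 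The disjointness of $\CA{L}_1$ (and of $L_1$) from $\CA{L}_1\cdot(\prefix{\CA{L}_2}\setminus\{\epsilon\})$ is where unambiguity is essential: if $w = w'' v$ with $w,w''\in\CA{L}_1$ and $v\neq\epsilon$, then $w''$ is a proper prefix of $w$, contradicting unambiguity; the $L_1$ case is similar (it would make some $\CA{L}_1$-word a prefix of another). This is the heart of the argument.

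\textbf{($\Rightarrow$) Assume $\CA{L}_1$ is \emph{not} unambiguous.} I want to show the partition fails. Non-unambiguity means there is a word $w''\in\CA{L}_1$ that is a proper prefix of another word $w\in\CA{L}_1$ (using the equivalence between Def.~\ref{def:unambiguous-lang} and the ``no word is a prefix of another'' characterization, which the text asserts). Then pick any $w'\in\CA{L}_2$ (assuming $\CA{L}_2\neq\emptyset$; the degenerate case $\CA{L}_2=\emptyset$ I would handle by noting the concatenation is empty and the statement holds vacuously, or treat $\epsilon\in\CA{L}_2$ separately). Consider the word $w\in\CA{L}_1$: it lies in the second block of the claimed partition. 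But writing $w = w'' v$ with $v\neq\epsilon$ a prefix of $w$, hence a prefix of $w w'\in\CA{L}_1\cdot\CA{L}_2$, we get $w = w''v$ with $v$ a nonempty prefix of a word in $\CA{L}_2$ only if $v$ itself is a prefix of some $\CA{L}_2$-word — so I need to be slightly careful and instead exhibit the conflict directly: $w$ viewed with cut after $w''$ would need to match the third block, while $w\in\CA{L}_1$ matches the second block, so the blocks are not disjoint, or alternatively some $\prefix{\CA{L}_1\cdot\CA{L}_2}$ element has two incompatible classifications. I would nail down the cleanest witness during writing.

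\textbf{Main obstacle.} I expect the delicate point to be the $(\Rightarrow)$ direction: turning ``$\CA{L}_1$ ambiguous'' into a concrete violation of the \emph{partition} (as opposed to just some informal notion of ambiguity), and in particular dealing with edge cases — $\epsilon\in\CA{L}_1$, $\epsilon\in\CA{L}_2$, $\CA{L}_2=\emptyset$, and words of $\CA{L}_1$ that are prefixes of longer $\CA{L}_1$-words versus the slightly stronger statement in Def.~\ref{def:unambiguous-lang} about subsets. Making the correspondence between the subset-based Def.~\ref{def:unambiguous-lang} and the elementwise ``no word is a prefix of another'' precise (the text claims this but does not prove it) is the other place where care is needed; I would either prove that small equivalence as a preliminary observation or fold it into the argument.
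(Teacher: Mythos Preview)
Your approach is essentially the same as the paper's, and the $(\Leftarrow)$ direction is if anything more complete than the paper's (you explicitly verify coverage, which the paper leaves implicit). The paper argues disjointness exactly as you do: $L_1\cap\CA{L}_1=\emptyset$ because a word of $L_1$ is a proper prefix of a word of $\CA{L}_1$, and the intersection of $\prefix{\CA{L}_1}=L_1\cup\CA{L}_1$ with the third block is empty because otherwise some $\CA{L}_1$-word would be a proper prefix of another.

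Where you make life harder than necessary is in $(\Rightarrow)$. You correctly extract, from non-unambiguity of $\CA{L}_1$, words $w'',w\in\CA{L}_1$ with $w''$ a proper prefix of $w$. But then you try to place $w$ in the third block $\CA{L}_1\cdot(\prefix{\CA{L}_2}\setminus\{\epsilon\})$, which would require the residual $v$ to be a prefix of something in $\CA{L}_2$ --- a condition you have no control over. The clean witness is the other word: $w''\in\CA{L}_1$ (second block) and, being a proper prefix of $w\in\CA{L}_1$, also $w''\in L_1$ (first block). Hence $L_1\cap\CA{L}_1\neq\emptyset$ and the partition fails. This involves neither $\CA{L}_2$ nor the third block, so all your worries about $\CA{L}_2=\emptyset$, $\epsilon\in\CA{L}_2$, etc., evaporate for this direction. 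This is precisely the paper's argument, phrased directly rather than by contraposition: assuming the partition, $L_1\cap\CA{L}_1=\emptyset$ forces that no $\CA{L}_1$-word is a proper prefix of another, i.e., $\CA{L}_1$ is unambiguous.

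Your observation that the statement is delicate when $\CA{L}_2=\emptyset$ (since then $\prefix{\CA{L}_1\cdot\CA{L}_2}=\emptyset$ cannot be partitioned by a nonempty $\CA{L}_1$) is a genuine edge case the paper silently assumes away; in the TWTL setting all languages are nonempty and contain no empty word, so it does not arise.
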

\begin{proof}
See App.~\ref{app:unambiguous-concat}
%
\end{proof}


In the following results, we frequently use
the notion of abstract syntax tree of a TWTL formula.

\begin{definition}
\label{def:ast}
An {\em Abstract Syntax Tree (AST)} of $\phi$ is denoted by $AST(\phi)$,
where each leaf corresponds to a {\em hold} operator
and each intermediate node corresponds to a Boolean,
concatenation, or {\em within} operator.
\end{definition}

Given a TWTL formula $\phi$, there might exist
multiple AST trees that represent $\phi$.
In this paper, $AST(\phi)$ is
assumed to be computed by an LL(*) parser~\cite{Parr07}.
The reader is referred to \cite{Hopcroft2006} for more details on AST and parsers.
An example of an AST tree of Eq.~\eqref{eq:tr-example-twtl}
is illustrated in Fig.~\ref{fig:ast-example}.


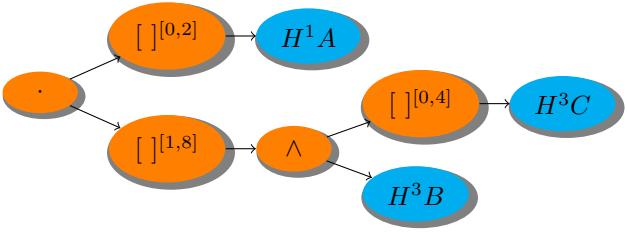
\begin{figure}[!htb]
\centering
\begin{tikzpicture}[
    grow = right,
    state/.style={ellipse, draw=none, fill=orange, circular drop shadow,
        text centered, anchor=west, text=black, minimum width=1cm},
    leaf/.style={ellipse, draw=none, fill=cyan, circular drop shadow,
        text centered, anchor=west, text=black},
    level distance=0.4cm, growth parent anchor=east
]

\node [state] {$\cdot$} [->, sibling distance=1.5cm]
     child {
        node [state] {$[\ ]^{[1,8]}$}
        child { [->, sibling distance=1.2cm]
            node [state] {$\andltl$}
            child {
                node [leaf] {$H^3 B$}
            }
            child {
                node [state] {$[\ ]^{[0, 4]}$}
                child {
                    node [leaf] {$H^3 C$}
                }
            }
        }
     }
     child {
        node [state] {$[\ ]^{[0,2]}$}
        child{
            node [leaf] {$H^1 A$}
        }
     }
;
\end{tikzpicture}
\caption{An AST corresponding to the TWTL in Eq.~\eqref{eq:tr-example-twtl}.
The intermediate orange nodes correspond to the Boolean,
concatenation, and {\em within} operators,
while the cyan leaf nodes represent the {\em hold} operators.
}
\label{fig:ast-example}
\end{figure}

\begin{proposition}
\label{th:tr-partial-order}
Let $\BS{\tau}', \BS{\tau}'' \in \BB{Z}^m$ such that
$\phi(\BS{\tau}')$ and $\phi(\BS{\tau}'')$ are two feasible relaxed formulae,
where $m$ is the number of {\em within} operators in $\phi$.
If $\BS{\tau}' \leq \BS{\tau}''$, then $\phi(\BS{\tau}') \Implies \phi(\BS{\tau}'')$.
\end{proposition}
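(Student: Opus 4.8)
The plan is to induct on the structure of $\phi$ using the abstract syntax tree $AST(\phi)$, reducing the claim to a statement about each operator. The key observation is that the relaxation vector $\BS{\tau}$ distributes over the subformulae: if $\phi$ has $m$ within operators, each of its direct subformulae $\psi$ inherits a sub-block $\BS{\tau}_\psi$ of $\BS{\tau}$, and $\phi(\BS{\tau})$ is built from the $\psi(\BS{\tau}_\psi)$ by the same top-level operator. Moreover, the componentwise order $\BS{\tau}' \leq \BS{\tau}''$ restricts to the same order on each sub-block. So it suffices to prove: (i) the base case, where $\phi = H^d s$ or $\phi = H^d \notltl s$, which has no within operators and hence $\phi(\BS{\tau}') = \phi = \phi(\BS{\tau}'')$ trivially; and (ii) that each operator is monotone in the implication order with respect to relaxing its arguments.

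For the inductive step I would treat the operators one at a time. For conjunction and disjunction, monotonicity is immediate: if $\psi_1(\BS{\tau}_1') \Implies \psi_1(\BS{\tau}_1'')$ and $\psi_2(\BS{\tau}_2') \Implies \psi_2(\BS{\tau}_2'')$, then $\psi_1(\BS{\tau}_1') \andltl \psi_2(\BS{\tau}_2') \Implies \psi_1(\BS{\tau}_1'') \andltl \psi_2(\BS{\tau}_2'')$ and similarly for $\orltl$. For concatenation, one uses the semantics of $\phi_1 \cdot \phi_2$: any word witnessing $\psi_1(\BS{\tau}_1') \cdot \psi_2(\BS{\tau}_2')$ splits as $ww'$ with $w \models \psi_1(\BS{\tau}_1')$ and $w' \models \psi_2(\BS{\tau}_2')$; applying the two induction hypotheses gives $w \models \psi_1(\BS{\tau}_1'')$ and $w' \models \psi_2(\BS{\tau}_2'')$, and the same split witnesses satisfaction of the larger formula (the $\arg\min$ in the semantics only needs some valid split, not a canonical one, so no subtlety arises). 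For the within operator $\phi = [\psi]^{[a,b]}$ with associated index $j$, note that $\phi(\BS{\tau}') = [\psi(\BS{\tau}_\psi')]^{[a, b + \tau_j']}$ and $\phi(\BS{\tau}'') = [\psi(\BS{\tau}_\psi'')]^{[a, b + \tau_j'']}$. Here I would chain two facts from Prop.~\ref{th:twtl-props}: by the induction hypothesis $\psi(\BS{\tau}_\psi') \Implies \psi(\BS{\tau}_\psi'')$, so by Eq.~\eqref{eq:twtl-prop-within-implication} we get $[\psi(\BS{\tau}_\psi')]^{[a, b+\tau_j']} \Implies [\psi(\BS{\tau}_\psi'')]^{[a, b+\tau_j']}$; and since $\tau_j' \leq \tau_j''$, Eq.~\eqref{eq:twtl-prop-tw-expand} gives $[\psi(\BS{\tau}_\psi'')]^{[a, b+\tau_j']} \Implies [\psi(\BS{\tau}_\psi'')]^{[a, b+\tau_j'']}$; composing the two implications closes this case.

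The negation case is the one place requiring care, since negation reverses implication: from $\psi(\BS{\tau}_\psi') \Implies \psi(\BS{\tau}_\psi'')$ one only gets $\notltl \psi(\BS{\tau}_\psi'') \Implies \notltl \psi(\BS{\tau}_\psi')$, the wrong direction. I expect this to be the main obstacle, and I would handle it by invoking the hypothesis — implicit in the paper's use of DFW form and the restriction in Prop.~\ref{th:wdf} — that negations occur only in front of atomic propositions. Under that restriction $\notltl$ never appears above a within operator, so a negated subformula contains no $\tau_i$, is unaffected by relaxation, and the implication holds as an equality; alternatively, one observes that $H^d \notltl s$ is already a base case. If the statement is intended for fully general $\phi$, then the claim as written is false at negation nodes, and the honest fix is to state Prop.~\ref{th:tr-partial-order} for formulae whose negations are only on atomic propositions, which matches how the result is used later. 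With that caveat, the induction goes through and the proof is complete.
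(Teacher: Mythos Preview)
Your proof is correct and follows essentially the same route as the paper: structural induction on $AST(\phi)$, with the {\em within} case handled by chaining Eq.~\eqref{eq:twtl-prop-within-implication} and Eq.~\eqref{eq:twtl-prop-tw-expand} from Prop.~\ref{th:twtl-props} exactly as you do. You are in fact more careful than the paper on two points it glosses over as ``trivial'': you spell out the concatenation argument, and you correctly flag that the negation case only goes through under the restriction (implicit here, explicit in Assump.~\ref{assump:spec}) that negations appear only in front of atomic propositions.
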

\begin{proof}
See App.~\ref{app:tr-partial-order}
\end{proof}

\begin{definition}
\label{def:phi-infty}
Given an output word $\BF{o}$, we say that $\BF{o}$
satisfies $\phi(\infty)$, i.e., $\BF{o} \models \phi(\infty)$,
if and only if
$\exists \BS{\tau}' < \infty \; s.t. \; \BF{o} \models \phi(\BS{\tau}')$.
\end{definition}

The next corollary follows directly from Prop.~\ref{th:tr-partial-order}.
\begin{corollary}
Let $\BS{\tau} < \infty$, then $\phi(\BS{\tau}) \Implies \phi(\infty)$, $\forall \BS{\tau}$.
\end{corollary}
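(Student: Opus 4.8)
The plan is to derive this corollary as an essentially immediate consequence of Proposition~\ref{th:tr-partial-order} together with Definition~\ref{def:phi-infty}. Fix a feasible relaxation vector $\BS{\tau} < \infty$, and let $\BF{o}$ be any output word such that $\BF{o} \models \phi(\BS{\tau})$. To show $\phi(\BS{\tau}) \Implies \phi(\infty)$, it suffices to exhibit a single finite relaxation vector $\BS{\tau}'< \infty$ with $\BF{o} \models \phi(\BS{\tau}')$; but $\BS{\tau}$ itself already witnesses this, so by Definition~\ref{def:phi-infty} we get $\BF{o} \models \phi(\infty)$. Hence every word satisfying $\phi(\BS{\tau})$ also satisfies $\phi(\infty)$, which is exactly the claimed implication. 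Since $\BS{\tau}$ was an arbitrary finite feasible vector, the statement holds $\forall \BS{\tau}$.

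A minor subtlety worth addressing explicitly: Definition~\ref{def:phi-infty} quantifies over $\BS{\tau}' < \infty$ without restating the feasibility requirement, whereas $\tau$-relaxations in Definition~\ref{def:eps-relax} are required to be feasible. I would spend one sentence noting that we only ever consider $\BS{\tau}$ ranging over feasible relaxations (consistent with the hypothesis of Proposition~\ref{th:tr-partial-order}), so there is no gap here; alternatively, one can invoke Proposition~\ref{th:tr-partial-order} more directly by taking $\BS{\tau}' = \BS{\tau}$ and $\BS{\tau}'' $ to be any feasible vector dominating $\BS{\tau}$ componentwise, obtaining $\phi(\BS{\tau}) \Implies \phi(\BS{\tau}'')$ and then chaining to $\phi(\infty)$ via the definition — but the first route is cleaner and avoids an unnecessary detour.

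I do not anticipate any real obstacle; the content of the corollary is purely a matter of unwinding the definition of $\phi(\infty)$ as an existential over finite relaxations, and the monotonicity established in Proposition~\ref{th:tr-partial-order} is what makes that definition behave well (it is what guarantees, for instance, that $\phi(\infty)$ is genuinely a ``limit'' of the relaxations and that adding more slack never destroys satisfaction). If anything, the only thing to be careful about is not overcomplicating the argument: the proof is two or three lines, and the right move is to state it as such rather than reproving monotonicity from the TWTL semantics.
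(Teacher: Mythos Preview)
Your proposal is correct and aligns with the paper's one-line justification (``follows directly from Prop.~\ref{th:tr-partial-order}''). If anything, your argument is slightly sharper: you correctly observe that Definition~\ref{def:phi-infty} alone already gives the implication by taking $\BS{\tau}' = \BS{\tau}$, with Prop.~\ref{th:tr-partial-order} playing only a background role in justifying why $\phi(\infty)$ is well-behaved rather than being strictly necessary for the corollary itself.
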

 
\begin{proposition}
Let $\phi(\BS{\tau}')$ and $\phi(\BS{\tau}'')$ be two feasible relaxed formulae.
If $\BS{\tau}' \leq \BS{\tau}''$, then $\norm{\phi(\BS{\tau}')} \leq \norm{\phi(\BS{\tau}'')}$.
\end{proposition}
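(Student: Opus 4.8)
The statement to prove is: if $\BS{\tau}' \leq \BS{\tau}''$ and both $\phi(\BS{\tau}')$ and $\phi(\BS{\tau}'')$ are feasible, then $\norm{\phi(\BS{\tau}')} \leq \norm{\phi(\BS{\tau}'')}$. The plan is to proceed by structural induction on the abstract syntax tree $AST(\phi)$, mirroring the recursive definition of $\norm{\cdot}$ in Eq.~\eqref{eq:bound-def}. The key bookkeeping point is that the components of $\BS{\tau}$ are indexed by the {\em within} operators of $\phi$; so when $\phi$ decomposes into subformulae, the vector $\BS{\tau}$ splits into corresponding subvectors, and the inequality $\BS{\tau}' \leq \BS{\tau}''$ restricts to each subvector componentwise. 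I would set up this indexing carefully at the start, so the induction hypothesis applies cleanly to each child in the AST.

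First I would handle the base case: $\phi \in \{H^d s, H^d \notltl s\}$ contains no {\em within} operator, so $\BS{\tau}$ is the empty vector, $\phi(\BS{\tau}') = \phi(\BS{\tau}'') = \phi$, and $\norm{\phi(\BS{\tau}')} = d = \norm{\phi(\BS{\tau}'')}$. For the inductive step I would go case by case on the top-level operator. For $\phi = \phi_1 \andltl \phi_2$ or $\phi_1 \orltl \phi_2$: by the induction hypothesis $\norm{\phi_i(\BS{\tau}'_i)} \leq \norm{\phi_i(\BS{\tau}''_i)}$ for $i=1,2$, and since $\max$ is monotone in each argument, $\max(\norm{\phi_1(\BS{\tau}'_1)}, \norm{\phi_2(\BS{\tau}'_2)}) \leq \max(\norm{\phi_1(\BS{\tau}''_1)}, \norm{\phi_2(\BS{\tau}''_2)})$. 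The concatenation case $\phi = \phi_1 \cdot \phi_2$ is analogous, using that $x \mapsto x + y$ is monotone and adding the constant $1$ preserves the inequality. The negation case $\phi = \notltl \phi_1$ follows immediately since $\norm{\notltl \phi_1} = \norm{\phi_1}$.

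The only case with genuine content is the {\em within} operator $\phi = [\phi_1]^{[a,b]}$. Here $\BS{\tau}$ has one distinguished component, say $\tau_k$, attached to this outermost {\em within}, plus the subvector $\BS{\tau}_1$ for the {\em within} operators inside $\phi_1$. Then $\phi(\BS{\tau}) = [\phi_1(\BS{\tau}_1)]^{[a, b+\tau_k]}$, so by Eq.~\eqref{eq:bound-def} we have $\norm{\phi(\BS{\tau}')} = b + \tau'_k$ and $\norm{\phi(\BS{\tau}'')} = b + \tau''_k$. Since $\BS{\tau}' \leq \BS{\tau}''$ gives $\tau'_k \leq \tau''_k$, we conclude $\norm{\phi(\BS{\tau}')} \leq \norm{\phi(\BS{\tau}'')}$. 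Note that this case does not even need the induction hypothesis on $\phi_1$: the time bound of a {\em within} formula depends only on its own upper deadline, not on the inner structure — which is consistent with Eq.~\eqref{eq:bound-def} returning simply $b$ for $[\phi_1]^{[a,b]}$.

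**Main obstacle.** The mathematics here is routine monotonicity; the only thing requiring care is the bookkeeping of how $\BS{\tau}$ partitions among the subformulae of $\phi$ and why $\BS{\tau}' \leq \BS{\tau}''$ descends to each sub-part — in particular, making sure the feasibility hypothesis on $\phi(\BS{\tau}')$ and $\phi(\BS{\tau}'')$ is inherited by the subformulae so the induction hypothesis is legitimately applicable. I expect the proof to be short and to say essentially ``by structural induction on $\phi$, using the monotonicity of $\max$ and $+$, together with the fact that for $[\phi_1]^{[a,b+\tau_k]}$ the time bound is $b+\tau_k$, which is monotone in $\tau_k$.''
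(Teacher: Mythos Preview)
Your proposal is correct and follows essentially the same approach as the paper: structural induction over $AST(\phi)$ using the recursive definition of $\norm{\cdot}$ in Eq.~\eqref{eq:bound-def}, with the Boolean and concatenation cases handled by monotonicity of $\max$ and $+$, and the {\em within} case by $\norm{[\phi_1]^{[a,b+\tau_k]}} = b+\tau_k$. The paper's own proof is a one-line reference to this induction (pointing back to the analogous argument for Prop.~\ref{th:tr-partial-order}), so your write-up is in fact a fuller version of what the paper sketches.
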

\begin{proof}
The result follows by structural induction from Eq.~\eqref{eq:bound-def}
using a similar argument as in the proof of Prop.~\ref{th:tr-partial-order},
see App.~\ref{app:tr-partial-order}
\end{proof}


An important observation about TWTL
is that the accepted languages corresponding
to formulae are finite languages.
In the following, we characterize such languages
in terms of the associated automata.

\begin{definition}
A DFA is called {\em strict} if and only if
(i) the DFA is blocking,
(ii) all states reach a final state, and
(ii) all states are reachable from the initial state.
\end{definition}

\begin{proposition}
\label{th:convert2strict}
Any DFA $\FA$ may be converted to a strict DFA in $O(\card{S_\FA})$ time.
\end{proposition}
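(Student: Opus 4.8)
The plan is to give a constructive algorithm that performs three cleanups on a given DFA $\FA = (S_\FA, s_0, \Sigma, \delta_\FA, F_\FA)$ and to argue each runs in linear time while preserving the accepted language. The three conditions to enforce are: (i) the transition function is a partial function (blocking); (ii) every state can reach a final state; (iii) every state is reachable from $s_0$. Crucially, none of these changes the accepted language: removing unreachable states cannot remove any accepting run, and removing states that cannot reach $F_\FA$ only deletes runs that were doomed to reject anyway, which a blocking automaton rejects identically.

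First I would make the automaton blocking. A standard complete DFA has a ``trap'' (sink) state $\trap \notin F_\FA$ with $\delta_\FA(\trap, \sigma) = \trap$ for all $\sigma$, and possibly other states from which $F_\FA$ is unreachable. The step is: delete every transition that leads into the set of states that cannot reach a final state; equivalently, restrict $\delta_\FA$ to the subautomaton induced by the ``co-reachable'' states. This simultaneously handles (i) and (ii). To compute the co-reachable set, I would run a single backward graph search (BFS/DFS) from $F_\FA$ along reversed edges; building the reverse adjacency lists takes $O(\card{S_\FA} \cdot \card{\Sigma})$, and since $\Sigma$ is fixed this is $O(\card{S_\FA})$. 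Then I would discard all non-co-reachable states and all transitions touching them. After this, by construction every remaining state reaches a final state, and any transition that used to point at a discarded state is now undefined, so the automaton is blocking.

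Next I would enforce (iii) by a single forward reachability search from $s_0$ in the already-pruned automaton, again $O(\card{S_\FA})$, discarding any state not encountered. This cannot disturb (i) or (ii): deleting unreachable states removes no transitions among reachable states, and every reachable state still has its path to $F_\FA$ intact (that path lies entirely among co-reachable states, and all of them on a path from a reachable state are themselves reachable). One edge case: if $s_0$ itself is not co-reachable — i.e. $\CA{L}(\FA) = \emptyset$ — the construction yields the empty automaton (or a single blocking non-accepting state), which is still a strict DFA accepting $\emptyset$; I would note this explicitly. The language-preservation claim is the routine verification: a word $w$ is accepted by the original $\FA$ iff its run ends in $F_\FA$; such a run only visits co-reachable states (the final one is in $F_\FA$, hence co-reachable, and every predecessor on the run reaches it) and only visits reachable states (it starts at $s_0$), so it survives both prunings unchanged; conversely the pruned automaton is a subautomaton of $\FA$, so it accepts no new words.

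The main obstacle — really the only subtle point — is being careful about the order and the interaction of the two prunings: one must prune co-reachable states \emph{first} and reachable states \emph{second}, because pruning for reachability can never re-break co-reachability, whereas the reverse order could leave a reachable-but-not-co-reachable state. I would state this ordering explicitly and give the short argument above that the second pass preserves the invariants established by the first. The complexity bound $O(\card{S_\FA})$ then follows since each of the (constant number of) graph searches is linear in the number of states and edges, and the number of edges is $\card{S_\FA}\cdot\card{\Sigma}$ with $\card{\Sigma}$ treated as a constant (or, if one prefers to keep $\card{\Sigma}$ explicit, the bound is $O(\card{S_\FA}\cdot\card{\Sigma})$, which the paper elsewhere treats as $O(\card{S_\FA})$).
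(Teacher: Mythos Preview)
Your proposal is correct and takes essentially the same approach as the paper's proof: a forward BFS/DFS from the initial state to identify reachable states and a backward BFS/DFS from the final states to identify co-reachable states, each linear because the transition relation has at most $\card{\Sigma}\cdot\card{S_\FA}$ edges. Your treatment is more careful than the paper's (you explicitly address language preservation and the blocking condition); one small remark is that your claim about the pruning order being essential is not actually needed---either order yields the same set, since any state that is both reachable and co-reachable lies on an accepting run all of whose states are also reachable and co-reachable---but this does not affect the validity of your argument.
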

\begin{proof}
States unreachable from the initial state can be identified
by traversing the automaton graph from the initial state using
either breath- or depth-first search. Similarly, the states
not reaching a final state
can be removed by traversing the automaton graph
using the reverse direction of the transitions.
Both operations take at most $O(\card{\delta_\FA}) = O(\card{S_\FA})$,
since there are at most $\card{\Sigma}$ transitions outgoing from
each state, where $\Sigma$ is the alphabet of $\FA$.
\end{proof}

Note that a strict DFA is not necessarily minimal with respect to
the number of states.

\begin{proposition}
\label{th:finite-dag}
If $\CA{L}$ is a finite language over an alphabet $\Sigma$,
then the corresponding strict DFA is a directed acyclic graph (DAG).
Moreover, given a (general) DFA $\FA$, checking if its associated language
$\CA{L}(\FA)$ is finite takes $O(\card{S_\FA})$ time.
\end{proposition}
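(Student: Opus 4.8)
The plan is to treat the two assertions in turn, in both cases reducing to a \emph{strict} DFA via Prop.~\ref{th:convert2strict} and exploiting the fact that in a strict DFA every state lies on some accepting run.

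\textbf{Finite language implies the strict DFA is a DAG.} I would argue by contradiction. Let $\FA$ be the strict DFA with $\CA{L}(\FA) = \CA{L}$, and suppose its transition graph contains a directed cycle through some state $s$. Since $\FA$ is strict, $s$ is reachable from $s_0$, say $s_0 \ras{u}_\FA s$, and some accepting state is reachable from $s$, say $s \ras{v}_\FA f$ with $f \in F_\FA$; let $w$ be the nonempty word read along the cycle, so $s \ras{w}_\FA s$. Then $u\, w^k\, v \in \CA{L}(\FA)$ for every $k \geq 0$, and because $w \neq \epsilon$ these words have strictly increasing length, hence are pairwise distinct. This exhibits infinitely many words in $\CA{L}$, contradicting finiteness. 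Therefore the strict DFA has no cycle, i.e., it is a DAG. For the converse direction (needed below), if the transition graph of a DFA is a DAG then every run is a simple path, so every accepted word has length at most $\card{S_\FA}-1$; hence $\CA{L}(\FA) \subseteq \bigcup_{k=0}^{\card{S_\FA}-1}\Sigma^k$ is finite.

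\textbf{Deciding finiteness in $O(\card{S_\FA})$ time.} Given an arbitrary DFA $\FA$, first convert it to a strict DFA $\FA'$ recognizing the same language; by Prop.~\ref{th:convert2strict} this costs $O(\card{S_\FA})$ and does not increase the state count. By the previous paragraph and its converse, $\CA{L}(\FA)=\CA{L}(\FA')$ is finite if and only if the transition graph of $\FA'$ is acyclic. Acyclicity of a directed graph is decidable in time linear in its size, e.g.\ by a depth-first traversal that reports a back edge, or equivalently by attempting a topological ordering. Since each state has at most $\card{\Sigma}$ outgoing transitions, the graph has $O(\card{S_{\FA'}}) = O(\card{S_\FA})$ edges (treating $\card{\Sigma}$ as a constant, as in Prop.~\ref{th:convert2strict}), so the test runs in $O(\card{S_\FA})$ total.

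I do not anticipate a real difficulty; the one point that must not be skipped is that the pumping argument genuinely requires strictness — in a general DFA a cycle sitting among states that can never reach an accepting state contributes nothing to the language — which is precisely why conditions (i)--(iii) of the strict-DFA definition are invoked before arguing. The remaining ingredients (standard cycle detection, and the bookkeeping that bounds the edge count by $O(\card{S_\FA})$) are routine.
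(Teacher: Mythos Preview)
Your proposal is correct and follows essentially the same approach as the paper: a contradiction argument that pumps a cycle (using strictness to guarantee the cycle is on an accepting run) for the first claim, and topological sorting / cycle detection with the $O(\card{S_\FA})$ edge bound inherited from Prop.~\ref{th:convert2strict} for the second. If anything you are slightly more careful than the paper, since you explicitly supply the converse (DAG $\Rightarrow$ finite) and the reduction to a strict DFA before testing acyclicity, both of which the paper leaves implicit.
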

\begin{proof}
For the first part, assume for the sake of contradiction
that $\FA$ has a cycle. Then,
we can form words
in the accepted language by traversing the cycle $n \in \BB{Z}_{\geq 0}$
times before going to a final state. Note that the states in
the cycle are reachable from the initial state and also reach
a final state, because $\FA$ is a strict DFA. It follows that
$\CA{L}$ is infinite, which contradicts the hypothesis.
Checking if a DFA $\FA$ is DAG takes $O(\card{S_\FA})$
by using a topological sorting algorithm, because of the
same argument as in Prop.~\ref{th:convert2strict}.
\end{proof}

\begin{corollary}
\label{th:unambiguous-dfa-final}
Let $\CA{L}$ be a finite unambiguous language
over the alphabet $\Sigma$ and $\FA$
be its corresponding strict DFA.
The following two statements hold:
\begin{enumerate}
  \item if $s \in F_\FA$, then the set of outgoing transitions of $s$ is empty.
  \item $\FA$ may be converted to a DFA with only one final states.
\end{enumerate}
\end{corollary}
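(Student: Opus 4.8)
The plan is to prove the two statements in order, using the fact (Prop.~\ref{th:unambiguous-concat} and the discussion after Def.~\ref{def:unambiguous-lang}) that in an unambiguous language no word is a proper prefix of another, together with the structural results on strict DFAs (Prop.~\ref{th:convert2strict} and Prop.~\ref{th:finite-dag}).

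\emph{Statement (1).} Suppose for contradiction that $s \in F_\FA$ has a nonempty set of outgoing transitions, say $s \ras{\sigma}_\FA s'$. Since $\FA$ is strict, $s$ is reachable from the initial state via some input word $\BS{\sigma}$, so $\BS{\sigma}$ is accepted, i.e., $\BS{\sigma} \in \CA{L}$. Also since $\FA$ is strict, $s'$ reaches a final state via some (possibly empty) input word $\BS{\sigma}''$; then $s_0 \ras{\BS{\sigma}}_\FA s \ras{\sigma}_\FA s' \ras{\BS{\sigma}''}_\FA s_f$ with $s_f \in F_\FA$, so $\BS{\sigma} \sigma \BS{\sigma}'' \in \CA{L}$. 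But then $\BS{\sigma}$ is a proper prefix of the strictly longer word $\BS{\sigma} \sigma \BS{\sigma}''$, both in $\CA{L}$, contradicting that $\CA{L}$ is unambiguous. Hence the set of outgoing transitions of $s$ is empty.

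\emph{Statement (2).} By statement (1), every final state of $\FA$ is a sink (no outgoing transitions). By Prop.~\ref{th:finite-dag}, since $\CA{L}$ is finite, the strict DFA $\FA$ is a DAG, so all final states are among its sinks. The plan is to merge all states in $F_\FA$ into a single new state $s_f^\star$: redirect every transition $s \ras{\sigma}_\FA s''$ with $s'' \in F_\FA$ to $s \ras{\sigma}_\FA s_f^\star$, delete the old final states, and set $F_\FA = \{s_f^\star\}$. This is well defined as a (partial) transition function because no two old final states had outgoing transitions to merge or conflict with, and it does not change the accepted language: a run now ends in $s_f^\star$ exactly when it previously ended in some state of $F_\FA$. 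Determinism is preserved since we only rename targets, never add a second $\sigma$-successor to any state. The resulting automaton may fail to be strict only in the trivial sense of state naming, and can be re-strictified in $O(\card{S_\FA})$ by Prop.~\ref{th:convert2strict} if desired.

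\emph{Main obstacle.} The only subtle point is verifying that the merge in statement~(2) keeps the transition function deterministic and partial-function-valued, rather than accidentally identifying two distinct required successors; this is exactly where statement~(1) is needed, since it guarantees the merged states contribute no outgoing edges, so the merge touches only incoming edges and cannot create a branching conflict. Everything else is routine bookkeeping on the automaton graph.
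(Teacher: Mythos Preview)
Your proof is correct and follows essentially the same approach as the paper: for (1) you argue by contradiction that an outgoing transition from a final state would produce two accepted words, one a proper prefix of the other, violating unambiguity; for (2) you merge the final states and observe that the accepted language is unchanged. Your treatment of (2) is in fact more careful than the paper's, which simply asserts that the merge preserves the language without discussing determinism; your invocation of Prop.~\ref{th:finite-dag} is not strictly needed, since statement~(1) alone already guarantees the merge is safe.
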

\begin{proof}
Consider a final state $s \in F_\FA$.
Assume that there exists $s' \in S_\FA$ such that
$s \ras{\sigma}_\FA s'$, where $\sigma \in \Sigma$.
Since $\FA$ is strict, it follows that there is another
final state $s'' \in F_\FA$ which can be reached from
$s'$. Next, we form the words $w$ and $w'$ leading to $s$
and $s''$ passing trough $s'$, respectively. Clearly, $w$ is
a prefix of $w'$, which implies that $\CA{L}$ is not an
unambiguous language.
The second statement follows from the first
by noting that in this case, merging all final states does not
change the accepted language of the DFA $\FA$.
\end{proof}

\section{Automata construction}
\label{sec:construction}

In this section, we present a recursive procedure
to construct DFAs for TWTL formulae and their
temporal relaxations. The resulting DFA are used
in Sec.~\ref{sec:algorithms} to solve the proposed
problems in Sec.~\ref{sec:problems}.

Throughout the paper, a TWTL formula is assumed to have the following properties: 
\begin{assumption}
\label{assump:spec}
Let $\phi$ be a TWTL. Assume that
(i) negation operators are only in front of atomic propositions,
and (ii) all sub-formulae of $\phi$ correspond to unambiguous languages.
\end{assumption}

The second part (ii) of Assump.~\ref{assump:spec} is a desired property of specifications
in practice, because it is related to the tracking of progress towards the satisfaction
of the tasks.
More specifically, if (ii) holds, then the end of each sub-formula can be
determined unambiguously, i.e., without any look-ahead.

\subsection{Construction Algorithm}

In~\cite{vasile2014}, a TWTL formula $\phi$ is translated to
an equivalent scLTL formula, and then an off-the-shelf tool,
such as {\em scheck}~\cite{Latvala03} and {\em spot}~\cite{duret.13.atva},
is used to obtain the corresponding DFA.
In this paper, we propose an alternative construction,
shown in Alg.~\ref{alg:twtl2dfa}, with two main advantages:
($i$) the proposed algorithm is optimized for TWTL formulae
so it is significantly faster than the method used in~\cite{vasile2014},
and ($ii$) the same algorithm can be used to construct a special DFA,
which captures all $\tau$-relaxations of $\phi$,
i.e., the DFA $\FA_\infty$ corresponding to $\phi(\infty)$.

\begin{algorithm}
\caption{Translation algorithm -- $translate(\cdot)$}
\label{alg:twtl2dfa}
\DontPrintSemicolon
\KwIn{$\phi$ -- the specification as a TWTL formula in DFW form}
\KwOut{$\FA$ -- translated DFA}
\BlankLine
\uIf{$\phi = \phi_1 \otimes \phi_2$, where $\otimes \in \{\andltl, \orltl, \cdot\}$}{
  $\FA_1 \asgn translate(\phi_1)$, $\FA_2 \asgn translate(\phi_2)$\;
  $\FA \asgn \varrho_{\otimes}(\FA_1, \FA_2)$\;
}
\uElseIf{$\phi = H^d p$, where $p \in \{s, \notltl s\}$ and $s \in \AP$}{
  $\FA \asgn \varrho_{H}(p, d, \AP)$
}
\ElseIf{$\phi = [\phi_1]^{[a, b]}$}{
  $\FA_1 \asgn translate(\phi_1)$\;
  \lIf{\it inf}{
    $\FA \asgn \varrho_{\infty}(\FA_1, a, b)$\;
  }\lElse{
    $\FA \asgn \varrho_{[\,]}(\FA_1, a, b)$\;
  }
}
\Return $\FA$
\end{algorithm}

Alg.~\ref{alg:twtl2dfa} constructs the DFA recursively
by traversing $AST(\phi)$ computed via an LL(*) parser~\cite{Hopcroft2006,Parr07}
from the leaves to the root.
If the parameter {\it inf} is true, then the returned DFA
is an annotated DFA $\FA_\infty$ corresponding to $\phi(\infty)$;
otherwise a normal DFA $\FA$ is returned.
Each operator has an associated algorithm $\varrho_\otimes$
with $\otimes \in \{\andltl, \orltl, \cdot, H, \infty, [\,]\}$,
which takes the DFAs corresponding to the operands
(subtrees of the operator node in the AST) as input.
Then, $\varrho_\otimes$ returns the DFA that accepts
the formula associated with the operator node.
In the following, we present elaborate on
all operators and related operations, such as
annotating a DFA, relabeling the states of a DFA,
or returning the truncated version of a DFA with
respect to some given bound.

\subsection{Annotation}
\label{sec:annotation}

The algorithms presented in this section use DFAs with some additional annotation.
In this subsection, we introduce a annotated DFA and two algorithms,
Alg.~\ref{alg:relabel} and Alg.~\ref{alg:relabel-tree}, that are used to (re)label
DFAs and the associated annotation data, respectively.

We assume the following conventions to simplify
the notation:
(i) there is a global boolean variable {\it inf} accessible by all algorithms, which
specifies whether the normal or the annotated DFAs are to be computed;
(ii) in all algorithms, we have $\Sigma=\spow{\AP}$;
(iii) an element of $\sigma \in \Sigma$ is called a {\em symbol}
and is also a set of atomic propositions, $\sigma \subseteq \AP$;
(iv) a symbol $\sigma$ is called {\em blocking} for a state $s$
if there is no outgoing transition from $s$ activated by $\sigma$.

\subsubsection{Annotation}

An {\em annotated DFA} is a tuple $\FA = (S_\FA, s_0, \Sigma, \delta, F_\FA, T_\FA)$,
where the first five components have the same meaning as in Def.~\ref{def:dfa} and
$T_\FA$ is a tree that corresponds to the AST of the formula associated with the DFA.
Each node $T$ of the tree contains the following information:
\begin{enumerate}
    \item $T.op$ is the operation corresponding to $T$;
    \item $T.I$ is the set of initial states of the automaton corresponding to $T$;
    \item $T.F$ is the set of final states of the automaton corresponding to $T$;
    \item $T.left$ and $T.right$ are the left and right child nodes of $T$, respectively.
\end{enumerate}
Additionally, if $T.op$ is $\orltl$ (disjunction), then $T$
has another attribute $T.choice$, which is explained in
Sec.~\ref{sec:op-conj-disj}.

Note that the associated trees are set to $\emptyset$
and are ignored, if the normal DFAs are computed, i.e.,
{\it inf} is false.

The labels of the states change during the construction of the automata. Alg.~\ref{alg:relabel-tree} is used
to update the labels stored in the data structures of
the tree. The algorithm takes the tree $T$ as input,
a mapping $m$ from the states to the new labels, and
a boolean value $e$ that specifies if the states are
mapped to multiple new states. The first step is to
convert the states' new labels to singleton sets if $e$ is false (line 1). Then, the algorithm proceeds
to process the tree recursively starting with $T$.
The mapping $m$ is then used to compute $t.I$
and $t.F$ by expanding each state to a set and then
computing the union of the corresponding sets (lines 5-6).
In the case of $op=\orltl$, the three sets $B$, $L$,
and $R$, which form the tuple $t.choices$ are also
processed. The elements of all three sets are pairs
of a state $s$ and a symbol $\sigma \in \Sigma$.
Alg.~\ref{alg:relabel-tree} converts the states of
all these pairs in the tree sets (lines 7-12).

\begin{algorithm}
\caption{$relabelTree(T, m, e)$}
\label{alg:relabel-tree}
\DontPrintSemicolon
\KwIn{$T$ -- a tree structure}
\KwIn{$m$ -- (complete) relabeling mapping}
\KwIn{$e$ -- boolean, true if $m$ maps states to sets of states}
\BlankLine

\lIf{$\neg e$}{
    $m(s) \asgn \{m(s)\}, \forall s$
}

$stack \asgn [T]$\;
\While{$stack\neq [\,]$}{
    $t \asgn stack.pop()$\;
    $t.I \asgn \bigcup_{s\in t.I}{m(s)}$\;
    $t.F \asgn \bigcup_{s\in t.F}{m(s)}$\;
    \If{$op=\orltl$}{
        $B, L, R \asgn t.choices$\;
        $B' \asgn \bigcup_{(s_B, \sigma) \in B} \{(s, \sigma) \ |\ s\in m(s_B)\}$\;
        $L' \asgn \bigcup_{(s_L, \sigma)\in L} \{(s, \sigma) \ |\ s\in m(s_L)\}$\;
        $R' \asgn \bigcup_{(s_R, \sigma) \in R} \{(s, \sigma) \ |\ s\in m(s_R)\}$\;
        $t.choices \asgn (B', L', R')$\;
    }
    \lIf{$t.left \neq \emptyset$}{
        $stack.push(t.left)$\;
    }
    \lIf{$t.right \neq \emptyset$}{
        $stack.push(t.right)$
    }
}
\end{algorithm}

\subsubsection{Relabeling a DFA}

The Alg.~\ref{alg:relabel} relabels the states of a DFA $\FA$
with labels given by the mapping $m$.
The map $m$ can be a partial function of the states.
The states not specified are labeled with integers
starting from $i_0$ in ascending order. If $m$ is empty, then all states
are relabeled with integers.
Lastly, if {\it inf} is true then the tree $T_\FA$ associated with the DFA
is also relabeled, otherwise it is set as empty.

\begin{algorithm}
\caption{$relabel(\FA, m, i_0)$}
\label{alg:relabel}
\DontPrintSemicolon
\KwIn{$\FA=(S_\FA, s_0, \Sigma, \delta, F_\FA)$ -- a DFA}
\KwIn{$m$ -- (partial) relabeling mapping}
\KwIn{$i_0$ -- start labeling index}
\KwOut{the relabeled DFA}
\BlankLine

\For{$s \in S_\FA$ s.t. $\nexists m(s)$}{
    $m(s) \asgn i_0$\;
    $i_0 \asgn i_0 + 1$\;
}
$S'_\FA \asgn \{m(s) \ |\ s \in S_\FA\}$\;
$\delta' \asgn \{m(s)\ras{\sigma}_\FA m(s') \ |\ s\ras{\sigma}_\FA s'\}$\;
$F'_\FA \asgn \{m(s) \ |\ s\in F_\FA\}$\;
\lIf{{\it inf}}{
    $T'_\FA \asgn relabelTree(T_\FA, m)$\;
}\lElse{
   $T'_\FA \asgn \emptyset$\;
}
\Return $(S'_\FA, m(s_0), \Sigma, \delta', F'_\FA, T'_\FA)$\;
\end{algorithm}

\subsection{Operators}

\subsubsection{Hold}
The DFA corresponding to a {\em hold} operator is constructed by Alg.~\ref{alg:hold}.
The algorithm takes as input an atomic proposition $s$ in positive or negative form,
a duration $d$, and the set of atomic propositions $\AP$.
The computed DFA has $d+2$ states (line 1) that are connected in series as
follows: (i) if $s$ is in positive form then the states are connected by all transitions
activated by symbols  which contain $s$ (lines 2-4);
and (ii) if $s$ is in negative form then the states are connected by all transitions
activated by symbols which do not contain $s$ (lines 5-7).
Lastly, if {\it inf} is true, a new leaf node is created (line 8).

\begin{algorithm}
\caption{$\varrho_H(p, d, \AP)$}
\label{alg:hold}
\DontPrintSemicolon
\KwIn{$p \in \{s, \notltl s\}$, $s \in \AP$}
\KwIn{$d$ -- hold duration}
\KwIn{$\AP$ -- set of atomic propositions}
\KwOut{DFA corresponding to $H^d p$}
\BlankLine
$S \asgn \{0, \ldots, d+1\}$\;
\eIf{$p=s$}{
   $\Sigma_ s \asgn \spow{AP} \setminus \spow{(\AP\setminus \{s\})}$\;
   $\delta \asgn \{ i \ras{\sigma}_\FA (i+1) \ |\ i\in \{0,\ldots, d\}, \sigma \in \Sigma_s \}$\;
}{
   $\Sigma_{\notltl s} \asgn \spow{(\AP\setminus \{s\})}$\;
   $\delta \asgn \{ i \ras{\sigma}_\FA (i+1) \ |\ i\in \{0,\ldots, d\}, \sigma \in \Sigma_{\notltl s} \}$\;
}
\lIf{\it inf}{
   $T \asgn tree(H^d, \emptyset, \emptyset, \{0\}, \{d+1\})$\;
}\lElse{
   $T \asgn \emptyset$\;
}
\Return $(S, 0, \spow{\AP}, \delta, \{d+1\}, T)$\;
\end{algorithm}

\subsubsection{Conjunction and disjunction}
\label{sec:op-conj-disj}

The construction for conjunction and disjunction operations is
based on the synchronous product construction and is similar
to the standard one~\cite{Hopcroft2006}.
However, $\varrho_\andltl$ and $\varrho_\orltl$ produce strict DFAs,
which only have one accepting state. Both algorithms recursively
construct the product automaton starting from the initial composite
state. In the following, we describe the details of the algorithms
separately.

{\it Conjunction:} The DFA corresponding to the conjunction operation
is constructed by Alg.~\ref{alg:conjunction}. The procedure is recursive
and the synchronization condition, i.e., the transition relation, is the
following: given two composite states $(s_1, s_2)$ and $(s'_1, s'_2)$,
there exists a transition from the first state to the second state if there
exists a symbol $\sigma$ such that:
(i) there exists pairwise transitions enabled by $\sigma$ in the two automata (lines 9-11),
i.e., $s_1 \ras{\sigma}_{\FA_1} s'_1$ and $s_2 \ras{\sigma}_{\FA_2} s'_2$;
(ii) one automaton reached a final state and the other has a
transition enabled by $\sigma$ (lines 5-8), i.e., either
(a) $s_1 = s'_1 = s_{f1}$ and $s_2 \ras{\sigma}_{\FA_2} s'_2$,
or (b) $s_1 \ras{\sigma}_{\FA_1} s'_1$ and $s_2 = s'_2 = s_{f2}$.
The first case covers the synchronous execution (simulation) of both $\FA_1$
and $\FA_2$ when a symbol is encountered.
The second case corresponds to the situation when the two automata
require words of different sizes to accept an input.
A simple example of this case is the DFA encoding $H^2 A \andltl H^3 B$
and the input word $\{A, B\}, \{A, B\}, \{A, B\}, \{B\}$, which clearly
satisfies the TWTL formula.

\begin{algorithm}
\caption{$\varrho_\andltl(\FA_1, \FA_2)$}
\label{alg:conjunction}
\DontPrintSemicolon
\KwIn{$\FA_1=(S_{\FA_1}, s_{01}, \Sigma, \delta_1, \{s_{f1}\}, T_{\FA_1})$ -- left DFA}
\KwIn{$\FA_2=(S_{\FA_2}, s_{02}, \Sigma, \delta_2, \{s_{f2}\}, T_{\FA_2})$ -- right DFA}
\KwOut{DFA corresponding to $\CA{L}(\FA_1) \cap \CA{L}(\FA_2)$}
\BlankLine

$S \asgn \{(s_{01}, s_{02})\}$, $E \asgn \emptyset$\;
$stack \asgn [(s_{01}, s_{02})]$\;
\While{$stack \neq [\,]$}{
    $\BS{s}=(s_1, s_2) \asgn stack.pop()$\;
    \uIf{$s_1 = s_{f1}$}{
        $S_n \asgn \{((s_1, s'_2), \sigma) \ |\ s_2 \ras{\sigma}_{\FA_2} s'_2 \}$\;
    }
    \uElseIf{$s_2 = s_{f2}$}{
        $S_n \asgn \{((s'_1, s_2), \sigma) \ |\ s_1 \ras{\sigma}_{\FA_1} s'_1 \}$\;
    }
    \Else{
        $S_n \asgn \{((s'_1, s'_2), \sigma) \ |\ \exists\sigma\in\Sigma \text{ s.t. }$\;
        \qquad\qquad\qquad $(s_1 \ras{\sigma}_{\FA_1} s'_1) \wedge (s_2 \ras{\sigma}_{\FA_2} s'_2 )\}$\;
    }
    $E \asgn E \cup \{(\BS{s}, \sigma, \BS{s'}) \ |\ (\BS{s'}, \sigma) \in S_n \}$\;
    $S' \asgn \{\BS{s'} \ |\ \exists\sigma \in \Sigma \text{ s.t. } (\BS{s'}, \sigma) \in S_n \}$\;
    $stack.extends(S' \setminus S)$\;
    $S \asgn S \cup S'$\;
}
$m_L = \big\{(u, \{ (u, v) \in S_\FA \}) \ |\ u\in S_{\FA_1} \big\}$\;
$m_R = \big\{(v, \{ (u, v) \in S_\FA \}) \ |\ v\in S_{\FA_2} \big\}$\;
$T_\FA \asgn tree(\andltl, relabelTree(T_{\FA_1}, m_L, \top),$\;
    \quad $relabelTree(T_{\FA_2}, m_R, \top), \{(s_{01}, s_{02})\}, \{(s_{f1}, s_{f2})\})$\;
$\FA \asgn (S, (s_{01}, s_{02}), \Sigma, E, \{(s_{f1}, s_{f2})\}, T_\FA)$\;
\Return $relabel(\FA, \emptyset, 0)$
\end{algorithm}

Note that Alg.~\ref{alg:conjunction} generates only composite states which are
reachable from the initial composite state $(s_{01}, s_{02})$.
The resulting automaton has a single final state $(s_{f1}, s_{f2})$ which captures
the fact that both automata must accept the input word in order for the product
automaton to accept it.

After the automaton is constructed, the corresponding tree is created (lines 16-19).
The child subtrees are taken from $\FA_1$ and $\FA_2$, and relabeled.
The relabeling mapping expands each state $s$ to the set of all composite states,
which have $s$ as the first or second component corresponding to
whether $s$ is a state of the left or right automaton, respectively.

{\it Disjunction:} The disjunction operations is translated using
Alg.~\ref{alg:disjunction}. The first step of the algorithm is to
add a trap state in each of the two automata $\FA_1$ and $\FA_2$ (line 1).
All states of an automaton, except the final state, are connected via
blocking symbols to the trap state $\trap$ (lines 3-4). The trap state has
self-transitions for all symbols.
Afterwards, the algorithm creates the synchronous product automaton in
the same way as for the conjunction operation (lines 4-13). However, in this
case, we do not need to treat composite states that contain a final state
of one of the two automata separately. This follows from the semantics
of the disjunction operation, which accepts a word as soon as at least
one automaton accepts the word.

\begin{algorithm}
\caption{$\varrho_\orltl(\FA_1, \FA_2)$}
\label{alg:disjunction}
\DontPrintSemicolon
\KwIn{$\FA_1=(S_{\FA_1}, s_{01}, \Sigma, \delta_1, \{s_{f1}\}, T_{\FA_1})$ -- left DFA}
\KwIn{$\FA_2=(S_{\FA_2}, s_{02}, \Sigma, \delta_2, \{s_{f2}\}, T_{\FA_2})$ -- right DFA}
\KwOut{DFA corresponding to $\CA{L}(\FA_1) \cup \CA{L}(\FA_2)$}
\BlankLine

$S'_{\FA_1} \asgn S_{\FA_1} \cup \{\trap\}$, $S'_{\FA_2} \asgn S_{\FA_2} \cup \{\trap\}$\;
$\delta'_1 \asgn \delta_1 \cup \{ (s, \sigma, \trap) \ |\ s\in S'_{\FA_1}\setminus \{s_{f1}\},\sigma\in\Sigma, \nexists \delta_1(s,\sigma) \}$\;
$\delta'_2 \asgn \delta_2 \cup \{ (s, \sigma, \trap) \ |\ s\in S'_{\FA_2}\setminus \{s_{f2}\},\sigma\in\Sigma, \nexists \delta_2(s,\sigma) \}$\;

$S \asgn \{(s_{01}, s_{02})\}$, $E \asgn \emptyset$\;
$stack \asgn [(s_{01}, s_{02})]$\;
\While{$stack \neq [\ ]$}{
    $\BS{s}=(s_1, s_2) \asgn stack.pop()$\;
    $S_n \asgn \{((s'_1, s'_2), \sigma) \ |\ \exists\sigma\in\Sigma \text{ s.t. }$\;
    \qquad\qquad\qquad $(s'_1 = \delta'_1(s_1, \sigma)) \wedge (s'_2 = \delta'_2(s_2, \sigma))\}$\;
    $E \asgn E \cup \{(\BS{s}, \sigma, \BS{s'}) \ |\ (\BS{s'}, \sigma) \in S_n \}$\;
    $S' \asgn \{\BS{s'} \ |\ \exists\sigma \in \Sigma \text{ s.t. } (\BS{s'}, \sigma) \in S_n \}$\;
    $stack.extends(S' \setminus S)$\;
    $S \asgn S \cup S'$\;
}
$B \asgn \{ (\BS{s}, \sigma) \ |\  \exists \sigma \text{ s.t. }  (\BS{s}, \sigma, (s_{f1}, s_{f2})) \in E \}$\;
$L \asgn \{ (\BS{s}, \sigma) \ |\  \exists s_2\neq s_{f2}, \exists\sigma \text{ s.t. } (\BS{s}, \sigma, (s_{f1}, s_2) \in E \}$\;
$R \asgn \{ (\BS{s}, \sigma) \ |\  \exists s_1\neq s_{f1}, \exists\sigma \text{ s.t. } (\BS{s}, \sigma, (s_1, s_{f2}) \in E \}$\;
$F \asgn \{ (s_1, s_2) \in S\ |\ (s_1=s_{f1}) \vee (s_2=s_{f2}) \}$\;
$S \asgn S \setminus (F \cup \{(\trap, \trap)\})$\;
$E \asgn E \setminus \{(\BS{s}, \sigma, \BS{s'}) \in E\ |\ \BS{s'} \in F \}$\;
$E \asgn E \cup \{(\BS{s}, \sigma, (s_{f1}, s_{f2})) \ |\ (\BS{s}, \sigma) \in B\cup L\cup R \}$\;

$m_L = \big\{(u, \{ (u, v) \in S_\FA \}) \ |\ u\in S_{\FA_1} \big\}$\;
$m_R = \big\{(v, \{ (u, v) \in S_\FA \}) \ |\ v\in S_{\FA_2} \big\}$\;
$T_\FA \asgn tree(\orltl, relabelTree(T_{\FA_1}, m_L, \top),$\;
    \quad $relabelTree(T_{\FA_2}, m_R, \top), \{(s_{01}, s_{02})\}, \{(s_{f1}, s_{f2})\})$\;
$T_\FA.choices \asgn (B, L, R)$\;
$\FA \asgn (S, (s_{01}, s_{02}), \Sigma, E, \{(s_{f1}, s_{f2})\}, T_\FA)$\;
\Return $relabel(\FA, \emptyset, 0)$
\end{algorithm}

In the standard construction~\cite{Hopcroft2006}, the resulting automaton
would have multiple final states, which are computed in line 17.
However, because finals states do not have outgoing transitions,
we can merge all final states and obtain an automaton with only one final
state (lines 17-20). The composite trap state is also removed from the set
of states (line 18).

The annotation tree is created similarly to the conjunction case (lines 21-24).
However, for the disjunction case, we add additional information on the automaton.
This information $T.choices$ is used in latter algorithm to determine if a
word has satisfied the left, right, or both sub-formulae corresponding to the
disjunction formula. This is done by partitioning the transitions incoming
into finals states (line 14-16) and storing this partition in the associated
tree node (line 25). Note that only the start state and the symbol of each
transition is stored in the partition sets and these are well defined,
because the DFAs are deterministic.

%
%

\subsubsection{Concatenation}
\label{sec:concatenation}

The algorithm to compute an automaton accepting the concatenation language
of two languages is shown in Alg.~\ref{alg:concatenation}. The special structure of
the unambiguous languages, see Sec.~\ref{sec:props} for details, admits a particularly
simple and intuitive construction procedure. The composite automaton is obtained
by identifying the final state of left automaton $\FA_1$ with the initial state of the
right automaton $\FA_2$.

\begin{algorithm}
\caption{$\varrho_\cdot(\FA_1, \FA_2)$}
\label{alg:concatenation}
\DontPrintSemicolon
\KwIn{$\FA_1=(S_{\FA_1}, s_{01}, \Sigma, \delta_1, \{s_{f1}\}, T_{\FA_1})$ -- left DFA}
\KwIn{$\FA_2=(S_{\FA_2}, s_{02}, \Sigma, \delta_2, \{s_{f2}\}, T_{\FA_2})$ -- right DFA}
\KwOut{DFA corresponding to $\CA{L}(\FA_1) \cdot \CA{L}(\FA_2)$}
\BlankLine

$\FA_1 \asgn relabel(\FA_1, \emptyset, 0)$\;
$\FA_2 \asgn relabel(\FA_2, \{(s_{02}, s_{f1})\}, \card{S_{\FA_1}})$\;
\lIf{{\it inf}}{
    $T \asgn tree(\cdot, T_{\FA_1}, T_{\FA_2}, \{s_{01}\}, \{s_{f2}\})$\;
}\lElse{
    $T \asgn \emptyset$\;
}
\Return $(S_{\FA_1} \cup S_{\FA_2}, s_{01}, \Sigma, \delta_1 \cup \delta_1, \{s_{f2}\}, T)$\;
\end{algorithm}

\subsubsection{Within}

There are two algorithms used to construct a DFA associated with a {\em within} operator,
Alg.~\ref{alg:within-inf} and Alg.~\ref{alg:within-normal} correspond to the relaxed
and normal construction (lines 6-9 of Alg.~\ref{alg:twtl2dfa}).

{\it Relaxed within:} 
The construction procedure Alg.~\ref{alg:within-inf} is as follows: starting from the
DFA corresponding to the enclosed formula, all states are connected via blocking symbols
to the initial state (lines 3-4).
The last step is to create a number of $a$ states connected in sequence
for all symbols, similarly to Alg.~\ref{alg:hold}, and connecting the $a$-th state to the initial
state also for all symbols (lines 5-8).

\begin{algorithm}
\caption{$\varrho_\infty(\FA, a, b)$}
\label{alg:within-inf}
\DontPrintSemicolon
\KwIn{$\FA=(S_\FA, s_0, \Sigma, \delta, \{s_f\}, T_\FA)$ -- child DFA}
\KwIn{$a$ -- lower bound of time-window}
\KwIn{$b$ -- upper bound of time-window}
\KwOut{computed DFA}
\BlankLine

$\FA \asgn relabel(\FA, \emptyset, 0)$\;
$S \asgn \emptyset$, $E \asgn \emptyset$\;
\For{$s \in S_\FA \setminus \{s_f\}$}{
    $E \asgn E \cup \{(s, \sigma, s_0) \ |\ \nexists s'=\delta(s, \sigma)\}$
}
\If{$a>0$}{
    $S \asgn \{\card{S_\FA}, \ldots,\card{S_\FA} + a-1 \}$\;
    $E \asgn E \cup \big\{(i, \sigma, i+1) \ |\ i\in S\setminus \{\card{S_\FA}+ a-1\}, \sigma \in \Sigma \big\}$\;
    $E \asgn E \cup \{(\card{S_\FA}+ a-1, \sigma, s_0) \ |\ \sigma \in \Sigma \}$\;
}
$T \asgn tree([\,]^{[a, b]}_\infty, T_\FA, \emptyset, \{\card{S_\FA}\}, \{s_f\})$\;
\Return $(S_\FA \cup S, \card{S_\FA}, \Sigma, \delta \cup E, \{s_f\}, T)$
\end{algorithm}

Connecting all states to the initial state represents a restart of the automaton in case a
blocking symbol was encountered. Thus, the resulting automaton offers infinite retries
for a word to satisfy the enclosed formula.
The $a$ states added before the initial state represent a delay of length $a$ for the
start of the tracking of the satisfaction of the enclosed formula.
Note that the procedure and resulting automaton do not depend on the upper bound $b$.

{\it Normal within:} The algorithm for the normal case builds upon Alg.~\ref{alg:within-inf}.
In this case
the construction procedure Alg.~\ref{alg:within-normal} must take into account the upper
time bound $b$. Similarly to the relaxed case, we need to restart the automaton of the
when a blocking symbol is encountered. However, there are two major differences:
(i) the automaton must track the number of restarts, because there are only a finite
number of tries depending on the deadline $b$, and
(ii) the automaton $\FA$ may need to be truncated for the last restart retries,
i.e., all paths must have a length of at most a given length, in order to ensure that
the satisfaction is realized before the upper time limit $b$.

\begin{algorithm}
\caption{$\varrho_{[\,]}(\FA, a, b)$}
\label{alg:within-normal}
\DontPrintSemicolon
\KwIn{$\FA=(S_\FA, s_0, \Sigma, \delta, \{s_f\}, T_\FA)$ -- child DFA}
\KwIn{$a$ -- lower bound of time-window}
\KwIn{$b$ -- upper bound of time-window}
\KwOut{computed DFA}
\BlankLine

$l \asgn Dijkstra(\FA, s_0, s_f)$\;
$p \asgn b - a - l + 2$\;
$I \asgn [\,]$ \tcp*[f]{list}\;
$n \asgn 0$\;
$\FA_r \asgn (S_{\FA_r}=\emptyset, \infty, \Sigma, \delta_r=\emptyset, \emptyset, \emptyset)$\;
\For{$k \in \{1,\ldots, p\}$}{
    $m \asgn \{(s_f, -1)\}$  \tcp*[f]{mark final state}\;
    $\FA_a \asgn relabel(\FA, m, n)$\;
    $\FA_t \asgn truncate(\FA_a, b-a+2-k)$\;
    $\FA_r \asgn (S_{\FA_r} \cup S_{\FA_t}, \infty, \Sigma, \delta_r \cup \delta_t, \{-1\},  \emptyset)$\;
    $I \asgn I + [s_{0t}]$\;
    $n \asgn n + \card{S_{\FA_t}}$\;
}
$S_c \asgn \{I[0]\}$, $E \asgn \emptyset$\;
\For{$s_r \in I[1:\,]$}{
    $S_n \asgn \emptyset$\;
    \For{$s \in S_c \setminus \{-1\}$}{
        $E \asgn E \cup \{ (s,\sigma,s_r) \ |\ \sigma\in\Sigma \text{ s.t. } \nexists \delta_r(s,\sigma)  \}$\;
    }
    $S_c \asgn S_c \cup \{s_r\}$\;
}
$S \asgn \emptyset$\;
\If{$a>0$}{
    $S \asgn \{\card{S_{\FA_r}}, \ldots,\card{S_{\FA_r}} + a-1 \}$\;
    $E \asgn E \cup \big\{(i, \sigma, i+1) \ |\ i\in S\setminus \{\card{S_{\FA_r}}+ a-1\}, \sigma \in \Sigma \big\}$\;
    $E \asgn E \cup \{(\card{S_{\FA_r}}+ a-1, \sigma, s_0) \ |\ \sigma \in \Sigma \}$\;
}
\Return $(S_{\FA_r} \cup S, I[0], \Sigma, \delta_r \cup E, \{-1\}, \emptyset)$
\end{algorithm}

In Alg.~\ref{alg:within-normal}, first the maximum number of restarts $p$ is computed
in lines 1-2.  Then, $p$ DFAs are created (lines 3-12), which correspond to the relabeled
and truncated copies of $\FA$, see Alg.~\ref{alg:truncate}, and their union is computed
iteratively. The truncation bound is computed as the remaining time units until the limit
$b$ is reached. The final state is always labeled with $-1$ (line 7) and, therefore, the
resulting DFA has exactly one final state.
Next, the restart transitions are added (lines 13-18). Note that the transitions, enabled
by blocking symbols, lead to initial states of the proper restart automaton. For example,
if a blocking symbol was encountered after two symbols, then the restart transition
(if it exists) leads to the initial state of the fourth copy of the automaton.
Lastly, a delay of $a$ time units is added before the initial
state of the automaton similar to the relaxed case.

\subsubsection{Truncate}

Alg.~\ref{alg:truncate} takes as input a DFA $\FA$ and a cutoff bound
$l$ and returns a version of $\FA$ with all paths guaranteed to have length at most $l$.
The algorithm is based on a breath-first search and returns a strict DFA.

\begin{algorithm}
\caption{$truncate(\FA, l)$}
\label{alg:truncate}
\DontPrintSemicolon
\KwIn{$\FA=(S_\FA, s_0, \Sigma, \delta, \{s_f\}, T_\FA)$ -- a DFA}
\KwIn{$l$ -- cutoff value}
\KwOut{computed DFA}
\BlankLine

$S \asgn \{s_0\}$\;
$E \asgn \emptyset$\;
$L_n \asgn \{s_0\}$\;
\For{$i\in \{1,\ldots,l\}$}{
    $L_c \asgn L_n$\;
    $L_n \asgn \emptyset$\;
    \For{$s \in L_c$}{
        \For{$(s_c,\sigma_c) \in \{(s',\sigma) | \exists \sigma\in\Sigma \text{ s.t. } s\ras{\sigma}_\FA s' \}$}{
                $E \asgn E \cup (s, \sigma_c, s_c))$\;
                \If{$s_c \notin S$}{
                    $S \asgn S \cup \{s_c\}$\;
                    $L_n \asgn L_n \cup \{s_c\}$
                }
        }
    }
}
$\FA_t = (S_\FA, s_0, \Sigma, \delta\setminus E, \{s_f\}, T_\FA)$\;
$S_{traps} = \{s \in S_\FA | \nexists \BS{\sigma}\in\Sigma^* \text{ s.t. } s\ras{\BS{\sigma}}_{\FA_t} s_f\}$\;
\Return $(S_\FA \setminus S_{traps}, s_0, \Sigma, \delta\setminus E, \{s_f\}, T_\FA)$
\end{algorithm}

\subsection{Correctness}
\label{sec:correctness}

The following theorems show that the proposed algorithms
for translating TWTL formulae to (normal or annotated)
automata are correct.

\begin{theorem}
\label{th:twtl2dfa-correctness-infinity}
If $\phi$ is a TWTL formula satisfying Assump.~\ref{assump:spec}
and the global parameter {\it inf} is true,
then Alg.~\ref{alg:twtl2dfa} generates a DFA $\FA_\infty$
such that $\CA{L}(\FA_\infty) = \CA{L}(\phi(\infty))$.
\end{theorem}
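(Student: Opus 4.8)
The plan is to prove the statement by structural induction on $AST(\phi)$, following the recursion of Alg.~\ref{alg:twtl2dfa}. Since that algorithm expects its input in DFW form, I would first invoke Prop.~\ref{th:wdf} --- whose hypothesis is exactly Assump.~\ref{assump:spec}(i) --- to replace $\phi$ by an equivalent DFW formula; this rewriting only redistributes \emph{within} operators over disjunctions through Eq.~\eqref{eq:twtl-within-distrib-disj}, so it preserves the language of every relaxation and hence $\CA{L}(\phi(\infty))$. The induction hypothesis, applied to a sub-formula $\psi$, states that $translate(\psi)$ with {\it inf} true returns a deterministic automaton $\FA^\psi_\infty$ with $\CA{L}(\FA^\psi_\infty)=\CA{L}(\psi(\infty))$ that in addition has a single final state with no outgoing transitions. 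This structural invariant is used throughout the inductive step, and keeping it alive is where Assump.~\ref{assump:spec}(ii) enters, via Cor.~\ref{th:unambiguous-dfa-final}. The base case is $\psi = H^d p$ with $p\in\{s,\notltl s\}$: here $\psi$ has no \emph{within} operator so $\psi(\infty)=\psi$, and one reads off directly from the semantics of the \emph{hold} operator that the $(d{+}2)$-state chain produced by $\varrho_H$ accepts exactly the words of length $d{+}1$ whose every symbol contains $s$ (respectively avoids $s$); determinism, and the fact that the unique final state $d{+}1$ is a sink, are immediate.

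For the inductive step I would treat each operator separately, in each case splitting the work into a \emph{semantic} reduction and an \emph{algorithmic} correctness claim. Semantically, because a relaxation only reparametrizes \emph{within} windows, Def.~\ref{def:phi-infty}, Def.~\ref{def:eps-relax} and the monotonicity Prop.~\ref{th:tr-partial-order} reduce $\CA{L}((\phi_1\otimes\phi_2)(\infty))$ to the set operation on $\CA{L}(\phi_1(\infty))$ and $\CA{L}(\phi_2(\infty))$ dual to $\otimes$ (the duality recorded in Sec.~\ref{sec:props} and Prop.~\ref{th:twtl-props}: $\andltl\leftrightarrow\cap$, $\orltl\leftrightarrow\cup$, and $\cdot$ to language concatenation), and reduce $\CA{L}(([\phi_1]^{[a,b]})(\infty))$ to $\bigcup_{\tau \ge 0}\CA{L}\big([\phi_1(\infty)]^{[a,\,b+\tau]}\big)$, i.e.\ the enclosed task may finish at any time $\ge a$. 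Algorithmically, $\varrho_\andltl$ and $\varrho_\orltl$ are the usual synchronous product, the only nonstandard points being that $\varrho_\andltl$ parks one component at its sink final state while the other catches up (this realizes the ``words of different lengths'' clause of the $\andltl$-semantics), and that $\varrho_\orltl$ completes the two components with trap states, forms the product, and then merges the accepting states --- the merge is sound precisely because, by the IH invariant, accepting states are sinks --- after which one checks the output is again a DFA with a single sink final state. For $\varrho_\cdot$, identifying the unique sink final state of $\FA_1$ with the initial state of $\FA_2$ yields $\CA{L}(\FA_1)\cdot\CA{L}(\FA_2)$; this is exactly where Assump.~\ref{assump:spec}(ii), Prop.~\ref{th:unambiguous-concat} (unambiguity of $\CA{L}(\phi_1)$ forces a unique split of every accepted word) and Cor.~\ref{th:unambiguous-dfa-final} (which licenses the single-sink form of $\FA_1$) are used.

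The remaining and hardest case is the relaxed \emph{within} operator $\varrho_\infty$ (Alg.~\ref{alg:within-inf}). I would prove that a run of the constructed automaton on a word $w$ decomposes $w$ into a length-$a$ delay prefix, then a (possibly empty) list of aborted attempts --- each ending at a symbol that is blocking for the current state of the enclosed automaton and so triggers a reset to its initial state --- and finally a segment that drives $\FA^{\phi_1}_\infty$ into its sink final state; conversely, any word admitting such a decomposition is accepted. Comparing this with the semantic identity $\bigcup_{\tau \ge 0}\CA{L}([\phi_1(\infty)]^{[a,\,b+\tau]})$, the crux is that the greedy ``reset at the first blocking symbol'' policy implemented by the added edges places the automaton at the initial state of $\FA^{\phi_1}_\infty$ exactly at the \emph{earliest} admissible start time of the enclosed task, and that this earliest start is precisely the one selected by the $\arg\min$ in the semantics of the \emph{within} and \emph{concatenation} operators. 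Again the linchpin is Assump.~\ref{assump:spec}(ii): unambiguity of the enclosed language lets the automaton tell, without look-ahead, whether the enclosed task is in progress or has finished, so a reset never throws away a still-viable partial match, while the chain of $a$ states in front enforces the $\ge a$ delay. I expect reconciling this reset mechanism with the $\arg\min$-based semantics to be the main technical obstacle; the Boolean and concatenation cases, the preservation of determinism and of the single-sink invariant at each step, and the reduction to DFW form are comparatively routine.
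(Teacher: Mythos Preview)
Your proposal follows the same structural-induction skeleton on $AST(\phi)$ as the paper, with the same base case (the \emph{hold} leaves) and the same case split of the inductive step by operator; it is in fact considerably more explicit than the paper's own rather terse argument, and your up-front invocation of Prop.~\ref{th:wdf} to pass to DFW form is exactly what the paper assumes as input to Alg.~\ref{alg:twtl2dfa}.

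The one substantive divergence is the justification you give for the $\varrho_\infty$ case. You name Assump.~\ref{assump:spec}(ii) --- unambiguity of the enclosed language --- as the ``linchpin'' that makes the greedy reset-to-$s_0$ sound. The paper instead attributes correctness here to the DFW form, i.e.\ the absence of disjunction \emph{inside} the \emph{within}: with no $\orltl$ under the operator there are no alternative partial matches in the enclosed automaton, so returning to $s_0$ is the only possible restart state. These two properties are not interchangeable. Unambiguity is a prefix-freeness condition on accepted words; it tells the automaton when the enclosed task has \emph{finished} (and hence justifies Cor.~\ref{th:unambiguous-dfa-final} and the single-sink invariant you maintain), but it does not by itself rule out that the blocking symbol just consumed could have been the first symbol of a viable later attempt, nor that some non-initial state would be the correct restart point. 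The paper's DFW argument is what addresses that. Since you already put $\phi$ in DFW form before the induction, you have the needed hypothesis in hand --- you should simply cite the DFW reduction (no disjunction under \emph{within}), rather than Assump.~\ref{assump:spec}(ii), when arguing that the reset in $\varrho_\infty$ never discards a still-viable restart. With that adjustment your plan coincides with the paper's.
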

\begin{proof}
The proof follows by structural induction on $AST(\phi)$ and the
properties of TWTL languages.

Before we proceed with the induction, notice that all 
construction algorithms associated with the operators of TWTL
generate strict DFAs with only one final state without any
outgoing transitions.

The base case corresponds to the leaf nodes of $AST(\phi)$
which are associated with {\em hold} operators, see
Fig.~\ref{fig:ast-example}, and follows by construction from
Alg.~\ref{alg:hold}.

The induction hypothesis requires that the theorem holds
for the DFAs returned by the recursion in Alg.~\ref{alg:twtl2dfa}.
In the case of the conjunction and disjunction operators,
the property follows from the product construction
method~\cite{Hopcroft2006}.
The theorem holds also for the concatenation operator, because:
(a) the returned DFAs have one final state without any
outgoing transitions, and
(b) the languages corresponding to the two operand formulae
are unambiguous.
Thus, the correctness of the construction described in
Alg.~\ref{alg:concatenation} follows immediately from the
unambiguity of the concatenation, see
Def.~\ref{def:unambiguous-concat}.
Lastly, the case of the {\em within} operator (relaxed form),
follow from the Assump.~\ref{assump:spec}.
The {\em within} operator adds transitions to a DFA from each state
to the initial state on all undefined symbols.
In other words, the operator restarts the execution of a DFA
from the initial state.
If there are no disjunction operators, then going back to
the initial state is the only correct choice.
Otherwise, because of alternative paths induced by
disjunction, there might be other states from which the DFA
might need to go back to in order to correctly restart.
\end{proof}

\begin{theorem}
\label{th:twtl2dfa-correctness}
If $\phi$ is a TWTL formula satisfying Assump.~\ref{assump:spec}
and the global parameter {\it inf} is false,
then Alg.~\ref{alg:twtl2dfa} generates DFA $\FA$
such that $\CA{L}(\FA) = \CA{L}(\phi)$.
\end{theorem}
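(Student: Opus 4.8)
The plan is to prove the statement by structural induction on $AST(\phi)$, in close parallel with the proof of Thm.~\ref{th:twtl2dfa-correctness-infinity}, and to isolate the one case that genuinely changes when {\em inf} is false, namely the {\em within} operator. First I would observe that for the Boolean operators and the concatenation operator, Alg.~\ref{alg:twtl2dfa} invokes the same subroutines $\varrho_\andltl$, $\varrho_\orltl$, $\varrho_\cdot$ regardless of the value of {\em inf}; the only effect of {\em inf} there is whether the annotation tree $T_\FA$ is built, and $T_\FA$ plays no role in the accepted language $\CA{L}(\FA)$. Hence the language-correctness arguments for these cases are word-for-word those already given: the product construction for $\andltl$ and $\orltl$, and the identification of the unique final state of $\FA_1$ with the initial state of $\FA_2$ for $\cdot$, whose correctness is guaranteed by Def.~\ref{def:unambiguous-concat} and Prop.~\ref{th:unambiguous-concat} under Assump.~\ref{assump:spec}(ii). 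The base case ($\phi = H^d p$) follows directly from the construction of $\varrho_H$ in Alg.~\ref{alg:hold}: the chain of $d+2$ states accepts precisely the words whose first $d+1$ symbols all contain (respectively avoid) $p$, which by the semantics of the {\em hold} operator is $\CA{L}(H^d p)$.

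The heart of the proof is the {\em within} case $\phi = [\phi_1]^{[a,b]}$, which for {\em inf} false is handled by $\varrho_{[\,]}$ (Alg.~\ref{alg:within-normal}) rather than $\varrho_\infty$ (Alg.~\ref{alg:within-inf}). Using the induction hypothesis $\CA{L}(\FA_1) = \CA{L}(\phi_1)$, I would establish $\CA{L}(\varrho_{[\,]}(\FA_1,a,b)) = \CA{L}([\phi_1]^{[a,b]})$ by decomposing any accepted word as: a prefix of length $a$ traversing the delay states (which realizes the clause $t \ge t_1 + a$ of the semantics), followed by a possibly empty block of ``failed attempts'', each consuming at least one symbol and ending in a blocking symbol that triggers a restart transition into a later truncated copy of $\FA_1$, followed by one ``successful attempt'' that drives some copy to the shared final state $-1$. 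The three facts to check are: (i) the number of copies $p = b - a - l + 2$, where $l$ --- computed by the Dijkstra call in Alg.~\ref{alg:within-normal} --- is the length of a shortest word in $\CA{L}(\phi_1)$, is exactly the number of distinct start offsets a successful attempt can have inside the window $[t_1+a,\,t_1+b]$; (ii) truncating the $k$-th copy (via Alg.~\ref{alg:truncate}) to paths of length at most $b-a+2-k$, so that the last copy is truncated to exactly $l$, forces a successful attempt beginning after $k-1$ wasted symbols to terminate no later than relative time $b$, which is precisely the requirement $\BF{o}_{t,t_1+b}\models\phi_1$; and (iii) Assump.~\ref{assump:spec} makes the restart mechanism sound --- absence of disjunction inside the {\em within} operator (DFW form, Prop.~\ref{th:wdf}) means returning to the initial state of $\FA_1$ is the unique correct way to restart, while unambiguity of $\CA{L}(\phi_1)$ together with Cor.~\ref{th:unambiguous-dfa-final} guarantees that the end of a satisfying body is detectable without look-ahead, so a blocking symbol genuinely signals that the current attempt has failed. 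Both inclusions then follow by tracing runs: a word in $\CA{L}([\phi_1]^{[a,b]})$ picks out an offset $t$ with $t_1+a \le t$ and $\BF{o}_{t,t_1+b}\models\phi_1$, and one checks the run consuming $t-t_1-a$ symbols in the failed-attempt copies before entering the appropriate truncated copy is well defined and accepting; conversely, any accepting run of $\varrho_{[\,]}(\FA_1,a,b)$ exhibits such an offset.

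As a consistency check I would note that $\CA{L}(\phi) = \CA{L}(\phi(\mathbf{0})) \subseteq \CA{L}(\phi(\infty)) = \CA{L}(\FA_\infty)$ by Prop.~\ref{th:tr-partial-order} and Thm.~\ref{th:twtl2dfa-correctness-infinity}, and that the normal {\em within} DFA is essentially a bounded unrolling of the relaxed one cut off at the deadline $b$; so correctness here is compatible with the annotated construction. The main obstacle I anticipate lies entirely in the {\em within} case: pinning down the off-by-one arithmetic relating $p$, the per-copy truncation bounds $b-a+2-k$, and the shortest-path length $l$ to the exact window semantics, and rigorously justifying that the restart transitions computed in lines 13--18 of Alg.~\ref{alg:within-normal} reach the correct copy for every failed attempt --- this is where Assump.~\ref{assump:spec}, Prop.~\ref{th:wdf}, and Cor.~\ref{th:unambiguous-dfa-final} must be combined. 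Everything else is routine bookkeeping already carried out for the annotated construction in Thm.~\ref{th:twtl2dfa-correctness-infinity}.
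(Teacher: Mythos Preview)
Your proposal is correct and follows the same approach the paper intends: structural induction on $AST(\phi)$ in parallel with Thm.~\ref{th:twtl2dfa-correctness-infinity}, with the {\em within} case as the only place where the argument genuinely changes. In fact the paper's own proof is merely ``similar to that of Thm.~\ref{th:twtl2dfa-correctness-infinity} and is omitted for brevity,'' so your write-up is a substantial expansion rather than a deviation; your detailed analysis of $\varrho_{[\,]}$ (the copy count $p$, the per-copy truncation bounds, and the soundness of restarts under Assump.~\ref{assump:spec}) supplies exactly the content the paper leaves implicit.
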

\begin{proof}
The proof is similar to that of Thm.~\ref{th:twtl2dfa-correctness-infinity}
and is omitted for brevity.
\end{proof}

\subsection{Complexity}
\label{sec:complexity-automata-construction}

In this section, we review the complexity of the algorithms presented
in the previous section for the construction of DFAs from
TWTL formulae. The complexity of basic composition operations
for incomplete and acyclic DFAs has been explored
in~\cite{Maia13,Han07,Campeanu01,Gao11,Daciuk2003}.
Our construction algorithms differ from the ones in the literature
because we specialized and optimized them to translate TWTL formulae
and handle words over power sets of atomic propositions.

The complexity of the relabeling procedures are $O(\card{T})$ and
$O(\card{S_\FA})$ corresponding to Alg.~\ref{alg:relabel-tree} and
Alg.~\ref{alg:relabel}, respectively.
The complexity of the {\em hold} operator Alg.~\ref{alg:hold} is
$O(d \cdot \spow{\card{\AP}})$.
The construction algorithms for conjunction and disjunction
Alg.~\ref{alg:conjunction} and Alg.~\ref{alg:disjunction} have the
same complexity $O(\card{S_{\FA_1}} \cdot \card{S_{\FA_2}} \cdot \spow{\card{\AP}})$,
because these are based on the product automaton construction.
Concatenation has complexity $O(\card{S_{\FA_1}} + \card{S_{\FA_2}})$
due to the relabeling operations.
Lastly, the {\em within} operation can be performed
in $O(a\cdot\spow{\card{\AP}} + \card{S_\FA} \cdot\spow{\card{\AP}})$
and $O(a\cdot\spow{\card{\AP}} + b\card{S_\FA} \cdot\spow{\card{\AP}})$
for the infinity Alg.~\ref{alg:within-inf} and the normal Alg.~\ref{alg:within-normal}
construction, respectively, where Alg.~\ref{alg:truncate} used in the normal
construction procedure takes $O(\card{S_\FA} \cdot\spow{\card{\AP}})$.
The overall translation algorithm Alg.~\ref{alg:twtl2dfa}
takes at most $O(\spow{\card{\phi} + \card{\AP}})$.

It is very important to notice that the infinity construction does not
depend on the deadline $b$, which makes the procedure more
efficient than the normal construction.

\section{Solution}
\label{sec:algorithms}

In this section, we will use the following notation.
Let $T$ be an annotation tree associated with a DFA.
We denote by $\phi_T$ the TWTL formula corresponding
to the tree $T$.
Given a finite sequence $\BF{p} = p_0, \ldots, p_n$,
we denote the first and the last elements by $b(\BF{p}) = p_0$
and $e(\BF{p}) = p_n$, respectively.

\begin{definition}[Primitive]
Let $\phi$ be a TWTL formula. We say that $\phi$ is
{\em primitive} if $\phi$ does not contain any {\em within} operators.
\end{definition}

\subsection{Compute temporal relaxation for a word}


The automata construction presented in
Sec.~\ref{sec:construction} can be used to compute the
temporal relaxation of words with repsect to TWTL
formulae.
Let $\phi$ be a TWTL formula and $\BS{\sigma}$
be a word.
In this section, we show how to infer (synthesize) a set of
temporal relaxations $\BS{\tau}$ of the deadlines in $\phi$
such that $\BS{\sigma}$ satisfies $\phi(\tau)$ and
$\trel{\BS{\tau}}$ is minimized.
Alg.~\ref{alg:compute-temporal-relaxation} computes
the vector of temporal relaxations corresponding to each
{\em within} operator.
First, the annotated DFA $\FA_\infty$ is computed together
with the associated annotation tree $T$ (line 2).
Next, additional annotations are added to the tree $T$ using
the $initTreeTR()$ procedure (line 3).
Each node corresponding to a {\em within} operation is
assigned three variables $T.ongoing$, $T.done$ and $T.steps$,
which track whether the processing of the operator is
ongoing, done, and the number of steps to process the operator,
respectively.
The three variables are initialized to $\perp$, $\perp$, and
$-1$, respectively.
Then, Alg.~\ref{alg:compute-temporal-relaxation} cycles
through the symbols of the input word $\BS{\sigma}$ and
updates the tree using $updateTree()$ via Alg.~\ref{alg:tr-update-tree}.
Finally, the temporal relaxation vector is returned by the
$evalTreeTR()$ procedure via Alg.~\ref{alg:tr-eval-tree}.

\begin{algorithm}
\caption{$tr(\cdot)$ -- Compute temporal relaxation}
\label{alg:compute-temporal-relaxation}
\DontPrintSemicolon
\KwIn{$\BS{\sigma}$ a word over the alphabet $\spow{\AP}$}
\KwIn{$\phi$ a TWTL formula}
\KwOut{$\tau^*$ - minimum maximal temporal relaxation}
\KwOut{$\BS{\tau}$ - temporal relaxation vector}
\BlankLine

\lIf{$\phi$ is primitive}{
    \Return{$(-\infty, [\ ])$}\;
}
$\FA_\infty, T \asgn translate(\phi; \inf=\True)$\;
$initTreeTR(T)$\;
$s_{prev} \asgn \perp$;
$s_{c} \asgn s_0$\;
$updateTreeTR(T, s_c, s_{prev}, \emptyset, \emptyset)$\;
\For{$\sigma \in \BS{\sigma}$}{
    \lIf{$s_c \in F_{\FA_\infty}$}{
        \Break\;
    }
    $s_{prev} \asgn s_c$\;
    $s_c \asgn \delta_{\FA_\infty}(s_c, \sigma)$\;
    $updateTreeTR(T, s_c, s_{prev}, \sigma, \emptyset)$\;
}
\Return{$evalTreeTR(T)$}
\end{algorithm}

The tree is updated recursively in Alg.~\ref{alg:tr-update-tree}.
A {\em within} operator is marked as ongoing, i.e.,
$T.ongoing = \True$, when the current state is in the set of
initial states associated with the operator (line 2).
Similarly, when the current state is in the set of final states
associated with the operator, the {\em within} operator is
marked as done (lines 3-6), i.e. $T.done = \True$ and
$T.ongoing = \False$.
The number of steps $T.steps$ of all ongoing {\em within}
operators is incremented (line 7).

To enforce correct computation of the temporal relaxation
with respect to the disjunction operators, Alg.~\ref{alg:tr-update-tree}
keeps track of a set of constraints $C$.
The set $C$ is composed of state-symbol pairs, and
is used to determine which of the two subformulae of a
disjunction are satisfied by the input word (lines 12-17).
To achieve this, we use the annotation variables $T.choices$
(see Alg.~\ref{alg:disjunction}), which capture both cases.
For all other operators, the constraint sets are propagated
unchanged (lines 8, 10, 11).

\begin{algorithm}
\caption{$updateTreeTR(\cdot)$}
\label{alg:tr-update-tree}
\DontPrintSemicolon
\KwIn{$s_c$ -- current state}
\KwIn{$s_{prev}$ -- previous state}
\KwIn{$\sigma$ -- current symbol in word}
\KwIn{$C$ -- set of constraints associated with the states}
\BlankLine
\eIf{$T.op = [\ ]^{[a, b]}$}{
    \lIf{$s_c \in T.I$}{
         $T.ongoing \asgn \True$\;
    }
    \If{$s_c \in T.F$}{
         \If{$(C = \emptyset) \orltl (\sigma \subseteq C(s_{prev}))$}{
              $T.ongoing \asgn \perp$\;
              $T.done \asgn \True$
         }
    }
    \lIf{$T.ongoing$}{
        $T.\tau \asgn T.\tau + 1$\;
    }
    $updateTreeTR(T.left, s_c, s_{prev}, \sigma, C)$
}{
    \lIf{$T.op = \cdot$}{
        $C_L \asgn \emptyset$; $C_R \asgn C$\;
    }
    \lElseIf{$T.op = \andltl$}{
        $C_L \asgn C$; $C_R \asgn C$\;
    }
    \ElseIf{$T.op = \orltl$}{
         $C_L \asgn T.choices.L \cup T.choices.B$\;
         $C_R \asgn T.choices.R \cup T.choices.B$\;
         \If{$C\neq \emptyset$}{
             $C_L \asgn C \cap C_L$\;
             $C_R \asgn C \cap C_R$
         }
    }
    $updateTreeTR(T.left, s_c, s_{prev}, \sigma, C_L)$\;
    $updateTreeTR(T.right, s_c, s_{prev}, \sigma, C_R)$\;
}
\end{algorithm}

Finally, Alg.~\ref{alg:tr-eval-tree} extracts the temporal
relaxation from the annotation tree $T$ after all symbols
of the input word $\BS{\sigma}$ were processed.
Alg.~\ref{alg:tr-eval-tree} also computes the minimum
maximum temporal relaxation value, which may be
$-\infty$ if $\phi$ is primitive (line 1).
The recursion in Alg.~\ref{alg:tr-eval-tree} differs
between disjunction and the other operators.
One subformula is sufficient to hold to satisfy the formula
associated with a disjunction operator.
Thus, the optimal temporal relaxation is the
minimum or maximum between the two optimal
temporal relaxations of the subformulae for disjunction
(line 12), and conjunction and concatenation (line 13),
respectively.
Lines 15-16 of Alg.~\ref{alg:tr-eval-tree} cover the
cases involving primitive subformulae.

The complexity of Alg.~\ref{alg:compute-temporal-relaxation}
is $O(\spow{\card{\phi}+\card{\AP}} +  \card{\BS{\sigma}} \cdot \card{\phi})$,
where the first term is the complexity of constructing $\FA_\infty$
in line 1 and the second term corresponds to the update of the
tree for each symbol in $\BS{\sigma}$ and the final evaluation of
the tree.

\begin{algorithm}
\caption{$evalTreeTR(\cdot)$}
\label{alg:tr-eval-tree}
\DontPrintSemicolon
\KwIn{$T$ -- annotated tree}
\KwOut{$\tau^*$ - minimum maximal temporal relaxation}
\KwOut{$\BS{\tau}$ - temporal relaxation vector}
\BlankLine
\lIf{$\phi_T$ is primitive}{
    \Return $(-\infty, [\ ])$\;
}
\uElseIf{$T.op = [\phi]^{[a,b]}$}{
    $\tau^*_{ch}, \BS{\tau}_{ch} = evalTreeTR(tree.left)$\;
    \eIf{$T.done = \True$}{
        \Return {\small $\big(\max\{\tau^*_{ch}, T.steps - b\},  [\BS{\tau}_{ch},  T.steps - b]\big)$ }
    }{
        \Return $\big(-\infty, [\BS{\tau}_{ch}, -\infty]\big)$
    }
}
\Else(\tcp*[h]{$\andltl$, $\orltl$ or $\cdot$}){
    $\tau^*_L, \BS{\tau}_L = evalTreeTR(tree.left)$\;
    $\tau^*_R, \BS{\tau}_R = evalTreeTR(tree.right)$\;
    \eIf{$(\tau^*_L \neq -\infty) \andltl (\tau^*_R \neq -\infty)$}{
        \lIf{$T.op = \orltl$}{
            $\tau^* \asgn \min\{\tau^*_L, \tau^*_R\}$\;
        }\lElse{
            $\tau^* \asgn \max\{\tau^*_L, \tau^*_R\}$\;
        }
    }{
       \lIf{$T.op = \orltl$}{
           $\tau^* \asgn \max\{\tau^*_L, \tau^*_R\}$\;
       }\lElse{
           $\tau^* \asgn -\infty$\;
       }
    }
    \Return $\big(\tau^*, [\BS{\tau}_L, \BS{\tau}_R]\big)$
}
\end{algorithm}

\subsection{Control policy synthesis for a finite transition system}

Let $\TS$ be a finite transition system, and $\phi$
a specification given as a TWTL formula.
The procedure to synthesize an optimal control policy
by minimizing the temporal relaxation has three steps:
\begin{enumerate}
    \item constructing the annotated DFA $\FA_\infty$ corresponding to $\phi$,
    \item constructing the synchronous product $\PA =\TS \times \FA_\infty$
    between the transition system $\TS$ and the annotated
    DFA $\FA_\infty$,
    \item computing the optimal policy on $\PA$ using
    Alg.~\ref{alg:policy} and generating the optimal trajectory of $\TS$
    from the optimal trajectory of $\PA$ by projection,
\end{enumerate}
where the synchronous product $\PA$ is defined as follows:

\begin{definition}[Product Automaton]
\label{def:pa}
Given a TS $\TS = (X, x_0, \Delta, \AP, h)$ and
a DFA $\FA = (S_\FA, s_0, \spow{\AP}, \delta_\FA, F_\FA)$,
their product automaton, denoted by $\PA = \TS \times \FA$,
is a tuple $\PA = (S_\PA, p_0, \Delta_\PA, F_\PA)$ where:
\begin{itemize}
    \item $p_0 = (x_0, s_0)$ is the initial state;
    \item $S_\PA \subseteq X \times S_\FA $ is a finite set of states that are reachable from the initial state: for every $(x^*, s^*) \in S_\PA$, there exists a sequence of $\BF{x} = x_0 x_1 \ldots x_n x^*$, with $x_k \ra_\TS x_{k+1}$ for all $0 \leq k < n$ and $x_n \ra_\TS x^*$, and a sequence $\BF{s} = s_0 s_1 \ldots s_n s^*$ such that $s_0$ is the initial state of $\FA$, $s_k \ras{h(x_{k+1})}_\BA s_{k+1}$ for all $0 \leq k < n$ and $s_n \ras{h(x^*)}_\TS s^*$;
    \item $\Delta_\PA \subseteq S_\PA \times S_\PA$ is the set of transitions defined by: $((x, s), (x', s')) \in \Delta_\PA$ iff $x \ra_\TS x'$ and $s \ras{h(x')}_\BA s'$;
    \item $F_\PA = (X \times F_\FA) \cap S_\PA$ is the set of accepting states of $\PA$.
\end{itemize}
\end{definition}

A transition in $\PA$ is also denoted by
$(x, s) \ra_\PA (x', s')$ if $((x, s), (x', s')) \in \Delta_\PA$.
A trajectory $\BF{p} = (x_0, s_0) (x_1, s_1) \ldots$ of $\PA$
is an infinite sequence, where $(x_0, s_0) = p_0$ and
$(x_k, s_k) \ra_\PA (x_{k+1}, s_{k+1})$ for all $k \geq 0$.
A trajectory of $\PA = \TS \times \FA$ is said to be accepting
if and only if it ends in a state that belongs to the set of
final states $F_\PA$.
It follows by construction that a trajectory
$\BF{p} = (x_0, s_0) (x_1, s_1) \ldots$ of $\PA$
is accepting if and only if the trajectory $s_0 s_1 \ldots$
is accepting in $\FA$.
As a result, a trajectory of $\TS$ obtained from an accepting
trajectory of $\PA$ satisfies the given specification encoded by $\FA$.
We denote the projection of a trajectory
$\BF{p} = (x_0, s_0) (x_1, s_1) \ldots$ onto
$\TS$ by $\gamma_\TS(\BF{p}) = x_0 x_1 \ldots$.

Before we present the details of the proposed algorithm,
we want to point out that completeness may be decided
easily by using the product automaton $\PA$.
That is, testing if there exists a temporal relaxation such that
a satisfying policy in $\TS$ may be synthesized can
be performed very efficiently as shown by the following
theorem.

\begin{theorem}
\label{th:tr-completeness}
Let $\phi$ be a TWTL formula and $\TS$ be a finite transition system.
Deciding if there exists a finite $\BS{\tau} \in \BB{Z}^m$ and a trajectory $\BF{x}$
of $\TS$ such that $\BF{o} \models \phi(\BS{\tau})$, can be performed
in $O(\card{\Delta} \cdot \card{\delta_{\FA_\infty}})$,
where $m$ is the number of {\em within} operators in $\phi$,
$\FA_\infty$ is the annotated DFA corresponding to $\phi$,
$\BF{o}$ is the output trajectory induced by $\BF{x}$,
and $\Delta$ and $\delta_{\FA_\infty}$ are the sets
of transitions of $\TS$ and $\FA_\infty$, respectively.
\end{theorem}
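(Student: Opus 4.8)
The plan is to reduce the decision problem to a single reachability query on the product automaton $\PA = \TS \times \FA_\infty$. By Theorem~\ref{th:twtl2dfa-correctness-infinity}, $\FA_\infty$ accepts exactly $\CA{L}(\phi(\infty))$, and by Definition~\ref{def:phi-infty} a word $\BF{o}$ satisfies $\phi(\infty)$ iff there exists a finite $\BS{\tau}'$ with $\BF{o} \models \phi(\BS{\tau}')$. Hence the existence of \emph{some} finite relaxation $\BS{\tau}$ and a trajectory $\BF{x}$ of $\TS$ whose output $\BF{o}$ satisfies $\phi(\BS{\tau})$ is equivalent to the existence of an output word of $\TS$ lying in $\CA{L}(\FA_\infty)$. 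First I would make this equivalence precise: an accepting trajectory of $\PA$ projects (via $\gamma_\TS$) to a trajectory of $\TS$ whose output is accepted by $\FA_\infty$, and conversely any trajectory of $\TS$ whose output is accepted lifts to an accepting trajectory of $\PA$ by the construction in Definition~\ref{def:pa}. So the decision problem is exactly: does $\PA$ contain an accepting state reachable from $p_0$?

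Next I would bound the cost of answering that reachability question. The set of reachable states $S_\PA$ and the transition set $\Delta_\PA$ can be generated on the fly by a graph search (breadth- or depth-first) starting from $p_0 = (x_0, s_0)$: from a state $(x,s)$ we enumerate the successors $x'$ with $x \ra_\TS x'$ and follow the (deterministic) transition $s \ras{h(x')}_{\FA_\infty} s'$, if defined. The number of edges explored is at most the number of pairs of a $\TS$-transition and a compatible $\FA_\infty$-transition, which is $O(\card{\Delta}\cdot\card{\delta_{\FA_\infty}})$; each edge is processed in constant time, and during the search we simply test whether any discovered state lies in $F_\PA = (X\times F_{\FA_\infty})\cap S_\PA$. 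This yields the claimed $O(\card{\Delta}\cdot\card{\delta_{\FA_\infty}})$ running time, and since $\FA_\infty$ does not depend on the deadlines of the {\em within} operators (see Sec.~\ref{sec:complexity-automata-construction}), the bound is genuinely independent of the numeric time bounds in $\phi$.

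The one subtlety worth spelling out — and the main thing to get right — is the passage from ``$\BF{o}$ is accepted by $\FA_\infty$'' to ``$\exists$ finite $\BS{\tau}$ with $\BF{o}\models\phi(\BS{\tau})$ and $\phi(\BS{\tau})$ feasible.'' One must check that membership in $\CA{L}(\phi(\infty))$ indeed exhibits a \emph{feasible} relaxed formula: the $\tau$-relaxation in Definition~\ref{def:eps-relax} only increases the upper bounds $b_i$, so feasibility (each {\em within} window at least as long as its enclosed hold duration) is preserved and in fact only becomes easier to satisfy as the $\tau_i$ grow; combined with Proposition~\ref{th:tr-partial-order} this shows that if any finite $\BS{\tau}'$ works then a feasible one does. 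The remaining obstacle is purely bookkeeping: confirming that Alg.~\ref{alg:twtl2dfa} under Assumption~\ref{assump:spec} really produces the DFA whose language is $\CA{L}(\phi(\infty))$ — but that is precisely Theorem~\ref{th:twtl2dfa-correctness-infinity}, which we may invoke. With that in hand the theorem follows.
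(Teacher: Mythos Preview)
Your proposal is correct and follows essentially the same approach as the paper: reduce the question to reachability of an accepting state in the product automaton $\PA = \TS \times \FA_\infty$ and bound the graph search by $O(\card{\Delta}\cdot\card{\delta_{\FA_\infty}})$. The paper's own proof is a single line invoking Dijkstra's algorithm on $\PA$; your use of BFS/DFS is equivalent for pure reachability, and your explicit justification via Thm.~\ref{th:twtl2dfa-correctness-infinity} and Def.~\ref{def:phi-infty} simply fills in details the paper leaves implicit.
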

\begin{remark}
The complexity in Thm.~\ref{th:tr-completeness} is
independent of the deadlines of the {\em within} operators $\phi$.
\end{remark}
\begin{proof}
The result follows immediately using Dijkstra's algorithm on
the product automaton $\PA$.
\end{proof}

Note that Dijkstra's algorithm may not necessarily provide an optimal
trajectory of $\TS$ with respect to the minimum maximum
temporal relaxation of the induced output word.
Thus, we present a Dijkstra-based procedure to compute an optimal policy
using the product automaton $\PA$.
The proposed solution is presented in Alg.~\ref{alg:policy},
which describes a recursive procedure over an
annotated AST tree $T$.

\begin{algorithm}
\caption{Policy synthesis -- $policy(T, \PA)$}
\label{alg:policy}
\DontPrintSemicolon
\KwIn{$T$ -- the annotation AST tree}
\KwIn{$\PA$ -- product automaton}
\BlankLine
\If{$\phi_T$ is primitive}{
$M =\{\BF{p}\ |\ b(\BF{p}) \in T.I, e(\BF{p}) \in T.F\}$\;
$\tau^*[\BF{p}] = -\infty, \forall \BF{p}\in M$\;
}
\ElseIf{$T.op = [\,]^{[a, b]}$ $\wedge$ $\phi_{T.left}$ is primitive}{
$M =\{\BF{p}\ |\ b(\BF{p}) \in T.I, e(\BF{p}) \in T.F\}$\;
$\tau^*[\BF{p}] = \card{\BF{p}} - b, \forall \BF{p} \in M$\;
}
\ElseIf{$T.op = [\,]^{[a, b]}$ $\wedge$ $\phi_{T.left}$ is not primitive}{
$M_{ch}, \tau^{max}_{ch} = policy(T.left, \PA)$\;
$M = \{p_i\ras{a}p\ras{*}p' \ |\  p_i \in T.I, p\ras{*}p' \in M_{ch}\}$\;
$\tau^*[\BF{p}]=\max\{\card{\BF{p}} - b, \tau^*_{ch}[\BF{p}]\}, \forall \BF{p} \in M$\;
}
\ElseIf{$T.op = \cdot$}{
$M_L, \tau^*_L = policy(T.left, \PA)$\;
$M_R, \tau^*_R = policy(T.right, \PA)$\;
$M = \{ \BF{p}_1 \cdot \BF{p}_2 \ |\  \BF{p}_1 \in M_L, \BF{p}_2 \in M_R, e(\BF{p}_1) \ra_\PA b(\BF{p}_2)
\}$\;
$\tau^*[\BF{p}] = \max \{\tau^*_L(\BF{p}), \tau^*_R(\BF{p})\}$, $\forall \BF{p}\in M$\;
}
\ElseIf{$T.op = \orltl$}{
$M_L, \tau^*_L = policy(T.left, \PA)$\;
$M_R, \tau^*_R = policy(T.right, \PA)$\;
$M = M_L \cup M_R$\;
$\tau^*[\BF{p}] = \begin{cases}
\tau^*_L[\BF{p}] & \BF{p} \in M\setminus M_R\\
\tau^*_R[\BF{p}] & \BF{p} \in M\setminus M_L\\
\min\{\tau^*_L[\BF{p}], \tau^*_R[\BF{p}]\} & \BF{p} \in M_L\cap M_R
\end{cases}
$
}
\ElseIf{$T.op = \andltl$}{
$M_L, \tau^*_L = policy(T.left, \PA)$\;
$M_R, \tau^*_R = policy(T.right, \PA)$\;
$M = M_L \cap M_R$\;
$\tau^*[\BF{p}] = \max \{\tau^*_L(\BF{p}), \tau^*_R(\BF{p})\}$, $\forall \BF{p}\in M$\;
}

\Return ($M$, $\tau^*$)
\end{algorithm}

The recursive procedure in Alg.~\ref{alg:policy} has six cases.
The first case (lines 1-3) corresponds to a primitive formula.
In this case, there are no deadlines to relax since the formula
does not contain any {\em within} operators. Thus, solutions (if any
exist) can be computed using Dijkstra's algorithm.
The next two cases treat the {\em within} operators. In the former case
(lines 4-5), the enclosed formula is a primitive formula and
the only deadline which must be optimized is the one associated
with the current {\em within} operator. In the latter case (lines 7-10),
the enclosed formula is not primitive. Therefore, there are
multiple deadlines that must be considered. To optimize the
temporal relaxation $\trel{\cdot}$, we take the maximum
between the previous maximum temporal relaxation and the
current temporal relaxation (line 10).
The fourth case (lines 11-15) handles the concatenation operator.
First, the paths and the corresponding temporal relaxations are
computed for the left and the right subformulae in lines 12 and
13, respectively. Afterwards, the paths satisfying the left
subformula are concatenated to the paths satisfying the right
formula. However, the concatenation of paths $p_L$ and $p_R$
is restricted to pairs which have the following property:
there exists a transition in $\PA$ between the last state
of $p_L$ and the first state in $p_R$.
The temporal relaxation of the concatenation of two paths
is the maximum between the temporal relaxations of the
two paths (line 15).
The next case is associated with the disjunction operator
(lines 16-20). As in the concatenation case, first the
paths satisfying the left $M_L$ and the right $M_R$
subformulae are computed in lines 17 and 18, respectively.
The set corresponding to the disjunction of the two
formulae is the union of the two sets because the
paths must satisfy either one of the two subformulae.
The temporal relaxation of a path $p$ in the union is
computed as follows (line 20):
(a) if a path is only in the left, $\BF{p} \in M_L \setminus M_R$,
or only in the right set, $\BF{p} \in M_R \setminus M_L$, then
the temporal relaxation is $\tau^*_L[\BF{p}]$ or 
$\tau^*_R[\BF{p}]$, respectively;
(b) the path is in both sets, $\BF{p} \in M_L \cap M_R$,
then the temporal relaxations is the minimum of the two
previously computed ones,
$\min\{\tau^*_L[\BF{p}], \tau^*_R[\BF{p}] \}$.
In the case (a), $\BF{p}$ satisfies only one subformula
and, therefore, only one temporal relaxation is available.
In the case (b), $\BF{p}$ satisfies both subformulae.
Because only one is needed, the subformula that yields
the minimum temporal relaxation is chosen, i.e.,
the minimum between the two temporal relaxations.
The last case handles the conjunction operator (lines 21-25).
As in the previous two cases, the paths satisfying the
left and the right subformulae are computed first (lines 22-23).
Then the intersection of the two sets is computed as the
set of paths satisfying the conjunctions because the paths
must satisfy both subformulae.
The temporal relaxations of the paths in the intersections
are computed as the maxima between the previously
computed temporal relaxations for the left and the right
subformulae.

Note that considering primitive formulae in Alg.~\ref{alg:policy},
instead of traversing the AST all the way to the leaves, optimizes
the running time and the level of recursion of the algorithm.

A very important property of Alg.~\ref{alg:policy}
is that its complexity does not depend on the deadlines associated
with the {\em within} operators of the TWTL specification
formula $\phi$. This is an immediate consequence of the
DFA construction proposed in Sec~\ref{sec:construction}.
Moreover, it follows from Remark~\ref{rem:optimal2satisfaction}
that the completeness with respect to $\phi$ (unrelaxed)
may also be decided independently of the values of the
deadline values.
Formally, we have the following results.

\begin{theorem}
\label{th:tr-optimality}
Let $\phi$ be a TWTL formula and $\TS$ be a finite transition system.
Synthesizing a trajectory $\BF{x}$ of $\TS$ such that $\BF{o} \models \phi(\BS{\tau})$
and $\trel{\BS{\tau}}$ is minimized can be performed in
$O(\card{\phi} \cdot \card{\Delta} \cdot \card{\delta_{\FA_\infty}})$,
where $\BS{\tau} \in \BB{Z}^m$, $m$ is the number of {\em within} operators in $\phi$,
$\FA_\infty$ is the annotated DFA corresponding to $\phi(\infty)$,
$\BF{o}$ is the output trajectory induced by $\BF{x}$,
and $\Delta$ and $\delta_{\FA_\infty}$ are the sets
of transitions of $\TS$ and $\FA_\infty$, respectively.
\end{theorem}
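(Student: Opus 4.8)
The plan is to follow the three stages of the synthesis procedure described above and bound each: (i) building the annotated DFA $\FA_\infty$ with Alg.~\ref{alg:twtl2dfa} (\textit{inf} true); (ii) forming the product $\PA = \TS \times \FA_\infty$ of Def.~\ref{def:pa}; and (iii) running the recursive policy extraction of Alg.~\ref{alg:policy} over the annotation tree $T$ of $\FA_\infty$. By the complexity analysis of Sec.~\ref{sec:construction}, stage (i) is independent of the deadlines of the \emph{within} operators, and by Thm.~\ref{th:twtl2dfa-correctness-infinity} we have $\CA{L}(\FA_\infty)=\CA{L}(\phi(\infty))$, so every feasible relaxation of $\phi$ is represented in $\FA_\infty$. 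Stage (ii) yields an automaton with at most $\card{\Delta}\cdot\card{\delta_{\FA_\infty}}$ transitions, since each product transition $((x,s),(x',s'))$ is determined by a pair $(x\ra_\TS x',\, s\ras{h(x')}_{\FA_\infty}s')$. The substance of the proof is therefore stage (iii): that Alg.~\ref{alg:policy} correctly computes, for each relevant pair of endpoints in $\PA$, an optimal path together with its minimum temporal relaxation, using only $O(\card{\Delta}\cdot\card{\delta_{\FA_\infty}})$ work per node of $T$.

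For correctness I would argue by structural induction on the annotation AST tree, with the invariant that $policy(T,\PA)$ returns a set $M$ of trajectories of $\PA$ beginning in $T.I$ and ending in $T.F$ (kept compactly, one best representative per endpoint pair), together with a map $\tau^*$ assigning to each such path the minimum value of $\trel{\BS{\tau}}$ over all feasible $\BS{\tau}$ for which the output projection of that path satisfies $\phi_T(\BS{\tau})$. The base cases — a primitive subformula, and a \emph{within} enclosing a primitive subformula — reduce to shortest-path computations in $\PA$: a primitive formula has no deadline, so $\tau^*=-\infty$; for $[\phi_1]^{[a,b]}$ with $\phi_1$ primitive the relaxation of a path $\BF{p}$ is exactly $\card{\BF{p}}-b$, so minimizing relaxation is minimizing length. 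The inductive cases mirror the TWTL semantics and the definition $\trel{\BS{\tau}}=\max_j\tau_j$ (Def.~\ref{def:temporal-relaxation}): conjunction and general \emph{within} take the $\max$ of the relevant relaxations, disjunction the $\min$ of the two alternatives, and concatenation glues a left path to a right path across one transition of $\PA$ and takes the $\max$. The facts that make the case analysis close are Assumption~\ref{assump:spec}(ii): unambiguity forces the split point of a path at a concatenation boundary (and the restart points inside a \emph{within} automaton) to be unique, so the decomposition of a path into the parts handled by the two recursive calls is well defined; and monotonicity of relaxation, Prop.~\ref{th:tr-partial-order}, which justifies retaining a single optimal path per endpoint pair instead of an exponential family.

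For the running time I would charge each node of $T$ a cost of $O(\card{\Delta}\cdot\card{\delta_{\FA_\infty}})$ and use the fact that $T$ has $O(\card{\phi})$ nodes (stopping recursion at primitive subformulae, as noted after Alg.~\ref{alg:policy}). At a primitive or \emph{within}-primitive node the work is a Dijkstra computation on the relevant part of $\PA$, which, because edge costs increment by at most one per step, runs in time linear in the number of transitions; at a concatenation, disjunction, or conjunction node the sets $M_L,M_R$ are merged by iterating over endpoint pairs (and, for concatenation, over the glue states lying on the single connecting transition of $\PA$), again bounded by the product transition count. Multiplying over the $O(\card{\phi})$ nodes gives $O(\card{\phi}\cdot\card{\Delta}\cdot\card{\delta_{\FA_\infty}})$, and since $\FA_\infty$ does not depend on the deadlines neither does this bound (cf.\ Thm.~\ref{th:tr-completeness} and Remark~\ref{rem:optimal2satisfaction}). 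Projecting the optimal $\PA$-path back via $\gamma_\TS$ produces the trajectory $\BF{x}$ of $\TS$ with $\BF{o}\models\phi(\BS{\tau})$ and $\trel{\BS{\tau}}$ minimal.

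The hard part will be the bookkeeping inside stage (iii): making rigorous that it suffices to track one best path per endpoint pair and that this compact representation is preserved under concatenation, intersection, and union (so the compositional steps stay within $O(\card{\Delta}\cdot\card{\delta_{\FA_\infty}})$), and that the relabelings performed during the construction (Alg.~\ref{alg:relabel}, Alg.~\ref{alg:relabel-tree}) keep the annotation-tree nodes in exact correspondence with the sub-automata of $\PA$ they index, so that the recursive calls operate on the intended state sets.
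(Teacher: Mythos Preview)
Your proposal is correct and takes essentially the same approach as the paper: charge each node of the annotation tree a Dijkstra-like cost of $O(\card{\Delta}\cdot\card{\delta_{\FA_\infty}})$ and multiply by the $O(\card{\phi})$ nodes. The paper's own proof is considerably terser---it only argues complexity (observing that the worst case is primitive formulae wrapped in \emph{within} operators and composed by $\andltl,\orltl,\cdot$, that Dijkstra at the leaves costs $O(\card{\Delta_\PA})$, and that $\card{M}\le\card{M_L}+\card{M_R}$ at each internal node)---and does not spell out the correctness induction or the ``one best path per endpoint pair'' invariant you identify; your treatment is therefore a strict elaboration of the same strategy rather than a different route.
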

\begin{proof}
The worst-case complexity of 
Alg.~\ref{alg:policy} is achieved when the TWTL formula $\phi$
has the form of primitive formulae enclosed by {\em within} operators
and then composed by either the conjunction, disjunction, and
concatenation operators.

The recursive algorithm stops when it encounters the primitive
formulae and executes Dijkstra's algorithm that takes
at most $O(\card{\Delta_\PA}) = O(\card{\Delta}\cdot\card{\delta_{\FA_{\infty}}})$
time. Since the recursion is performed with respect to an AST $T$
of $\phi$, the algorithm processes each operator only once.
The complexity bound follows because the size of
the set of paths $M$ returned by the algorithm is at most
the sum of the sized of the sets corresponding to the left
and the right sets $M_L$ and $M_R$, respectively.
Thus, we obtain the bound
$O(\card{\phi} \cdot \card{\Delta} \cdot \card{\delta_{\FA_\infty}})$
by summing up the time complexity over all nodes of $T$.
\end{proof}

\begin{corollary}
\label{th:completeness}
Let $\phi$ be a TWTL formula and $\TS$ be a finite transition system.
Deciding if there exists a trajectory $\BF{x}$ of $\TS$ such that
$\BF{o} \models \phi$ can be performed in
$O(\card{\phi} \cdot \card{\Delta} \cdot \card{\delta_{\FA_\infty}})$,
where $\FA_\infty$ is the annotated DFA corresponding to $\phi$,
$\BF{o}$ is the output trajectory induced by $\BF{x}$,
and $\Delta$ and $\delta_{\FA_\infty}$ are the sets
of transitions of $\TS$ and $\FA_\infty$, respectively.
\end{corollary}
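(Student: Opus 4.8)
The plan is to reduce the decision problem to the optimization procedure of Theorem~\ref{th:tr-optimality} and then read off the answer from the sign of the optimal temporal relaxation. Concretely, I would construct the annotated DFA $\FA_\infty$ for $\phi$, form the product automaton $\PA = \TS \times \FA_\infty$, and run Alg.~\ref{alg:policy} on the annotation tree $T$ of $\FA_\infty$ to obtain the pair $(M, \tau^*)$, where I take $\tau^*$ to be the minimum temporal relaxation achieved over $M$ (with the convention $\tau^* = +\infty$ if $M = \emptyset$). By Theorem~\ref{th:tr-optimality}, this computation takes $O(\card{\phi}\cdot\card{\Delta}\cdot\card{\delta_{\FA_\infty}})$ time, and $\tau^*$ equals the minimum of $\trel{\BS{\tau}}$ over all trajectories $\BF{x}$ of $\TS$ whose induced output word satisfies some feasible relaxation $\phi(\BS{\tau})$. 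The decision procedure then returns ``yes'' if and only if $\tau^* \leq 0$, which clearly runs within the claimed bound.

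The correctness of this test is the only thing left to argue. For the ``if'' direction, suppose $\tau^* \leq 0$. Then Alg.~\ref{alg:policy} exhibits a trajectory $\BF{x}^*$ of $\TS$ and a feasible relaxation $\phi(\BS{\tau}^*)$ with $\BF{o}^* \models \phi(\BS{\tau}^*)$ and $\trel{\BS{\tau}^*} = \tau^* \leq 0$; in the degenerate case where $\phi$ is primitive we have $\tau^* = -\infty$ and $\BF{o}^*$ satisfies $\phi$ directly, since a primitive formula contains no {\em within} operators to relax. By Remark~\ref{rem:optimal2satisfaction}, $\trel{\BS{\tau}^*}\le 0$ implies $\BF{o}^* \models \phi$ (equivalently, $\BS{\tau}^* \leq \mathbf{0}$, so Prop.~\ref{th:tr-partial-order} together with $\phi(\mathbf{0}) = \phi$ yields $\phi(\BS{\tau}^*) \Implies \phi$). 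For the ``only if'' direction, suppose some trajectory $\BF{x}$ of $\TS$ has $\BF{o} \models \phi$. Since $\phi$ is feasible, $\mathbf{0}$ is an admissible relaxation with $\phi(\mathbf{0}) = \phi$ and $\trel{\mathbf{0}} = 0$, so $\BF{x}$ is among the trajectories over which $\tau^*$ is minimized, whence $\tau^* \leq \trel{\mathbf{0}} = 0$. This gives the equivalence, and the complexity is exactly that of Theorem~\ref{th:tr-optimality}; in particular, it is independent of the deadline values because $\FA_\infty$ is.

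I expect the substantive points to be bookkeeping rather than conceptual: (i) correctly handling the primitive-formula corner case, where the optimal value is $-\infty$ and every returned path already witnesses $\phi$; (ii) making sure the minimization guaranteed by Theorem~\ref{th:tr-optimality} really ranges over \emph{all} satisfying trajectories, so that the ``only if'' direction is valid; and (iii) observing that $\mathbf{0}$ is not excluded by the feasibility restriction in Def.~\ref{def:eps-relax}, i.e., that a genuine witness for $\phi$ is never discarded. Once these are in place the corollary follows immediately from Theorem~\ref{th:tr-optimality} and Remark~\ref{rem:optimal2satisfaction}, with no additional computation.
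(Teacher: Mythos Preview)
Your proposal is correct and takes essentially the same approach as the paper: the paper's proof is the single line ``It follows from Thm.~\ref{th:tr-optimality} and Remark~\ref{rem:optimal2satisfaction},'' and you have simply unpacked this by making the decision rule $\tau^*\le 0$ explicit and arguing both directions of the equivalence. Your additional invocation of Prop.~\ref{th:tr-partial-order} is redundant (Remark~\ref{rem:optimal2satisfaction} already states exactly the needed implication), but harmless.
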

\begin{proof}
It follows from Thm.~\ref{th:tr-optimality} and
Remark~\ref{rem:optimal2satisfaction}.
\end{proof}

\subsection{Verification}
\label{sec:verification}

The procedure described in Alg.~\ref{alg:verification} solves
the verification problem of a transition system $\TS$ against
all relaxed versions of a TWTL specification
First, the annotated DFA $\FA_\infty$ corresponding to $\phi$
is computed (line 1). Then a trap state $\trap$ is added in
line 2 (see Alg.~\ref{alg:disjunction} for details).
The transition system $\TS$ is composed with the DFA
$\FA_\infty$ to produce the product automaton $\PA$ (line 3).
Lastly, it is checked if a state in $\PA$ reachable from the initial
state $p_0$ exists such that its DFA component is the trap state
$\trap$ (lines 4-5).

\begin{algorithm}
\caption{Verification}
\label{alg:verification}
\DontPrintSemicolon
\KwIn{$\TS$ -- transition system}
\KwIn{$\phi$ -- TWTL specification}
\KwOut{Boolean value}
\BlankLine

$\FA_\infty \asgn translate(\phi; \inf\ = \True)$\;
add trap state $\trap$ to $\FA_\infty$\;
$\PA \asgn \TS \times \FA_\infty$\;
\lIf{$\exists x\in X$ s.t. $ p_0 \ra_\PA (x, \trap)$}{
    \Return{$\False$}\;
}\lElse{
    \Return{$\True$}\;
}
\end{algorithm}

\subsection{Learning deadlines from data}
\label{sec:learning}
In this section, we present a simple heuristic procedure to
infer deadlines from a finite set of labeled traces
such that the misclassification rate is minimized.
Let $\phi$ be a TWTL formula and $\CA{L}_p$ and $\CA{L}_n$ be
two finite sets of words labeled as positive and negative examples,
respectively.
The misclassification rate is
$\card{\{w \in L_p\ |\ w\not\models\phi(\BS{\tau})\}} + \card{\{w \in L_n\ |\ w\models\phi(\BS{\tau})\}}$,
where $\phi(\BS{\tau})$ is a feasible $\BS{\tau}$-relaxation of $\phi$.
The terms of the misclassification rate are the false negative and
false positive rates, respectively.

The procedure presented in Alg.~\ref{alg:parameter-synthesis}
uses Alg.~\ref{alg:compute-temporal-relaxation}
to compute the tightest deadlines for each trace.
Then each deadline is determined in a greedy
way such that the misclassification rate is minimized.
The heuristic in Alg.~\ref{alg:compute-temporal-relaxation}
is due to the fact that each deadline is considered
separately from the others.
However, the deadlines are not independent with
respect to the minimization of the misclassification rate.

Notice that the algorithm constructs $\FA_\infty$ only once
at line 1. Then the automaton is used in the $tr(\cdot)$ function
to compute the temporal relaxation of each trace, lines 2-3.
Thus, the procedure avoids building $\FA_\infty$ for each trace.

\begin{algorithm}
\caption{Parameter learning}
\label{alg:parameter-synthesis}
\DontPrintSemicolon
\KwIn{$\CA{L}_p$ -- set of positive traces}
\KwIn{$\CA{L}_n$ -- set of negative traces}
\KwIn{$\phi$ -- template TWTL formula}
\KwOut{$d$ -- the vector of deadlines}
\BlankLine

$\FA_\infty \asgn translate(\phi; \inf\ = \True)$\;
$D_p \asgn \{ tr(p, \FA_\infty) + \BF{b} \ |\ p \in \CA{L}_p \}$\;
$D_n \asgn \{ tr(p, \FA_\infty) + \BF{b} \ |\ p \in \CA{L}_n \}$\;
$\BF{d} \asgn (-\infty, -\infty, \ldots, -\infty)$ \tcp*[h]{$m$-dimensional}\;
\For{$k \in \{1,\ldots,m\}$}{
    $D_k \asgn \{ d[k] \ |\ d \in D_p\}$\;
    $\BF{d}[k] \asgn \arg\min_{d \in D_k}\big(\card{D^k_{FP}(d)} + \card{D^k_{FN}(d)}\big)$, where\;
    \quad $D^k_{FP}(d) \asgn \{ \BF{d}'[k] \ |\ \BF{d}'[k] > d, \BF{d}' \in D_n\}$\;
    \quad $D^k_{FN}(d) \asgn \{ \BF{d}'[k] \ |\ \BF{d}'[k] \leq d, \BF{d}' \in D_p\}$\;
}
\Return{$d$}
\end{algorithm}

In Alg.~\ref{alg:parameter-synthesis}, $m$ denotes
the number of {\em within} operators and $\BF{b}$ is
the $m$-dimensional vector of deadlines associated
with each {\em within} operator in the TWTL formula
$\phi$. We assume that the order of the {\em within}
operators is given by the post-order traversal of $AST(\phi)$,
i.e., recursively traversing the children nodes first and
then the node itself.

The complexity of the learning procedure is
$O\big(\spow{\card{\phi}+\card{\AP}} + 
(\card{\CA{L}_p} + \card{\CA{L}_n})\cdot l_m \cdot \card{\phi} +
m\cdot (\card{\CA{L}_p} + \card{\CA{L}_n})\big)$, where:
(a) the first term is the complexity of constructing $\FA_\infty$ (line 1);
(b) the second term corresponds to computing the tight deadlines
for all traces positive and negative in lines 2 and 3, respectively;
(c) the third term is the complexity of the for loop, which
computes each deadline separately in a greedy fashion (lines 5-9).
The maximum length of a trace (positive or negative) is denoted
by $l_m$ in the complexity formula.

\section{TWTL Python Package}
\label{sec:py-twtl}

We provide a Python 2.7 implementation named PyTWTL of
the proposed algorithms based on LOMAP~\cite{ulusoy-ijrr2013}, 
ANTLRv3~\cite{Parr07} and networkx~\cite{nx} libraries.
PyTWTL implementation is released under the GPLv3 license
and can be downloaded from \href{http://hyness.bu.edu/twtl}{\it hyness.bu.edu/twtl}.
The library may be used to:
\begin{enumerate}
    \item construct a DFA $\FA_\phi$ and a annotated DFA $\FA_\infty$
from a TWTL formula $\phi$;
    \item monitor the satisfaction of a TWTL formula $\phi$;
    \item monitor the satisfaction of an arbitrary relaxation of $\phi$,
    i.e., $\phi(\infty)$;
    \item compute the temporal relaxation of a trace with respect to
    a TWTL formula;
    \item compute a satisfying control policy with respect to a TWTL
    formula $\phi$;
    \item compute a minimally relaxed control policy with respect to
    a TWTL formula $\phi$, i.e., for $\phi(\BS{\tau})$ such that
    $\trel{\BS{\tau}}$ is minimal.
\end{enumerate}

The parsing of TWTL formulae is performed using ANTLRv3 framework.
We provide grammar files which may be used to port to generate
lexers and parsers for other programming languages such as
Java, C/C++, Ruby.
To support Python 2.7, we used version 3.1.3 of ANTLRv3 and the
corresponding Python runtime ANTLR library, which we included in
our distribution for convenience.

\section{Case Studies}
\label{sec:case-studies}

In this section, we present some examples highlighting the solutions
for the verification, synthesis and learning problems.
First, we show the automaton construction procedure on a
TWTL formula and how the tight deadlines are inferred for
a given trace.
Then, we consider an example involving a robot whose
motion is modeled as a TS.
The policy computation algorithm is used to solve
a path planning problem with rich specifications given as
TWTL formulae.
The procedure for performing verification,
i.e., all robot trajectories satisfy a given TWTL specification,
is also shown.
Finally, the performance of the heuristic learning algorithm is
demonstrated on a simple example.

\subsection{Automata Construction and Temporal Relaxation}

Consider the following TWTL specification over the set
of atomic propositions $\AP=\{A, B, C, D\}$:
\begin{equation}
\label{eq:case-spec}
\phi = [H^2 A]^{[0, 6]} \cdot ([H^1 B]^{[0, 3]} \orltl [H^1 C]^{[1, 4]}) \cdot [H^1 D]^{[0, 6]}
\end{equation}

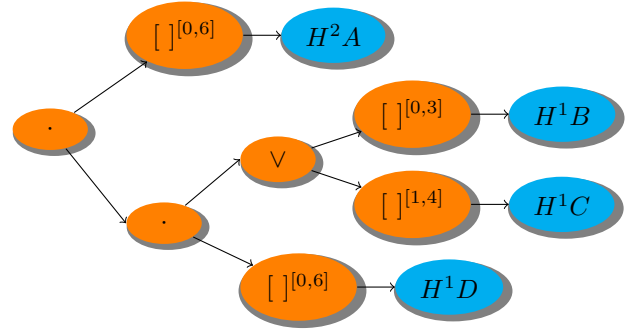
\begin{figure}[!htb]
\centering
\begin{tikzpicture}[
    grow = right,
    state/.style={ellipse, draw=none, fill=orange, circular drop shadow,
        text centered, anchor=west, text=black, minimum width=1cm},
    leaf/.style={ellipse, draw=none, fill=cyan, circular drop shadow,
        text centered, anchor=west, text=black},
    level distance=0.5cm, growth parent anchor=east
]

\node [state] {$\cdot$} [->, sibling distance=2.5cm]
     child { [->, sibling distance=1.7cm]
        node [state] {$\cdot$}
        child {
            node [state] {$[\ ]^{[0, 6]}$}
            child {
                node [leaf] {$H^1 D$}
            }
        }
        child { [->, sibling distance=1.2cm]
            node [state] {$\orltl$}
            child {
                node [state] {$[\ ]^{[1, 4]}$}
                child {
                    node [leaf] {$H^1 C$}
                }
            }
            child {
                node [state] {$[\ ]^{[0, 3]}$}
                child {
                    node [leaf] {$H^1 B$}
                }
            }
        }
     }
     child {
        node [state] {$[\ ]^{[0,6]}$}
        child{
            node [leaf] {$H^2 A$}
        }
     }
;
\end{tikzpicture}
\caption{The AST corresponding to the TWTL formula in Eq.~\eqref{eq:case-spec}.}
\label{fig:ast-case-study}
\end{figure}

An AST of formula $\phi$ is shown in Fig.~\ref{fig:ast-case-study}.
The TWTL formula $\phi$ is converted to an annotated DFA
$\FA_\infty$ using Alg.~\ref{alg:twtl2dfa}.
The procedure recursively constructs the DFA from the leafs of
the AST to the root.
A few processing steps are shown in
Fig.~\ref{fig:dfa-construction-case-study}.
The construction of DFA corresponding to a leaf, i.e., a {\em hold}
operator, is straightforward, see Fig.~\ref{fig:case-study-leaf}.
Next, the transformation corresponding to a {\em within}
operator is shown in Fig.~\ref{fig:case-study-within}.
Note that the delay of one time unit is due to the lower bound of the
time window of the {\em within} operator.
Also, note that the automaton restarts on symbols that
block the DFA corresponding to the inner formula $H^1 C$.

\begin{figure*}[p]
\centering
\begin{subfigure}[c]{0.48\textwidth}
\centering
\begin{tikzpicture}[->,>=stealth',shorten >=1pt,auto,node distance=1.8cm, semithick,
initial text={}]
\tikzstyle{every state}=[text=black]

\node[initial,state] (s0)                    {$s_0$};
\node[state]          (s1) [right of=s0] {$s_1$};
\node[state]          (s2) [right of=s1] {$s_2$};
\node[state,accepting] (s3) [right of=s2] {$s_3$};

\path[->] (s0) edge              node {$A$} (s1)
              (s1) edge              node {$A$} (s2)
              (s2) edge              node {$A$} (s3);
\end{tikzpicture}
\caption{$H^2 A$}
\label{fig:case-study-leaf}
\end{subfigure}
~%
\begin{subfigure}[c]{0.48\textwidth}
\centering
\begin{tikzpicture}[->,>=stealth',shorten >=1pt,auto,node distance=1.8cm, semithick,
initial text={}]
\tikzstyle{every state}=[text=black]

\node[initial,state] (s0)                    {$s_0$};
\node[state]          (s1) [right of=s0] {$s_1$};
\node[state]          (s2) [right of=s1] {$s_2$};
\node[state,accepting] (s3) [right of=s2] {$s_3$};

\path[->] (s0) edge                    node {$\True$} (s1)
              (s1) edge                    node {$C$} (s2)
                     edge [loop above] node {$\notltl C$} (s1)
              (s2) edge                    node {$C$} (s3)
                     edge [bend left] node {$\notltl C$} (s1);
\end{tikzpicture}
\caption{$[H^1 C]^{[1,\ ]}$}
\label{fig:case-study-within}
\end{subfigure}
\\ \bigskip
\begin{subfigure}[c]{0.9\textwidth}
\centering
\begin{tikzpicture}[->,>=stealth',shorten >=1pt,auto,node distance=2.5cm, semithick,
initial text={}]
\tikzstyle{every state}=[text=black]

\node[initial,state]      (s00)                                {$s_{00}$};
\node[state]               (s11) [above right of=s00] {$s_{11}$};
\node[state,accepting] (s22) [right of=s11,node distance=3.5cm] {$s_{22}$};
\node[state]               (s01) [below right of=s00] {$s_{01}$};
\node[state]               (s02) [right of=s01,node distance=3cm]          {$s_{02}$};
\node[state,accepting] (s13) [right of=s02, node distance=4.5cm] {$s_{13}$};
\node[state]               (s12) [right of=s11, node distance=5.5cm] {$s_{12}$};
\node[state,accepting] (s23) [right of=s12]          {$s_{23}$};
\node[state,accepting] (s21) [above right of=s12] {$s_{21}$};
\node[state,accepting] (s03) [below right of=s12] {$s_{03}$};

\path[->] (s00) edge [font=\scriptsize,sloped,above] node {$B$}                               (s11)
                       edge [font=\scriptsize,sloped,below] node {$\notltl B$}                      (s01)
              (s11) edge [font=\scriptsize]                     node {$B\andltl \notltl C$}          (s21)
                       edge [font=\scriptsize,sloped,below] node {$B\andltl C$}                    (s22)
                      edge [font=\scriptsize,bend left=10,sloped,above] node {$\notltl B\andltl \notltl C$} (s01)
                      edge [font=\scriptsize,bend left=10,sloped,above] node {$\notltl B\andltl C$} (s02)
              (s01) edge [font=\scriptsize,loop left] node                   {$\notltl B\andltl \notltl C$}          (s01)
                      edge [font=\scriptsize,bend left=10,sloped,above] node {$B\andltl \notltl C$}          (s11)
                      edge [font=\scriptsize,bend left=10,sloped,above] node {$B\andltl C$}          (s12)
                      edge [font=\scriptsize] node                   {$\notltl B\andltl C$}          (s02)
             (s02) edge [font=\scriptsize,bend left] node                   {$\notltl B\andltl \notltl C$}          (s01)
                     edge [font=\scriptsize,bend left=10,sloped,above] node {$B\andltl \notltl C$}          (s11)
                     edge [font=\scriptsize] node {$\notltl B \andltl C$} (s03)
                     edge [font=\scriptsize,sloped,above] node {$B \andltl C$} (s13)
             (s12) edge [font=\scriptsize,bend left=10,sloped,above] node                   {$\notltl B\andltl \notltl C$}          (s01)
                     edge [font=\scriptsize] node {$B \andltl C$} (s23)
                     edge [font=\scriptsize,sloped,above] node {$B \andltl \notltl C$} (s21)
                     edge [font=\scriptsize,sloped,above] node {$\notltl B \andltl C$} (s03);
\end{tikzpicture}
\caption{$[H^1 B]^{[0,\ ]} \orltl [H^1 C]^{[1,\ ]}$}
\label{fig:case-study-disjunction-exp}
\end{subfigure}
\\ \bigskip
\begin{subfigure}[c]{0.9\textwidth}
\centering
\begin{tikzpicture}[->,>=stealth',shorten >=1pt,auto,node distance=2.5cm, semithick,
initial text={}]
\tikzstyle{every state}=[text=black]

\node[initial,state]      (s00)                                {$s_{00}$};
\node[state]               (s11) [above right of=s00] {$s_{11}$};
\node[state]               (s01) [below right of=s00] {$s_{01}$};
\node[state]               (s02) [right of=s00,node distance=4cm]          {$s_{02}$};
\node[state]               (s12) [right of=s01, node distance=4cm] {$s_{12}$};
\node[state,accepting] (sff) [right of=s11, node distance=4cm] {$s_{f}$};

\path[->] (s00) edge [font=\scriptsize,sloped,above] node {$B$}                                                 (s11)
                       edge [font=\scriptsize,sloped,below] node {$\notltl B$}                                        (s01)
              (s11) edge [font=\scriptsize]                     node {$B$}                                                  (sff)
                      edge [font=\scriptsize,bend left=10,sloped,above] node {$\notltl B\andltl \notltl C$} (s01)
                      edge [font=\scriptsize,bend left=10,sloped,above] node {$\notltl B\andltl C$}          (s02)
              (s01) edge [font=\scriptsize,loop left] node                   {$\notltl B\andltl \notltl C$}         (s01)
                      edge [font=\scriptsize,bend left=10,sloped,above] node {$B\andltl \notltl C$}          (s11)
                      edge [font=\scriptsize,bend left=10,sloped,above] node {$B\andltl C$}                   (s12)
                      edge [font=\scriptsize,bend left=10,sloped,above] node {$\notltl B\andltl C$}          (s02)
             (s02) edge [font=\scriptsize,bend left=10,sloped,below] node {$\notltl B\andltl \notltl C$} (s01)
                     edge [font=\scriptsize,bend left=10,sloped,below] node {$B\andltl \notltl C$}            (s11)
                     edge [font=\scriptsize,sloped,above] node {$C$}                                                    (sff)
             (s12) edge [font=\scriptsize,bend left=10,sloped,above] node {$\notltl B\andltl \notltl C$}  (s01)
                     edge [font=\scriptsize,sloped,above] node {$B \orltl C$}                                         (sff);
\end{tikzpicture}
\caption{$[H^1 B]^{[0,\ ]} \orltl [H^1 C]^{[1,\ ]}$}
\label{fig:case-study-disjunction}
\end{subfigure}
\\ \bigskip
\begin{subfigure}[c]{0.9\textwidth}
\centering
\begin{tikzpicture}[->,>=stealth',shorten >=1pt,auto,node distance=2.5cm, semithick,
initial text={}]
\tikzstyle{every state}=[text=black]

\node[initial,state]       (s0)                                                      {$s_0$};
\node[state]               (s1) [right of=s0,node distance=1.5cm]    {$s_1$};
\node[state]               (s2) [right of=s1,node distance=1.5cm]    {$s_2$};
\node[state]               (s3) [right of=s2,node distance=1.5cm]    {$s_3$};
\node[state]               (s4) [above right of=s3]                          {$s_4$};
\node[state]               (s5) [below right of=s3]                          {$s_5$};
\node[state]               (s6) [right of=s3,node distance=4cm]      {$s_6$};
\node[state]               (s7) [right of=s5, node distance=4cm]     {$s_7$};
\node[state]               (s8) [right of=s4, node distance=4cm]     {$s_8$};
\node[state]               (s9) [below right of=s8]                         {$s_9$};
\node[state,accepting] (s10) [right of=s9,node distance=1.5cm] {$s_{10}$};

\path[->] (s0) edge                       node {$A$}          (s1)
                     edge [loop above]    node {$\notltl A$} (s0)
              (s1) edge                       node {$A$}          (s2)
                     edge [bend left=20] node {$\notltl A$} (s0)	
              (s2) edge                       node {$A$}          (s3)
                     edge [bend left=50] node {$\notltl A$} (s0)
              (s3) edge [font=\scriptsize,sloped,above] node {$B$}                                                  (s4)
                     edge [font=\scriptsize,sloped,below] node {$\notltl B$}                                         (s5)
              (s4) edge [font=\scriptsize]                     node {$B$}                                                   (s8)
                     edge [font=\scriptsize,bend left=10,sloped,above] node {$\notltl B\andltl \notltl C$} (s5)
                     edge [font=\scriptsize,bend left=10,sloped,above] node {$\notltl B\andltl C$}          (s6)
              (s5) edge [font=\scriptsize,loop left] node                   {$\notltl B\andltl \notltl C$}         (s5)
                     edge [font=\scriptsize,bend left=10,sloped,above] node {$B\andltl \notltl C$}          (s4)
                     edge [font=\scriptsize,bend left=10,sloped,above] node {$B\andltl C$}                   (s7)
                     edge [font=\scriptsize,bend left=10,sloped,above] node {$\notltl B\andltl C$}          (s6)
             (s6) edge [font=\scriptsize,bend left=10,sloped,below] node {$\notltl B\andltl \notltl C$}  (s5)
                    edge [font=\scriptsize,bend left=10,sloped,below] node {$B\andltl \notltl C$}           (s4)
                    edge [font=\scriptsize,sloped,above] node {$C$}                                                   (s8)
             (s7) edge [font=\scriptsize,bend left=10,sloped,above] node {$\notltl B\andltl \notltl C$}  (s5)
                    edge [font=\scriptsize,sloped,above] node {$B \orltl C$}                                        (s8)
             (s8) edge [font=\scriptsize,bend left=10,sloped,above] node {$D$}                                (s9)
                    edge [loop right]    node {$\notltl D$}                                                                   (s8)
             (s9) edge [font=\scriptsize,sloped,above] node {$D$}                                                   (s10)
                    edge [font=\scriptsize,bend left=10,sloped,below] node {$\notltl D$}                      (s8);
\end{tikzpicture}
\caption{$[H^2 A]^{[0,\ ]} \cdot ([H^1 B]^{[0,\ ]} \orltl [H^1 C]^{[1,\ ]}) \cdot [H^1 D]^{[0,\ ]}$}
\label{fig:case-study-spec-dfa}
\end{subfigure}
%
%
%
\caption{Annotated automata corresponding to subformulae of the TWTL specification in Eq.~\eqref{eq:case-spec}.}
\label{fig:dfa-construction-case-study}
\end{figure*}
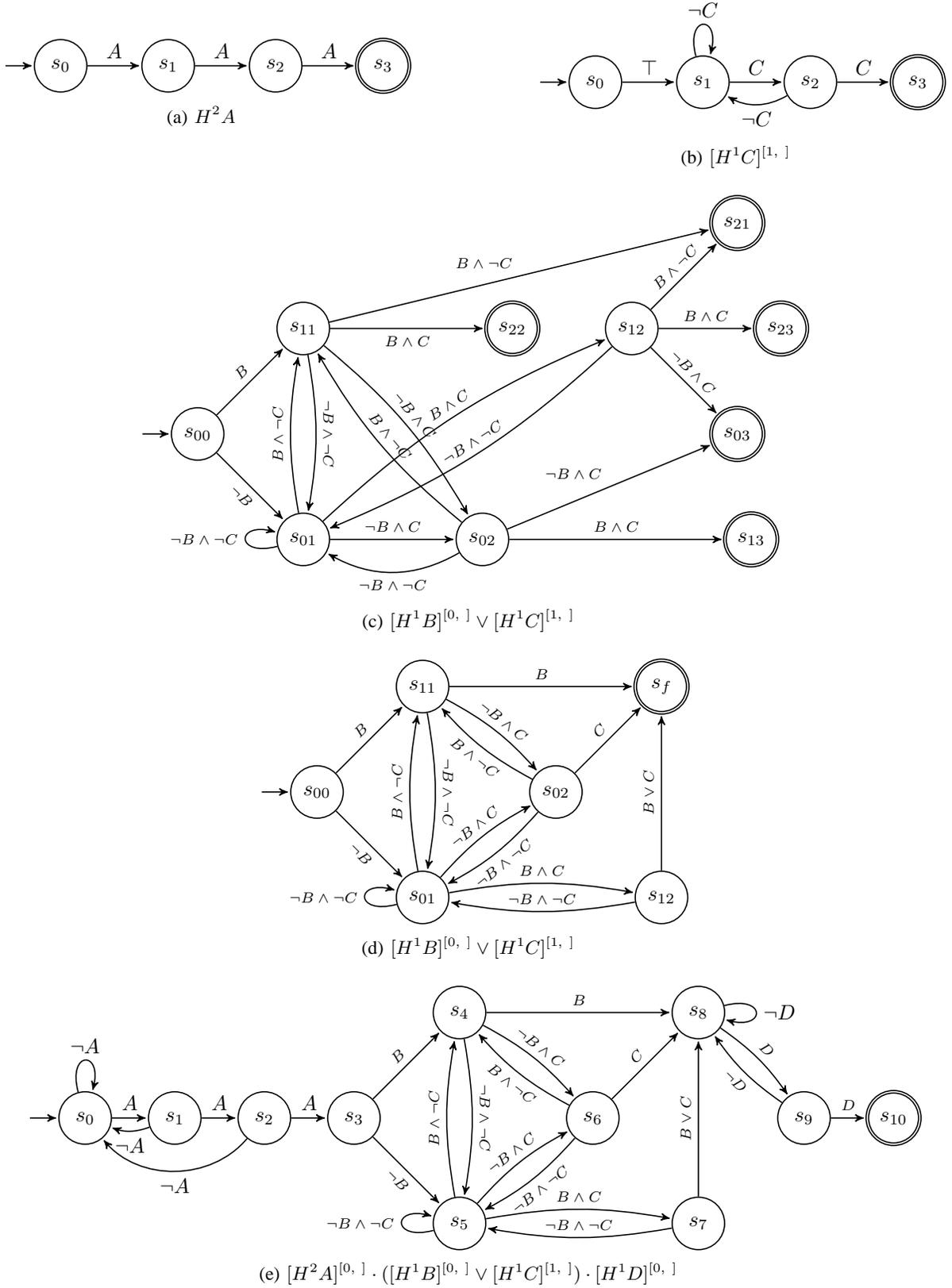

The next two figures, Fig.~\ref{fig:case-study-disjunction-exp}
and Fig.~\ref{fig:case-study-disjunction}, show the translation of
the disjunction operator.
Specifically, Fig.~\ref{fig:case-study-disjunction-exp}, shows the
product DFA corresponding to the disjunction without merging the final states.
Since none of the final states have outgoing transitions,
see Corr.~\ref{th:unambiguous-dfa-final}, and they can
be merged into a single final state, see
Fig.~\ref{fig:case-study-disjunction}.
However, we still need to keep track of which subformula of
the disjunctions holds.
The annotation variable $T.choices$, introduced in
Sec.~\ref{sec:op-conj-disj}, stores this information as
{\small
\begin{equation}
\label{eq:case-study-choices}
\begin{cases}
L=\{(s_{11}, B\andltl \notltl C), (s_{11}, B\andltl C), (s_{12}, B \andltl \notltl C)\},\\
R=\{(s_{02}, \notltl B \andltl C), (s_{02}, B \andltl C), (s_{12}, \notltl B \andltl C)\},\\
B=\{(s_{12}, B\andltl C)\}.
\end{cases}
\end{equation}
}%
Notice that the tuples in
Eq.~\eqref{eq:case-study-choices} correspond to the
ingoing edges of the final states in the DFA from
Fig.~\ref{fig:case-study-disjunction-exp}. Finally, the DFA corresponding to the overall specification
formula $\phi$ is shown in Fig.~\ref{fig:case-study-spec-dfa}.

Let $\phi_A = [H^2 A]^{[0, 6]}$, $ \phi_B = [H^1 B]^{[0, 3]}$,
$\phi_C = [H^1 C]^{[1, 4]}$, and $\phi_D = [H^1 D]^{[0, 6]}$ be
subformulae of $\phi$ associated with the {\em within} operators.
The annotation data for these subformulae is shown in the following table.

\begin{center}
\begin{tabular}{ c c c }
Subformula & $T.I$  & $T.F$ \\ \hline\hline
$\phi$ & $\{s_0\}$ & $\{s_{10}\}$ \\
$\phi_A$ & $\{s_0\}$ & $\{s_3\}$ \\
$\phi_B$ & $\{s_3, s_5, s_6\}$ & $\{s_8\}$ \\
$\phi_C$ & $\{s_3\}$ & $\{s_3\}$ \\
$\phi_D$ & $\{s_8\}$ & $\{s_{10}\}$ \\
\end{tabular}
\end{center}

Consider the following word over the alphabet
$\Sigma=\spow{\AP}$:
\begin{equation}
\label{eq:case-study-word-both}
\BS{\sigma} = \epsilon, \{A\}, \{A\}, \{A\}, \epsilon, \{B, C\}, \{B, C\}, \epsilon, \{D\}, \{D\}\,
\end{equation}
where $\epsilon$ is the empty symbol.
The following table shows the stages of
Alg.~\ref{alg:compute-temporal-relaxation} as the symbols
of the word $\BS{\sigma}$ are processed:
\begin{center}
\resizebox{0.98\linewidth}{!}{
\begin{tabular}{ c c c c c c c }
No. & Symbol         & State  & $\phi_A$                & $\phi_B$                  & $\phi_C$                    & $\phi_D$ \\ \hline\hline
Init&                       & $s_0$ & $(\True, \False, 0)$ & $(\False, \False, -1)$ & $(\False, \False, -1)$  & $(\False, \False, -1)$\\
0 & $\epsilon$       & $s_0$ & $(\True, \False, 1)$ & $(\False, \False, -1)$ & $(\False, \False, -1)$  & $(\False, \False, -1)$\\
1 & $\{A\}$           & $s_1$ & $(\True, \False, 2)$ & $(\False, \False, -1)$ & $(\False, \False, -1)$  & $(\False, \False, -1)$\\
2 & $\{A\}$           & $s_2$ & $(\True, \False, 3)$ & $(\False, \False, -1)$ & $(\False, \False, -1)$  & $(\False, \False, -1)$\\
3 & $\{A\}$           & $s_3$ & $(\False, \True, 3)$ & $(\True, \False,   0)$ & $(\True, \False,   0)$  & $(\False, \False, -1)$\\
4 & $\epsilon$       & $s_5$ & $(\False, \True, 3)$ & $(\True, \False,   1)$ & $(\True, \False,   1)$  & $(\False, \False, -1)$\\
5 & $\{B, C\}$       & $s_7$ & $(\False, \True, 3)$ & $(\True, \False,   2)$ & $(\True, \False,   2)$  & $(\False, \False, -1)$\\
6 & $\{B, C\}$       & $s_8$ & $(\False, \True, 3)$ & $(\False, \True,   2)$ & $(\False, \True,   2)$  & $(\True, \False,  0)$\\
7 & $\epsilon$       & $s_8$ & $(\False, \True, 3)$ & $(\False, \True,   2)$ & $(\False, \True,   2)$  & $(\True, \False,  1)$\\
8 & $\{D\}$           & $s_9$ & $(\False, \True, 3)$ & $(\False, \True,   2)$ & $(\False, \True,   2)$  & $(\True, \False,  2)$\\
9 & $\{D\}$        &$s_{10}$& $(\False, \True, 3)$ & $(\False, \True,   2)$ & $(\False, \True,   2)$  & $(\False, \True,  2)$\\
\end{tabular}
}
\end{center}
where each 3-tuple in last four columns represents the
annotation variables $T.ongoing$, $T.done$ and $T.steps$,
respectively.
The temporal relaxation for $\BS{\sigma}$ can be extracted
from the values in the last row by subtracting the deadlines
of the {\em within} operators from them.
Thus, the vector of tightest $\tau$ values is $(-3, -1, -2, -3)$.
However, because $\phi_B$ and $\phi_C$ are in disjunction,
we have the temporal relaxation $\BS{\tau} = (-3, -\infty, -2, -3)$,
where we choose to ignore the subformula containing $\phi_B$.
Thus, the maximum temporal relaxation is $\trel{\BS{\tau}} = -2$.

\subsection{Control Policy Synthesis}
\label{sec:control-case-study}

Consider a robot moving in an environment represented as the finite
graph shown in Fig.~\ref{fig:control-case-study-env}.
The nodes of the graph represent the points of interest, while the edges
indicate the possibility of moving the robot between the edges' endpoints.
The numbers associated with the edges represent the travel times, and we
assume that all the travel times are integer multiples of a time step
$\Delta t$.
The robot may also stay at any of the points of interest.

The motion of the robot is abstracted as a transition system
$\TS$, which is obtained from the finite graph by splitting each edge into a
number of transitions equal to the corresponding edge's travel time.
The generated transition system thus has 27 states and 67 transitions and
is shown in Fig.~\ref{fig:control-case-study-ts}.

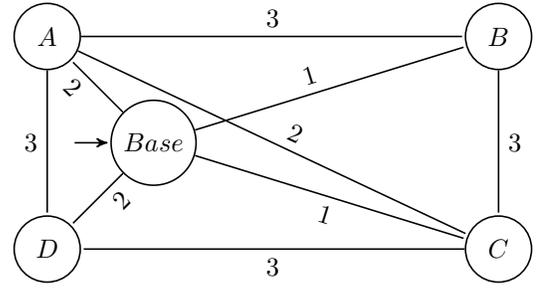
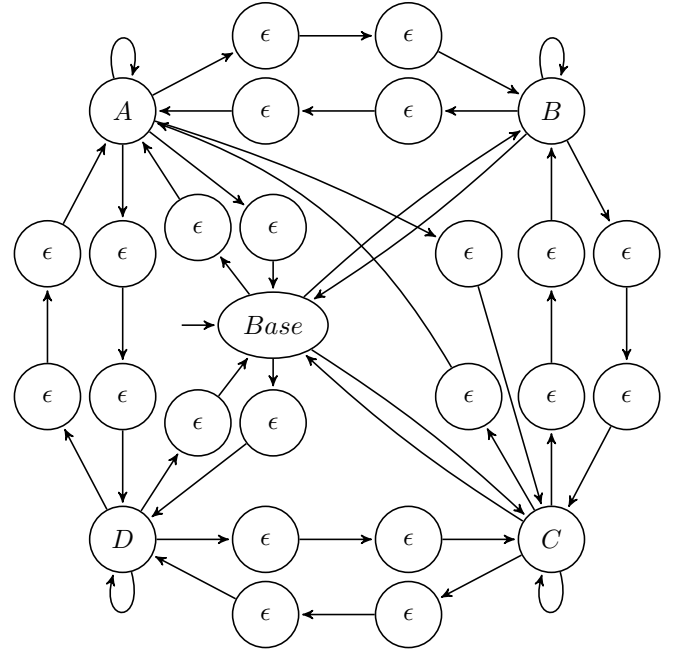
\begin{figure}[!htb]
\centering
\begin{subfigure}[c]{\linewidth}
\centering
\begin{tikzpicture}[->,>=stealth',shorten >=1pt,auto,node distance=2cm, semithick,
initial text={}]
\tikzstyle{every state}=[text=black]

\node[initial,state] (Base)                       {$Base$};
\node[state]         (A) [above left of=Base] {$A$};
\node[state]         (D) [below left of=Base] {$D$};
\node[state]         (B) [right of=A,node distance=6cm] {$B$};
\node[state]         (C) [right of=D,node distance=6cm] {$C$};

\path[-]
(Base) edge [sloped] node {$2$} (A)
(Base) edge [sloped] node {$1$} (B)
(Base) edge [below,sloped]  node {$1$} (C)
(Base) edge [sloped] node {$2$} (D)
(A) edge              node {$3$} (B)
(A) edge [sloped] node {$2$} (C)
(A) edge [left]      node {$3$} (D)
(B) edge              node {$3$} (C)
(C) edge              node {$3$} (D)
;
\end{tikzpicture}
\caption{An environment with five points of interest, $Base$ $A$, $B$, $C$, and $D$.
The edges indicate the existence of paths between their endpoints, while the associated
numbers represent the travel times of the edges. The robot may stay at a region of
interest.}
\label{fig:control-case-study-env}
\end{subfigure}
\\\medskip
\begin{subfigure}[c]{\linewidth}
\centering
\begin{tikzpicture}[->,>=stealth',shorten >=1pt,auto,node distance=1.9cm, semithick,
initial text={}]
\tikzstyle{every state}=[text=black]

\node[initial,state] (Base) [xshift=2cm,yshift=-2.85cm,ellipse]                       {$Base$};
\node[state]         (A) {$A$};
\foreach \name/\text/\xoff/\yoff/\prev in {AB1/\epsilon/0cm/1cm/A, AB2/\epsilon/0cm/0cm/AB1}
    \node[state,xshift=\xoff,yshift=\yoff,right of=\prev] (\name) {$\text$};
\foreach \name/\text/\xoff/\yoff/\prev in {AD1/\epsilon/0cm/0cm/A, AD2/\epsilon/0cm/0cm/AD1}
    \node[state,xshift=\xoff,yshift=\yoff,below of=\prev] (\name) {$\text$};
\node[state]         (B) [yshift=-1cm,right of=AB2] {$B$};
\foreach \name/\text/\xoff/\yoff/\prev in {BA1/\epsilon/0cm/0cm/B, BA2/\epsilon/0cm/0cm/BA1}
    \node[state,xshift=\xoff,yshift=\yoff,left of=\prev] (\name) {$\text$};
\foreach \name/\text/\xoff/\yoff/\prev in {BC1/\epsilon/1cm/0cm/B, BC2/\epsilon/0cm/0cm/BC1}
    \node[state,xshift=\xoff,yshift=\yoff,below of=\prev] (\name) {$\text$};
\node[state]         (C) [below of=BC2,xshift=-1cm] {$C$};
\foreach \name/\text/\xoff/\yoff/\prev in {CB1/\epsilon/0cm/0cm/C, CB2/\epsilon/0cm/0cm/CB1}
    \node[state,xshift=\xoff,yshift=\yoff,above of=\prev] (\name) {$\text$};
\foreach \name/\text/\xoff/\yoff/\prev in {CD1/\epsilon/0cm/-1cm/C, CD2/\epsilon/0cm/0cm/CD1}
    \node[state,xshift=\xoff,yshift=\yoff,left of=\prev] (\name) {$\text$};
\node[state]         (D) [left of=CD2,yshift=1cm] {$D$};
\foreach \name/\text/\xoff/\yoff/\prev in {DC1/\epsilon/0cm/0cm/D, DC2/\epsilon/0cm/0cm/DC1}
    \node[state,xshift=\xoff,yshift=\yoff,right of=\prev] (\name) {$\text$};
\foreach \name/\text/\xoff/\yoff/\prev in {DA1/\epsilon/-1cm/0cm/D, DA2/\epsilon/0cm/0cm/DA1}
    \node[state,xshift=\xoff,yshift=\yoff,above of=\prev] (\name) {$\text$};

\node[state] (AC) [right of=Base,xshift=.7cm,yshift=0.95cm] {$\epsilon$};
\node[state] (CA) [below of=AC] {$\epsilon$};
\node[state] (ABase) [above of=Base,node distance=1.3cm] {$\epsilon$};
\node[state] (BaseA) [left of=ABase,node distance=1cm] {$\epsilon$};
\node[state] (BaseD) [below of=Base,node distance=1.3cm] {$\epsilon$};
\node[state] (DBase) [left of=BaseD,node distance=1cm] {$\epsilon$};


\path[->]
(Base) edge [bend left=5] (B)
          edge [bend left=5] (C)
          edge (BaseA)
          edge (BaseD)
(BaseA) edge (A)
(BaseD) edge (D)
(A) edge [loop above] (A)
      edge (AB1)
      edge (AD1)
      edge [bend left=5] (AC)
      edge (ABase)
(AB1) edge (AB2)
(AB2) edge (B)
(AD1) edge (AD2)
(AD2) edge (D)
(AC) edge (C)
(ABase) edge (Base)
(B) edge [loop above] (B)
     edge (BA1)
     edge (BC1)
     edge [bend left=5] (Base)
(BA1) edge (BA2)
(BA2) edge (A)
(BC1) edge (BC2)
(BC2) edge (C)
(C) edge [loop below] (C)
     edge (CB1)
     edge (CD1)
     edge (CA)
     edge [bend left=5] (Base)
(CB1) edge (CB2)
(CB2) edge (B)
(CD1) edge (CD2)
(CD2) edge (D)
(CA) edge [bend right=20] (A)
(D) edge [loop below] (D)
     edge (DA1)
     edge (DC1)
     edge (DBase)
(DA1) edge (DA2)
(DA2) edge (A)
(DC1) edge (DC2)
(DC2) edge (C)
(DBase) edge (Base)
;
\end{tikzpicture}
\caption{The transition system $\TS$ obtained from the environment graph shown in Fig.~\ref{fig:control-case-study-env}.}
\label{fig:control-case-study-ts}
\end{subfigure}
\caption{The environment where the robot operates and its abstraction $\TS$.}
\label{fig:control-case-study}
\end{figure}

Consider the TWTL specification $\phi$ from Eq.~\eqref{eq:case-spec}.
The product automaton $\PA = \TS \times \FA_\infty$ is constructed,
where $\FA_\infty$ is the annotated DFA corresponding to $\phi(\infty)$
shown in Fig.~\ref{fig:case-study-spec-dfa}.
The product automaton $\PA$ has 204 states and 378 transitions.
The control policy computed by using Alg.~\ref{alg:policy} is
\begin{equation}
\label{eq:control-case-study-policy}
\BF{x} = Base, A, A, A, C, C, Base, D, D,
\end{equation}
which generates the output word
\begin{equation}
\label{eq:control-case-study-output-word}
\BS{\sigma} = \epsilon, \epsilon, \{A\}, \{A\}, \{A\}, \epsilon, \{C\}, \{C\}, \epsilon, \epsilon, \{D\}, \{D\}.
\end{equation}
The minimum temporal relaxation for $\BS{\sigma}$ is
$\trel{\BS{\tau}} = -2$, where $\BS{\tau}= (-2, -\infty, -2, -3)$ is the
minimal temporal relaxation vector associated with $\BS{\sigma}$.


\subsection{Verification}
\label{sec:verification-case-study}

In the verification problem, we are concerned
with checking for the existence of relaxed specifications for
every possible run of a transition system.

To illustrate this problem, consider the transition system in
Fig.~\ref{fig:verification-case-study-ts} and the following two
TWTL specifications:
\begin{align}
\phi_1 &= [H^1 A]^{[1, 2]} \label{eq:verification-case-study-spec-true}\\
\phi_2 &= [H^1 \notltl B]^{[1, 2]} \label{eq:verification-case-study-spec-false}
\end{align}

\begin{figure}[!htb]
\centering
\begin{tikzpicture}[->,>=stealth',shorten >=1pt,auto,node distance=2cm, semithick,
initial text={}]
\tikzstyle{every state}=[text=black]
\node[initial,state] (s0)  at (162:1cm) {$A$};
\foreach \name/\angle/\text in {s1/90/B, s2/18/B, s3/-54/\epsilon, s4/234/A}
    \node[state] (\name) at (\angle:1cm) {$\text$};
\foreach \from/\to in {s0/s1,s1/s2,s2/s3,s3/s4,s4/s0}
    \draw (\from) -> (\to);
\path[->] (s0) edge [loop above] (s0);
\end{tikzpicture}
\caption{A simple transition system $\TS^{simple}$.}
\label{fig:verification-case-study-ts}
\end{figure}

To check the transition system $\TS^{simple}$ against the two specifcations,
we can use Alg.~\ref{alg:verification}.
It is straightforward that the procedure will return true for $\phi_1$, because
every run of $\TS^{simple}$ satisfies $\phi_1(3)  = [H^1]^{[1, 2+3]}$.
Note that the runs of the transition system may not need to satisfy the original
specification as the satisfaction of a relaxed version is sufficient.
Similarly, Alg.~\ref{alg:verification} returns false for $\phi_2$, because
there exists a run of $\TS^{simple}$ that does not satisfy the specification,
e.g., $\BF{x} = A, B, B$.

An important conclusion highlighted by the two examples is that
the verification problem proposed in this paper is concerned with
checking a system against the logical structure of a specification and
not against any particular time bounds.
This might be useful in situation where the deadlines of
the specification are not known {\em a priori}, but the logical structure
of the specification is.

\subsection{Learning deadlines from data}
\label{sec:learning-case-study}

In the previous two cases, we use the TWTL specifications in conjunction
with problems involving infinite sets of words encoded as transition systems.
However, it is often the case that only finite sets of (output) trajectories are
available. In this section, we give a simple example of the learning problem
presented in Sec.~\ref{sec:pb-formulation}.

Consider the specification
${\phi_{learn} = [H^1]^{[0, d_1]} \cdot [H^2 B]^{[0, d_2]}}$ with unknown
deadlines and the following set of labeled trajectories, where
$C_p$ and $C_n$ are the positive and negative example labels, respectively:

\begin{equation*}
\arraycolsep=0.1pt
\begin{array}{ *{9}{l} c c }
\multicolumn{9}{l}{\text{Word}} & \text{Label}\,  & \text{Deadlines} \\ \hline\hline
\BS{\sigma}_1=&\{A\}, &\{A\}, &\{A\}, &\{B\}, &\{B\}, &\{B\}, &\{B\}, &\epsilon & C_p & (2, 3) \\
\BS{\sigma}_2=&\epsilon, &\{A\}, &\{A\}, &\epsilon, &\{B\}, &\{B\}, &\{B\}, &\epsilon & C_p & (2, 3) \\\hline
\BS{\sigma}_3=&\{B\}, &\epsilon, &\{A\}, &\{A\}, &\{B\}, &\{B\}, &\{B\}, &\{B\} & C_n & (3, 2) \\
\BS{\sigma}_4=&\epsilon, &\{A\}, &\{A\}, &\epsilon, &\epsilon, &\{B\}, &\{B\}, &\{B\} & C_n & (2, 4)
\end{array}
\end{equation*}

The last column in the above table shows the tight deadlines obtained
in lines 2 and 3 of Alg.~\ref{alg:parameter-synthesis}. Next, the learning
algorithm computes the heuristic sets $D^k_{FP}$ and $D^k_{FN}$,
$k\in \{d_1, d_2\}$, of false positive and false negative trajectories,
respectively:

\begin{center}
\setlength{\tabcolsep}{3pt}
\begin{tabular}{ c c c c c}
Deadline & Value  & $D^k_{FP}$ & $D^k_{FN}$ & $\card{D^k_{FP}} + \card{D^k_{FN}}$ \\[0.3em] \hline\hline
$d_1$ & 2 & $\{\BS{\sigma}_4\}$ & $\emptyset$ & 1 \\
$d_1$ & 3 & $\{\BS{\sigma}_3,\BS{\sigma}_4\}$ & $\emptyset$ & 2 \\ \hline
$d_2$ & 2 & $\{\BS{\sigma}_3\}$ & $\{\BS{\sigma}_1, \BS{\sigma}_2\}$ & 3 \\
$d_2$ & 3 & $\{\BS{\sigma}_3\}$ & $\emptyset$ & 1\\
$d_2$ & 4 & $\{\BS{\sigma}_3, \BS{\sigma}_4\}$ & $\emptyset$ & 2 \\
\end{tabular}
\end{center}
Finally, Alg.~\ref{alg:parameter-synthesis} chooses the deadline pair
$\BF{d} = (d_1, d_2) = (2, 3)$ that has the lowest heuristic misclassification
rate, $\card{D^k_{FP}} + \card{D^k_{FN}}$ shown in the last column of the
above table, for $d_1$ and $d_2$, respectively.
An important observation is that the inferred formula
$\phi^{\BF{d}}_{learn} = [H^1 A]^{[0, 2]}\cdot [H^2 B]^{[0, 3]}$
has zero as actual misclassification rate.
The discrepancy between the values in the table and the actual value
of the final misclassification rate are due to the heuristic of synthesizing
each deadline separately. Thus, the heuristic procedure in
Alg.~\ref{alg:parameter-synthesis} ignores the temporal and
logical structure of the template TWTL formula which may lead to
suboptimal performance, i.e., misclassification rate.

\section{Conclusion} 
\label{sec:conclusion}
In this paper, we introduced the time window temporal logic (TWTL), which
is a linear-time logic encoding sets of discrete-time bounded specifications.
Different from other temporal logics, TWTL has an explicit concatenation operator, which enables the compact representation of serial tasks mostly prevalent in robotics and control applications. Such a compact representation significantly reduces the complexity of constructing the automaton for the accepting language. In this paper, we also discussed the temporal relaxation of TWTL formulae and provided provably-correct algorithms to construct a compact automaton representing all temporally relaxed formulae of a given TWTL formula. Stemming from the definition of temporal relaxation, we formulated some problems related to verification, synthesis, and learning. We demonstrated the potential of TWTL and its relaxation on these problems. Finally, we also developed a Python package to solve the verification, synthesis, and learning problems.

\section{Appendices}

\subsection{Proof of Prop.~\ref{th:unambiguous-concat}}
\label{app:unambiguous-concat}
\begin{proof}
Let $\big( L_1, \CA{L}_1, \CA{L}_1 \cdot ( \prefix{\CA{L}_2} \setminus \{\epsilon\}) \big)$
be a partition of $\prefix{\CA{L}_1 \cdot \CA{L}_2}$ and
$L$ be a proper subset of $\CA{L}_1$.
Assume that there exists $w \in L$ and
$w' \in \CA{L}_1 \setminus L$ such that
$w=w'_{0, i}$, for some $i \in \{0, \ldots, \card{w'}-1\}$.
It follows that $w \in L_1$, because $w\neq w'$.
However, this contradicts the fact that $L_1$ and
$\CA{L}_1$ are disjoint.

Conversely, let $\CA{L}_1$ be unambiguous and
consider a word $w \in \prefix{\CA{L}_1 \cdot \CA{L}_2}$.
Assume that $w \in L_1 \cap \CA{L}_1$.
It follows that $\{w\}$ is a prefix language
for $\CA{L}_1\setminus \{w\}$, which contradicts with the
hypothesis that $\CA{L}_1$ is unambiguous.
Similarly, if we assume that there exists
$w \in \prefix{\CA{L}_1} \cap \big(\CA{L}_1\cdot (\prefix{\CA{L}_2} \setminus \{\epsilon\}) \big)$,
then there exists $w', w'' \in \CA{L}_1$ such that
$w'$ is a prefix of $w$, $w$ is a prefix of $w''$,
and $\card{w'} < \card{w} \leq \card{w''}$.
Thus, we arrive again at a contradiction with
the unambiguity of $\CA{L}_1$. Thus, the three
sets form a partition of $\prefix{\CA{L}_1 \cdot \CA{L}_2}$.
\end{proof}

\subsection{Proof of Prop.~\ref{th:tr-partial-order}}
\label{app:tr-partial-order}
\begin{proof}
The proof follows by structural induction over $AST(\phi)$.
The base case is trivial, since the leafs correspond to the {\em hold} operators.
For the induction step, the result follows trivially if the
intermediate node is associated with a Boolean or concatenation
operator. The case of a {\em within} operator follows from
Eq.~\eqref{eq:twtl-prop-tw-expand} and~\eqref{eq:twtl-prop-within-implication}
in Prop.~\ref{th:twtl-props}, i.e.
$[\phi(\BS{\tau})]^{[a, b + \tau_1]} \Implies [\phi(\BS{\tau}')]^{[a, b + \tau_1]} \Implies [\phi(\BS{\tau}')]^{[a, b + \tau'_1]}$,
where $a < b \in \BB{Z}_{\geq 0}$ and $\BS{\tau} \leq \BS{\tau}' \in \BB{Z}^m$.
We assumed without loss of generality that the first component of
the temporal relaxation vectors is assigned to the root node.
\end{proof}



\bibliographystyle{plainnat}
\bibliography{references}

\begin{thebibliography}{38}
\providecommand{\natexlab}[1]{#1}
\providecommand{\url}[1]{\texttt{#1}}
\expandafter\ifx\csname urlstyle\endcsname\relax
  \providecommand{\doi}[1]{doi: #1}\else
  \providecommand{\doi}{doi: \begingroup \urlstyle{rm}\Url}\fi

\bibitem[Aksaray et~al.(2015)Aksaray, Leahy, and Belta]{aksaray2015}
Derya Aksaray, Kevin Leahy, and Calin Belta.
\newblock Distributed multi-agent persistent surveillance under temporal logic
  constraints.
\newblock \emph{IFAC-PapersOnLine}, 48\penalty0 (22):\penalty0 174--179, 2015.

\bibitem[Aksaray et~al.(2016)Aksaray, Vasile, and Belta]{aksaray2016}
Derya Aksaray, Cristian-Ioan Vasile, and Calin Belta.
\newblock Dynamic routing of energy-aware vehicles with temporal logic
  constraints.
\newblock In \emph{submitted to IEEE Int. Conference on Robotics and Automation
  (ICRA)}, 2016.

\bibitem[Alur and Dill(1994)]{alur1994}
Rajeev Alur and David~L Dill.
\newblock A theory of timed automata.
\newblock \emph{Theoretical computer science}, 126\penalty0 (2):\penalty0
  183--235, 1994.

\bibitem[Baier and Katoen(2008)]{Baier08}
Christel Baier and Joost-Pieter Katoen.
\newblock \emph{{Principles of model checking}}.
\newblock MIT Press, 2008.
\newblock ISBN 978-0-262-02649-9.

\bibitem[Belta et~al.(2005)Belta, Isler, and Pappas]{Belta-TRO05}
C.~Belta, V.~Isler, and G.~J. Pappas.
\newblock {Discrete abstractions for robot planning and control in polygonal
  environments}.
\newblock \emph{IEEE Trans. on Robotics}, 21\penalty0 (5):\penalty0 864--874,
  2005.

\bibitem[C\^ampeanu et~al.(2001)C\^ampeanu, Culik, Salomaa, and Yu]{Campeanu01}
C.~C\^ampeanu, II~Culik, K., Kai Salomaa, and Sheng Yu.
\newblock {State Complexity of Basic Operations on Finite Languages}.
\newblock In Oliver Boldt and Helmut Jürgensen, editors, \emph{Automata
  Implementation}, volume 2214 of \emph{Lecture Notes in Computer Science},
  pages 60--70. Springer Berlin Heidelberg, 2001.
\newblock ISBN 978-3-540-42812-1.
\newblock \doi{10.1007/3-540-45526-4_6}.
\newblock URL \url{http://dx.doi.org/10.1007/3-540-45526-4_6}.

\bibitem[Cizelj and Belta(2013)]{igor-icra2013}
Igor Cizelj and Calin Belta.
\newblock {Control of Noisy Differential-Drive Vehicles from Time-Bounded
  Temporal Logic Specifications}.
\newblock In \emph{IEEE Int. Conference on Robotics and Automation (ICRA)},
  2013.

\bibitem[Daciuk(2003)]{Daciuk2003}
Jan Daciuk.
\newblock Comparison of construction algorithms for minimal, acyclic,
  deterministic, finite-state automata from sets of strings.
\newblock In \emph{Implementation and Application of Automata}, pages 255--261.
  Springer, 2003.

\bibitem[Duret-Lutz(2013)]{duret.13.atva}
Alexandre Duret-Lutz.
\newblock Manipulating {LTL} formulas using {Spot} 1.0.
\newblock In \emph{Proceedings of the 11th International Symposium on Automated
  Technology for Verification and Analysis (ATVA'13)}, volume 8172 of
  \emph{Lecture Notes in Computer Science}, pages 442--445, Hanoi, Vietnam, Oct
  2013. Springer.
\newblock \doi{10.1007/978-3-319-02444-8_31}.

\bibitem[Fainekos et~al.(2009)Fainekos, Girard, Kress-Gazit, and
  Pappas]{fainekos2009}
Georgios~E Fainekos, Antoine Girard, Hadas Kress-Gazit, and George~J Pappas.
\newblock Temporal logic motion planning for dynamic robots.
\newblock \emph{Automatica}, 45\penalty0 (2):\penalty0 343--352, 2009.

\bibitem[Gao et~al.(2011)Gao, Salomaa, and Yu]{Gao11}
Yuan Gao, Kai Salomaa, and Sheng Yu.
\newblock {Transition Complexity of Incomplete DFAs}.
\newblock \emph{Fundam. Inf.}, 110\penalty0 (1-4):\penalty0 143--158, Jan 2011.
\newblock ISSN 0169-2968.
\newblock URL \url{http://dl.acm.org/citation.cfm?id=2362097.2362107}.

\bibitem[Guo and Dimarogonas(2015)]{Guo15}
Meng Guo and Dimos~V. Dimarogonas.
\newblock Multi-agent plan reconfiguration under local ltl specifications.
\newblock \emph{The International Journal of Robotics Research}, 34\penalty0
  (2):\penalty0 218--235, 2015.
\newblock \doi{10.1177/0278364914546174}.
\newblock URL \url{http://ijr.sagepub.com/content/34/2/218.abstract}.

\bibitem[Hagberg et~al.(2008)Hagberg, Schult, and Swart]{nx}
Aric~A. Hagberg, Daniel~A. Schult, and Pieter~J. Swart.
\newblock Exploring network structure, dynamics, and function using {NetworkX}.
\newblock 2008.

\bibitem[Han and Salomaa(2007)]{Han07}
Yo-Sub Han and Kai Salomaa.
\newblock {State Complexity of Union and Intersection of Finite Languages}.
\newblock In Tero Harju, Juhani Karhumäki, and Arto Lepistö, editors,
  \emph{Developments in Language Theory}, volume 4588 of \emph{Lecture Notes in
  Computer Science}, pages 217--228. Springer Berlin Heidelberg, 2007.
\newblock ISBN 978-3-540-73207-5.
\newblock \doi{10.1007/978-3-540-73208-2_22}.
\newblock URL \url{http://dx.doi.org/10.1007/978-3-540-73208-2_22}.

\bibitem[Hopcroft et~al.(2006)Hopcroft, Motwani, and Ullman]{Hopcroft2006}
John~E. Hopcroft, Rajeev Motwani, and Jeffrey~D. Ullman.
\newblock \emph{{Introduction to Automata Theory, Languages, and Computation
  (3rd Edition)}}.
\newblock Addison-Wesley Longman Publishing Co., Inc., Boston, MA, USA, 2006.
\newblock ISBN 0321455363.

\bibitem[Jha et~al.(2009)Jha, Clarke, Langmead, Legay, Platzer, and
  Zuliani]{Clarke09}
Sumit~K. Jha, Edmund~M. Clarke, Christopher~J. Langmead, Axel Legay, Andr{\'e}
  Platzer, and Paolo Zuliani.
\newblock A bayesian approach to model checking biological systems.
\newblock In \emph{Proc. of the 7th Int. Conference on Computational Methods in
  Systems Biology}, CMSB '09, pages 218--234, Berlin, Heidelberg, 2009.
  Springer-Verlag.
\newblock ISBN 978-3-642-03844-0.

\bibitem[Karaman and Frazzoli(2008)]{Karaman.Frazzoli:CDC08}
S.~Karaman and E.~Frazzoli.
\newblock Vehicle routing problem with metric temporal logic specifications.
\newblock In \emph{{IEEE} Conference on Decision and Control}, pages 3953 --
  3958, December 2008.

\bibitem[Kloetzer and Belta(2008)]{KB-TAC08-LTLCon}
M.~Kloetzer and C.~Belta.
\newblock {A fully automated framework for control of linear systems from
  temporal logic specifications}.
\newblock \emph{IEEE Trans. on Automatic Control}, 53\penalty0 (1):\penalty0
  287--297, 2008.

\bibitem[Koymans(1990)]{koymans1990}
Ron Koymans.
\newblock Specifying real-time properties with metric temporal logic.
\newblock \emph{Real-time systems}, 2\penalty0 (4):\penalty0 255--299, 1990.

\bibitem[Kress-Gazit et~al.(2009)Kress-Gazit, Fainekos, and Pappas]{kress2009}
Hadas Kress-Gazit, Georgios~E Fainekos, and George~J Pappas.
\newblock Temporal-logic-based reactive mission and motion planning.
\newblock \emph{IEEE Trans. on Robotics}, 25\penalty0 (6):\penalty0 1370--1381,
  2009.

\bibitem[Kupferman and Vardi(2001)]{kupferman2001}
Orna Kupferman and Moshe~Y Vardi.
\newblock Model checking of safety properties.
\newblock \emph{Formal Methods in System Design}, 19\penalty0 (3):\penalty0
  291--314, 2001.

\bibitem[Latvala(2003)]{Latvala03}
T.~Latvala.
\newblock {Effcient model checking of safety properties}.
\newblock In \emph{10th International SPIN Workshop}, Model Checking Software,
  pages 74--88. Springer, 2003.

\bibitem[Leahy et~al.(2014)Leahy, Zhou, Vasile, Oikonomopoulos, Schwager, and
  Belta]{LeZhVaOiScBe-ISER-2014}
K.~Leahy, D.~Zhou, C.I. Vasile, K.~Oikonomopoulos, M.~Schwager, and C.~Belta.
\newblock {Provably Correct Persistent Surveillance for Unmanned Aerial
  Vehicles Subject to Charging Constraints}.
\newblock In \emph{Int. Symposium on Experimental Robotics (ISER)}, 2014.

\bibitem[Livingston et~al.(2013)Livingston, Prabhakar, Jose, and
  Murray]{LiPrJoMurray-ICRA13}
Scott~C. Livingston, Pavithra Prabhakar, Alex~B. Jose, and Richard~M. Murray.
\newblock {Patching task-level robot controllers based on a local
  $\mu$-calculus formula}.
\newblock In \emph{International Conference on Robotics and Automation (ICRA)},
  2013.

\bibitem[Maia et~al.(2013)Maia, Moreira, and Reis]{Maia13}
Eva Maia, Nelma Moreira, and Rogério Reis.
\newblock {Incomplete Transition Complexity of Some Basic Operations}.
\newblock In Peter van Emde~Boas, FransC.A. Groen, GiuseppeF. Italiano, Jerzy
  Nawrocki, and Harald Sack, editors, \emph{SOFSEM 2013: Theory and Practice of
  Computer Science}, volume 7741 of \emph{Lecture Notes in Computer Science},
  pages 319--331. Springer Berlin Heidelberg, 2013.
\newblock ISBN 978-3-642-35842-5.
\newblock \doi{10.1007/978-3-642-35843-2_28}.
\newblock URL \url{http://dx.doi.org/10.1007/978-3-642-35843-2_28}.

\bibitem[Maler and Nickovic(2004)]{maler2004}
Oded Maler and Dejan Nickovic.
\newblock Monitoring temporal properties of continuous signals.
\newblock In \emph{Formal Techniques, Modelling and Analysis of Timed and
  Fault-Tolerant Systems}, pages 152--166. Springer, 2004.

\bibitem[Manna and Pnueli(1981)]{manna1981}
Zohar Manna and Amir Pnueli.
\newblock Verification of concurrent programs. part i. the temporal framework.
\newblock Technical report, DTIC Document, 1981.

\bibitem[Parr(2007)]{Parr07}
Terence Parr.
\newblock \emph{{The Definitive ANTLR Reference: Building Domain-Specific
  Languages }}.
\newblock Pragmatic Bookshelf, 2007.
\newblock ISBN 978-0978739256.

\bibitem[Pavone et~al.(2009)Pavone, Bisnik, Frazzoli, and Isler]{pavone2009}
Marco Pavone, Nabhendra Bisnik, Emilio Frazzoli, and Volkan Isler.
\newblock A stochastic and dynamic vehicle routing problem with time windows
  and customer impatience.
\newblock \emph{Mobile Networks and Applications}, 14\penalty0 (3):\penalty0
  350--364, 2009.

\bibitem[Reyes~Castro et~al.(2013)Reyes~Castro, Chaudhari, Tumova, Karaman,
  Frazzoli, and Rus]{Reyes13}
L.I. Reyes~Castro, P.~Chaudhari, J.~Tumova, S.~Karaman, E.~Frazzoli, and
  D.~Rus.
\newblock Incremental sampling-based algorithm for minimum-violation motion
  planning.
\newblock In \emph{Decision and Control (CDC), 2013 IEEE 52nd Annual Conference
  on}, pages 3217--3224, Dec 2013.
\newblock \doi{10.1109/CDC.2013.6760374}.

\bibitem[Solomon(1987)]{solomon1987}
Marius~M Solomon.
\newblock Algorithms for the vehicle routing and scheduling problems with time
  window constraints.
\newblock \emph{Operations research}, 35\penalty0 (2):\penalty0 254--265, 1987.

\bibitem[Tkachev and Abate(2013)]{Tkachev13}
I.~Tkachev and A.~Abate.
\newblock Formula-free {F}inite {A}bstractions for {L}inear {T}emporal
  {V}erification of {S}tochastic {H}ybrid {S}ystems.
\newblock In \emph{Proc. of the 16th Int. Conference on Hybrid Systems:
  Computation and Control}, Philadelphia, PA, April 2013.

\bibitem[Tumova et~al.(2014)Tumova, Marzinotto, Dimarogonas, and
  Kragic]{Tumova-IROS14}
J.~Tumova, A.~Marzinotto, D.V. Dimarogonas, and D.~Kragic.
\newblock Maximally satisfying ltl action planning.
\newblock In \emph{Intelligent Robots and Systems (IROS 2014), 2014 IEEE/RSJ
  International Conference on}, pages 1503--1510, Sept 2014.
\newblock \doi{10.1109/IROS.2014.6942755}.

\bibitem[Tumova et~al.(2013)Tumova, Hall, Karaman, Frazzoli, and
  Rus]{Tumova-HSCC13}
Jana Tumova, Gavin~C. Hall, Sertac Karaman, Emilio Frazzoli, and Daniela Rus.
\newblock Least-violating control strategy synthesis with safety rules.
\newblock In \emph{Proceedings of the 16th International Conference on Hybrid
  Systems: Computation and Control}, HSCC '13, pages 1--10, New York, NY, USA,
  2013. ACM.
\newblock ISBN 978-1-4503-1567-8.
\newblock \doi{10.1145/2461328.2461330}.
\newblock URL \url{http://doi.acm.org/10.1145/2461328.2461330}.

\bibitem[Ulusoy et~al.(2013)Ulusoy, Smith, Ding, Belta, and
  Rus]{ulusoy-ijrr2013}
A.~Ulusoy, S.~L. Smith, X.~C. Ding, C.~Belta, and D.~Rus.
\newblock {Optimality and Robustness in Multi-Robot Path Planning with Temporal
  Logic Constraints}.
\newblock \emph{Int. Journal of Robotics Research}, 32\penalty0 (8):\penalty0
  889--911, 2013.

\bibitem[Vasile and Belta(2014)]{vasile2014}
C.I. Vasile and C.~Belta.
\newblock {An Automata-Theoretic Approach to the Vehicle Routing Problem}.
\newblock In \emph{Proc. of the Robotics: Science and Systems Conference
  (RSS)}, Berkeley, California, USA, July 2014.

\bibitem[Wongpiromsarn et~al.(2009)Wongpiromsarn, Topcu, and
  Murray]{Murray2009}
T.~Wongpiromsarn, U.~Topcu, and R.~M. Murray.
\newblock Receding {H}orizon {T}emporal {L}ogic {P}lanning for {D}ynamical
  {S}ystems.
\newblock In \emph{IEEE Conference on Decision and Control (CDC)}, pages 5997
  --6004, 2009.

\bibitem[Wongpiromsarn et~al.(2010)Wongpiromsarn, Topcu, and Murray]{topcu2010}
Tichakorn Wongpiromsarn, Ufuk Topcu, and Richard~M Murray.
\newblock Receding horizon control for temporal logic specifications.
\newblock In \emph{Proc. of the 13th Int. Conference on Hybrid Systems:
  Computation and Control}, pages 101--110. ACM, 2010.

\end{thebibliography}

\end{document}